\pdfoutput=1
\documentclass[a4paper,USenglish,thm-restate]{lipics-v2021}
\hideLIPIcs

\usepackage{amssymb}
\usepackage{amsmath}
\usepackage{amsthm}
\usepackage{xcolor}
\usepackage[most]{tcolorbox}
\usepackage[T1]{fontenc}
\usepackage[normalem]{ulem}
\usepackage{array}
\usepackage{cite}

\newcommand{\cm}{\operatorname{cm}}
\newcommand{\cvm}{\operatorname{cvm}}
\newcommand{\solc}{\operatorname{sm}}
\newcommand{\src}{\mathrm{s}}
\newcommand{\tgt}{\mathrm{t}}
\newcommand{\rowsub}{\mathrm{Row}}
\newcommand{\colsub}{\mathrm{Col}}
\newcommand{\cand}{\mathrm{Cand}}

\newtheoremstyle{problemstyle}
  {0.5em}{0.5em}{\normalfont}{}{\bfseries}{.}{0.5em}{}
\theoremstyle{problemstyle}

\title{On the complexity of the single-move labeled token routing problem}

\titlerunning{Single-Move Labeled Token Routing}

\author{Nicolas Bousquet}{CNRS, Univ.\ Lyon, Universit\'e Lyon 1, INSA Lyon, LIRIS UMR CNRS 5205, F-69621 Villeurbanne, France}{nicolas.bousquet@cnrs.fr}{https://orcid.org/0000-0003-0170-0503}{}
\author{Remy {El Sabeh}}{David R. Cheriton School of Computer Science, University of Waterloo, Waterloo, ON, Canada}{relsabeh@uwaterloo.ca}{https://orcid.org/0000-0002-3786-4035}{}
\author{Amer E. Mouawad}{American University of Beirut, Beirut, Lebanon \and David R. Cheriton School of Computer Science, University of Waterloo, Waterloo, ON, Canada}{aa368@aub.edu.lb}{}{}
\author{Naomi Nishimura}{David R. Cheriton School of Computer Science, University of Waterloo, Waterloo, ON, Canada}{nishi@uwaterloo.ca}{}{}

\authorrunning{N. Bousquet, R. El Sabeh, A. E. Mouawad, and N. Nishimura}

\Copyright{Nicolas Bousquet, Remy El Sabeh, Amer E. Mouawad, and Naomi Nishimura}

\ccsdesc[500]{Theory of computation~Graph algorithms analysis}
\ccsdesc[300]{Theory of computation~Parameterized complexity and exact algorithms}
\ccsdesc[300]{Theory of computation~Computational complexity and cryptography}

\keywords{token routing, atom reconfiguration, motion planning, robotics, parameterized complexity, combinatorial reconfiguration}

\begin{document}
\nolinenumbers

\maketitle

\begin{abstract}
In neutral-atom quantum simulators and computers, loaded atoms must be moved
to desired target positions along paths of positions that are empty at the
time of each move, and a target position may be reserved for one species of atom. Motivated by this task, we introduce \textsc{Single-Move
Labeled Token Routing}: every source and every target vertex of a graph is
assigned a set of labels, and tokens occupy the sources. A solution consists
of a matching that assigns each source to a \emph{compatible} target (one whose label
set intersects its own), a route for each matched pair, and a \emph{movement
order} in which, when a token is moved, its route contains no other token.

When every source is
compatible with every target, the problem is solvable in polynomial
time~\cite{Calinescu2008Reconfigurations}; when each source is compatible
with exactly one target, it is \textsf{NP}-complete on grid
graphs~\cite{GeftHalperinNakar2024}. We prove that, with exactly one
compatible target per source, the problem remains \textsf{NP}-complete on
grids and on planar graphs of maximum degree four even when the routes of a
solution can be chosen pairwise edge-disjoint.

Our main results concern trees, where each compatible pair is joined by a
unique path, so a solution consists of a matching and a movement order; the
problem is known to be \textsf{NP}-complete even on trees of maximum degree
three~\cite{Geft2026Trees}. Inspired by the edge-disjointness of the routes
in our planar result, we study the problem on trees through two
parameters: the \emph{solution edge multiplicity} is
the largest number of routes of a solution that use a common edge, and the
\emph{candidate edge multiplicity} is the largest number of compatible
source-target pairs whose paths use a common edge. On trees of maximum degree
three, the problem is \textsf{W}[1]-hard when parameterized by a bound on the
solution edge multiplicity, even when a movement order is given as part of the input;
on trees of unbounded degree, it is \textsf{NP}-complete even when the
candidate edge multiplicity is at most eight. Our main algorithmic result
is a dynamic programming algorithm showing that on trees the problem is
fixed-parameter tractable when parameterized by the maximum degree together
with the candidate edge multiplicity, and also when parameterized by the
candidate vertex multiplicity alone, the largest number of compatible
source-target pairs whose paths contain a common vertex. Our hardness results
demonstrate that neither of the first two parameters can be omitted:
unless $\textsf{P}=\textsf{NP}$, the problem is not fixed-parameter tractable when parameterized by the
candidate edge multiplicity alone or by the maximum degree alone.
\end{abstract}

\section{Introduction}
\label{sec:intro}

Neutral-atom quantum simulators and computers use optical traps to hold
individual atoms. The
goal is to place atoms in the \emph{target traps}, a chosen subset of the
optical traps; the interactions that define the quantum system occur only
between nearby atoms, so the resulting arrangement of atoms, the \emph{target
configuration}, determines the structure of the system being simulated
or used for computation~\cite{Endres2016AtomAssembly,Barredo2016AtomAssembler,%
OhlDeMello2019DefectFree,Schymik2020EnhancedAssembly}. Loading atoms directly
into the target traps may not succeed, as the loading process does not guarantee
that every target trap receives an atom. To increase the chances of obtaining
the target configuration, one may instead load atoms into a larger set of traps
and then rearrange the atoms that were successfully loaded. Some atoms may
already occupy target traps and need not move, while other atoms are moved from
occupied traps to empty target traps, each along a path of traps that are
empty when the atom is moved. Computing the rearrangement, a task known as
\textsc{Atom Reconfiguration}, requires choosing which atoms should move to
which target traps, choosing routes for the moving atoms, and choosing a
movement order so that no atom blocks a route when that route is
used~\cite{Schymik2020EnhancedAssembly,ElSabeh2023ARO,Cimring2023RedRec,%
Afiouni2025BirdBatching}. Each additional move takes time and creates another
opportunity for errors or atom loss, which motivates keeping the number of
moves of each atom small.

\textsc{Atom Reconfiguration} admits a graph-theoretic formulation. Vertices
represent optical traps, source vertices are initially occupied by tokens
representing atoms, and target vertices correspond to positions in the target
configuration. A solution assigns sources to targets, chooses a route for each
assigned pair, and orders the corresponding moves. Every move takes a token from
its current vertex to an unoccupied vertex along a path no other vertex of which
is occupied, the target included~\cite{Calinescu2008Reconfigurations}. In
general, a token may be moved several times. We study the restriction in which
each token moves at most once.

\smallskip
\noindent\textbf{Atoms that are not interchangeable.}
In \textsc{Atom Reconfiguration}, atoms are interchangeable: any
loaded atom may be moved into any empty target trap. This is no longer the case
as soon as atoms of two \emph{species}, that is, of two chemical elements or of
two isotopes of one element, are loaded together, as in dual-species
settings, also called
mixed-species settings~\cite{Anand2024DualSpeciesRydberg,Fang2025Interleaved,Sheng2022MixedSpecies,Wei2026MixedSpecies}.
Mixtures of two species are loaded deliberately, for instance to form a
rubidium-caesium molecule from one atom of each species at every designated
site~\cite{Ruttley2023MergingTweezers,Ruttley2024MoleculeArrays}. In these
settings, a target trap may be designated for one species, so an atom of the
wrong species is not an acceptable occupant~\cite{Sheng2022MixedSpecies,Wei2026MixedSpecies}.
Thus, exchanging two atoms may turn a valid target configuration into an
invalid one.

Restricting, for each target trap, which atoms may occupy it gives rise
to the problem studied in this paper, \textsc{Single-Move Labeled Token Routing}, in which each source and
each target is assigned a set of labels, and a source may be matched to a target
if and only if their label sets intersect; in this case the source and the
target are said to be \emph{compatible}. Assigning the same label to every
source and every target yields \textsc{Atom Reconfiguration} restricted to at
most one move per atom, which
we call the \emph{all-matching case}; whether a rearrangement exists can
then be decided in polynomial
time~\cite{Calinescu2008Reconfigurations,ElSabeh2023ARO,Cooper2024ParameterizedAtoms}.
At the opposite extreme, which we call the \emph{unique-matching case},
every source is compatible with exactly one target and every target with
exactly one source. Between these two extremes, the set of compatible source-target pairs can be
arbitrary.

Our goal is to understand how the labels assigned to sources and targets,
together with the structure of the underlying graph, affect the complexity of
\textsc{Single-Move Labeled Token Routing}. We study several graph classes and
parameterizations, proving both negative and positive results that focus on one question: how much information must an algorithm carry across a cut of
the graph? Bounding the treewidth, or even the pathwidth, of the graph does not make
the problem tractable, since the problem is hard even on graphs of pathwidth
three~\cite{GeftHalperinNakar2024}. As we prove in this paper, bounding how many routes of a solution may
share an edge does not either, since the problem is hard on grids even when
the routes of a solution can be chosen pairwise edge-disjoint.
Our main algorithmic result concerns trees, where every edge is a cut:
bounding, for every edge, the number of compatible source-target pairs whose
path uses that edge still does not make the problem tractable, but adding a
bound on the maximum degree does, as does bounding, for every vertex, the
number of compatible pairs whose paths contain that vertex.

The paper is organized as follows. Section~\ref{sec:related} discusses related
work and Section~\ref{sec:our_contributions} presents an overview of our
results. Section~\ref{sec:prelim} introduces the necessary notation.
Section~\ref{sec:hardness} proves the hardness results,
Section~\ref{sec:positive} the algorithmic ones, and
Section~\ref{sec:conclusion} concludes with open problems.

\section{Related work}
\label{sec:related}

\textsc{Atom Reconfiguration}, the problem motivating our work, is the setting
in which every atom may move to every empty target trap, and in which one
typically seeks to fill every target trap while minimizing some measure of
the rearrangement. Atom reconfiguration has been carried out
experimentally for one-, two-, and three-dimensional target
configurations~\cite{Endres2016AtomAssembly,Barredo2016AtomAssembler,%
Barredo2018Synthetic3D,OhlDeMello2019DefectFree,Schymik2020EnhancedAssembly}.
On the algorithmic side, the assignment of atoms to target traps was first
formulated as a minimum-cost matching problem~\cite{Lee2017Hungarian}. Later
algorithms treat the routes and the movement order as parts of the problem
in their own right: after the targets and the routes are chosen, an order in
which to execute the moves must still be
found~\cite{Cimring2023RedRec,ElSabeh2023ARO,Afiouni2025BirdBatching}. All
of these algorithms assume interchangeable atoms and allow each atom to
move more than once. The measures being minimized increase with the number of
moves. Whenever the target traps can be filled at all, they can be filled
by moving every atom at most
once~\cite{Calinescu2008Reconfigurations}, a fact we reprove in our
terminology. The all-matching case of our problem is exactly
\textsc{Atom Reconfiguration} with as many atoms as target traps, restricted to at most
one move per atom. Deciding whether
every target trap can then be filled takes polynomial time
(Theorem~\ref{thm:allMatching}). Reconfiguration algorithms for two species
of atoms have also been
proposed~\cite{Tao2022DualSpeciesAlgorithm,Harle2026Atommovr}. The
parameterized complexity of atom reconfiguration is known for
interchangeable atoms and for the case in which each atom has its own
target: most tractability results for the former become hardness results
for the latter~\cite{Cooper2024ParameterizedAtoms}. In this paper, we study
what lies between these two cases: each atom may be compatible with an
arbitrary set of targets.

Other models of token movement differ from ours in what a single move is
and in which targets a token may be assigned to. We use the word
\emph{token} as a generic term for a moving object; the works below call
their objects pebbles, agents, robots, or coins. Our move, along a path no
vertex of which other than its first is occupied, was studied by C\u{a}linescu, Dumitrescu, and
Pach~\cite{Calinescu2008Reconfigurations} for interchangeable tokens, with
the objective of minimizing the number of moves. They prove that at most $k$ moves suffice, where $k$ is the number of
tokens, and that such a sequence can be chosen with no token moving twice.
They also prove that minimizing the number of moves is \textsf{NP}-hard even on
grids. A sequence of moves in which no token moves twice is a solution of our
all-matching case, so their setting and ours differ only in the objective: they
minimize the number of moves, whereas we decide feasibility. In \textsc{Pebble Motion}, a move takes a token along
a single edge instead, and feasibility is decidable in linear
time~\cite{Kornhauser1984PebbleMotion,Auletta1999PebbleTrees,%
Yu2013LinearPebble,Goraly2010MulticolorPebble}. In other models, several
tokens move along edges at the same
time~\cite{YuLaValle2016OptimalMPP,Stern2019MAPFDefinitions,Ma2022GraphMAPF,%
Kedia2022PCMAPF,Surynek2019CapacityMAPF,Chong2024TAPFPTC}, or two tokens are
swapped along the edge joining
them~\cite{Yamanaka2015SwappingLabeled,Miltzow2016Approximation,%
Alon1994RoutingMatchings,Kawahara2019PermutationRouting}. The assignment of
tokens to targets takes one of three forms: it is fixed, each token having
its own target; it is given by \emph{colors}, each token and each target
carrying one color and a token being assignable to every target of its
color~\cite{Goraly2010MulticolorPebble,MaKoenig2016TAPF,Bonnet2017Complexity,%
Geft2026Trees}; or each token carries a set of targets to which it may be
assigned~\cite{Bonnet2017Complexity,DBLP:journals/comgeo/AbellanasBHORT06}.
Label sets express exactly the third form, which contains the other two.
To our knowledge, arbitrary compatibility relations have not been studied
before for moves along paths of unoccupied vertices.

In our problem, every token moves at most once; in the robotics
literature, a sequence of moves with this property is called
\emph{monotone}~\cite{StilmanKuffner2008}. Three results on monotone sequences of
moves relate directly to ours. First, when both the
matching and the routes are given as part of the input, feasibility is
decided in polynomial time by an acyclicity test on a precedence
digraph~\cite{DBLP:conf/icra/Buckley89,%
DBLP:journals/comgeo/AbellanasBHORT06}; we state this as
Observation~\ref{obs:precedence-acyclicity} and use it in
Section~\ref{sec:positive}. Second, when the matching is given but the
routes are not, the problem is hard. Geft, Halperin, and
Nakar~\cite{GeftHalperinNakar2024} study \textsc{Monotone Sliding
Reconfiguration}, whose discrete form is exactly our unique-matching case,
in which every source is compatible with exactly one target and every
target with exactly one source. They prove it \textsf{NP}-complete on
$3\times w$ grid graphs, and \textsf{NP}-hard even when every token but one
has a specified route. Section~\ref{sec:planar} shows that the hardness
persists on instances that, when solvable, admit a solution with pairwise
edge-disjoint routes. Third, in
work independent of and concurrent with ours, Geft~\cite{Geft2026Trees}
proves that it is \textsf{NP}-hard to decide whether a \textsc{Pebble
Motion} instance with two colors on a tree of maximum degree three can be
solved with as many moves as the sum of the distances from the sources to
the targets. In our terms, a color is a label set of size one, so such an
instance uses two distinct labels. Geft further proves that a solution
attaining this bound can be reordered so that each token traverses the
path from its source to its target in consecutive moves, during which no
other token occupies that path; such a
traversal is exactly one move of our model. Hence,
\textsc{Single-Move Labeled Token Routing} is \textsf{NP}-complete on trees
of maximum degree three even when only two distinct labels are used.

\section{Our contributions}
\label{sec:our_contributions}

We study \emph{feasibility} throughout: deciding whether the target
configuration can be reached at all. The statements below use the notation
of Section~\ref{sec:prelim}: an instance $I=(G,S,T,L)$ consists of a graph
$G$ with $n$ vertices and $m$ edges, a set $S$ of sources, a set $T$ of
targets, and a labeling $L$, which assigns a set of labels to each source
and to each target, and $|I|$ denotes the length of its encoding.

We begin with the previously studied all-matching case, in which every source is
compatible with every target. An all-matching instance is solvable if and
only if every component of the graph contains as many sources as
targets~\cite{Calinescu2008Reconfigurations}. We provide a direct proof in
our terminology; the proof simultaneously constructs a matching, a route for
each matched pair, and an order of the moves by repeatedly removing leaves
from a spanning tree of the graph.
\begin{restatable}{theorem}{allMatching}
\label{thm:allMatching}
\textsc{Single-Move Labeled Token Routing} on all-matching instances can be
solved, and a solution constructed, in time $\mathcal{O}(n^2+m)$, and every connected all-matching instance
is a yes-instance.
\end{restatable}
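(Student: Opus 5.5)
We reduce to a connected graph and argue by induction on the number of tokens, peeling leaves off a spanning tree. First, the necessity direction: a token never leaves its component, so a yes-instance must have $|S\cap V(C)|=|T\cap V(C)|$ for every component $C$. Conversely, each component is solved independently and the per-component orders are concatenated. We may therefore assume $G$ is connected with $|S|=|T|$, and compute a spanning tree $F$ by BFS in $\mathcal{O}(n+m)$. A vertex that is both a source and a target is matched to itself with the trivial route; its token never moves and it is never used as an internal vertex of another route. We delete these vertices only conceptually, keeping them in $F$ but treating them as neither sources nor targets.

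The core is an induction on $k=|S|$ for a tree $F$ with equally many sources and targets, where sources and targets are disjoint. We prune leaves of $F$ that are neither sources nor targets; this preserves everything. Now every leaf is a source or a target. Pick a leaf $\ell$.

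\emph{Case 1: $\ell$ is a target.} Choose the source $s$ nearest to $\ell$ in $F$. Every internal vertex of the $s$--$\ell$ path is then empty of tokens; such a vertex may be a target, but that is allowed, since the route only needs to be unoccupied. We make the move $s\to\ell$ \emph{first} and then recurse on $F-\ell$ with $s$ no longer a source; $s$ becomes an empty vertex, and $F-\ell$ remains a tree. In the recursive instance, $\ell$ is occupied but lies off every remaining route, because it is a leaf and removed. So all later moves are valid in $G$.

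\emph{Case 2: $\ell$ is a source.} Choose the target $t$ nearest to $\ell$; internal vertices of the path are not targets. We recurse first on $F-\ell$ with $t$ removed from the targets (kept as a plain vertex), and only \emph{last} move $\ell\to t$. Two things must be checked. During the recursion, $\ell$'s token lies off every route, since $\ell$ is a leaf. At the end, every path vertex other than $\ell$ is unoccupied: the final configuration of the recursion occupies exactly the remaining targets, and no path vertex is one of them, $t$ included since it was removed.

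Symmetrically, the invariant maintained is that every route's interior avoids the occupied vertices at its move time. Case 1 uses a "nearest source" argument and Case 2 a "nearest target" argument. This asymmetry is the main subtle point to get right: the order in which the leaf's move is placed (first versus last) must match which endpoint is the leaf.

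For the running time, each step finds a nearest source or target by a BFS in $F$ in $\mathcal{O}(n)$, and there are at most $n$ steps. Leaf maintenance is cheap, giving $\mathcal{O}(n^2)$ plus $\mathcal{O}(n+m)$ for the spanning forest, i.e.\ $\mathcal{O}(n^2+m)$ overall. The connected claim follows because a connected all-matching instance, under the standing assumption $|S|=|T|$ implicit in the problem, meets the counting condition. I expect the main obstacle to be stating the induction hypothesis precisely, with the current occupied set and the neutral vertices, so that both cases compose cleanly.
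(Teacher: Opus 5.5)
Your leaf-peeling scheme (fill a target leaf from the nearest source and place that move first; empty a source leaf into the nearest target and place that move last) is the paper's. The gap is your preprocessing of the vertices of $S\cap T$.

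\textbf{The gap.} You self-match every such vertex at the outset and keep it in $F$ ``as neither source nor target''. Its token never moves, so the vertex is permanently occupied, and nothing prevents it from being an internal vertex of a later route. Your claim that it ``is never used as an internal vertex of another route'' is therefore false.

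Take the path $s,v,t$ with $s\in S\setminus T$, $v\in S\cap T$, and $t\in T\setminus S$. This connected all-matching instance is solvable: move $v\to t$, then $s\to v$. Once $v$ is matched to itself, however, the only $s$--$t$ route passes through the occupied vertex $v$, so no solution extends your choice. Either of your two cases would output the route $s,v,t$ and wrongly declare it clear. The nearest-source and nearest-target arguments only exclude vertices still counted as sources or targets, and $v$ has been removed from both sets although its token is still there. Deleting $v$ from $F$ instead does not help either: it splits $F$ into parts with $|S|\neq|T|$.

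\textbf{The fix.} This is what the paper does: keep each vertex of $S\cap T$ in both $S$ and $T$, and self-match it only when it becomes a leaf of the current tree. As a leaf it cannot be an internal vertex of any remaining route. Until then it may be chosen as the nearest source in Case~1, after which it stays as a pure target, or as the nearest target in Case~2, after which it stays as a pure source.

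\begin{itemize}
\item Because it counts as a source, the nearest-source choice keeps it out of the interior of a Case~1 route. That move is executed when the occupied vertices of the current tree are exactly the current sources.
\item Because it counts as a target, the nearest-target choice keeps it out of the interior of a Case~2 route. That move is executed when the occupied vertices of the current tree are exactly the current targets.
\end{itemize}

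The invariant you were looking for is therefore this: the occupied set within the current tree equals the current sources at the start of the recursive order and the current targets at its end. With it, your two cases compose as intended, and your running-time analysis is unaffected.
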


At the other extreme, in the unique-matching case, every source is
compatible with exactly one target and every target with exactly one source.
This case is \textsf{NP}-complete on $3\times w$ grid
graphs~\cite{GeftHalperinNakar2024}, which have pathwidth at most three; in the instances produced by their reduction, every solution has an edge that lies on at least one third of the routes (Section~\ref{sec:planar}).
Our first two results give a construction whose instances, whenever they are
solvable, admit a solution with pairwise edge-disjoint routes; the
\emph{solution edge multiplicity} of a solution is the largest number of its
routes that use a common edge, so such a solution has solution edge
multiplicity one. Neither bounding the pathwidth nor
bounding the solution edge multiplicity of the sought solution therefore
makes the problem tractable. Our construction also lowers the maximum degree from four to
three, at the price of allowing some edges to be shared by two routes of a
solution.

\begin{restatable}{theorem}{uniqueMatchingPlanarHardness}
\label{thm:unique_matching_planar_hardness}
\textsc{Single-Move Labeled Token Routing} on unique-matching instances is
\textsf{NP}-complete on planar graphs of maximum degree four, even on
instances that, when solvable, admit a solution with pairwise edge-disjoint
routes, and on planar graphs of maximum degree three, even on instances
that, when solvable, admit a solution with solution edge multiplicity at
most two.
\end{restatable}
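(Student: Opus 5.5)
Membership in \textsf{NP} is immediate: a certificate consists of the (forced) matching, a route for each pair and a movement order, and validity of a move is checked in polynomial time. The work is hardness. I plan a reduction from \textsc{Planar 3-SAT} (or, if more convenient, \textsc{Planar Monotone 3-SAT}), embedding the variable--clause incidence graph in a grid. Each incidence edge becomes a corridor of a subdivided grid graph, so the result is planar of maximum degree four. Since every source has a unique compatible target, all freedom lies in the choice of routes and in the order of moves. The gadgets are built so that the intended routes never share an edge, which is what yields the pairwise edge-disjoint solutions.

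\textbf{Variable gadget.} This is a cycle, i.e.\ two parallel paths joined at their ends, with a ``selector'' token whose source and target lie at the two junctions. The selector travels either along the upper path (true) or along the lower path (false). Along each path sit vertices of degree three leading into the literal corridors. On each corridor a \emph{blocker} token sits at a position that obstructs the corresponding side of the cycle until the blocker has moved out, and conversely the blocker's own route crosses a vertex of the cycle. The effect is that the side of the cycle the selector does not use is exactly the side whose incident corridors can be ``released'' early.

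\textbf{Clause gadget.} This is a vertex where three literal corridors meet (of degree at most three). A clause token sits there and must exit to a target far away through a private corridor. Its only route is blocked unless at least one incident literal corridor has been emptied in time, which happens exactly when that literal is true. The intended solution orders the moves as follows: first the selectors according to the assignment, then the released blockers, then the clause tokens, and finally the remaining blockers. All intended routes use disjoint edge sets by design.

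\textbf{Soundness and the main obstacle.} The converse direction is the main obstacle. From \emph{any} solution, with arbitrary routes and arbitrary order, we must extract a consistent assignment. A token must not be able to cheat by detouring through another gadget, or by moving a selector ``late'' after everything else has cleared. To handle this I would first try making detours impossible with long corridors whose every alternative route passes through the source or target vertex of some token that has not yet moved or has already arrived. This is a standard counting trick: any vertex on another pair's fixed endpoints is occupied at one of the two times. I then plan to prove a lemma stating that in every solution each selector's route is one of its two cycle sides and each clause token exits only after some incident blocker has moved. The acyclicity view of Observation~\ref{obs:precedence-acyclicity}, applied once the routes are pinned down, then forces consistency.

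\textbf{Degree three.} To obtain the degree-three variant, I would replace each degree-four vertex (the corridor crossings or junctions) by two adjacent degree-three vertices joined by an edge. The edge between them may then be traversed by two routes, namely a selector route and a blocker route, giving solution edge multiplicity at most two. The soundness argument is unchanged, because the split vertices carry no tokens and preserve the blocking relations.
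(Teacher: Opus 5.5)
\textbf{Gap: nothing ties the clause tokens to the selector's choice.} The converse direction, which you yourself flag as the main obstacle, is deferred to a lemma that your gadgets as described cannot support.

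The side used by the selector leaves no trace in the occupancy after its move. It acts only through precedence arcs: a token whose source lies on that side must move \emph{before} the selector, and a token whose target lies on it must move \emph{after}. In particular, a blocker that obstructs the side the selector uses must leave before the selector, so your intended order (selectors first, then released blockers) is infeasible as stated.

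Whatever the orientation of the blockers, a solution must be prevented from moving the clause tokens at a moment when the relevant blocking is not in force. If blockers block by \emph{arriving} in the clause gadget, every clause token can simply move first. If blockers must \emph{vacate} corridors, nothing you describe stops the blockers of false literals from vacating early.

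The paper closes this gap with a second token per occurrence, the delay token $d_o$. Its source lies on every route of $c_C$ that passes through $\tgt(b_o)$, and its target lies on the same path of the variable gadget as $\src(b_o)$. When $o$ is false, the route of $a_x$ contains both $\src(b_o)$ and $\tgt(d_o)$. This gives the chain $b_o\prec a_x\prec d_o\prec c_C$ (Observation~\ref{obs:forced-literal-order}, Lemma~\ref{lem:planar-precedence}), so $\tgt(b_o)$ is occupied when $c_C$ moves. Some such chain from the variable's choice to each clause token is indispensable, and your proposal supplies none.

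\textbf{The anti-detour principle.} A single source or target of another token does not block a route: a source can be passed after its token has left, and a target before its token arrives. What actually holds is that, for each token $z$, at every moment exactly one of $\src(z)$ and $\tgt(z)$ is occupied, so no other route contains both (Observation~\ref{obs:occurrence-edge-isolation}). This is why the paper joins gadgets only by edges of the form $\src(z)\tgt(z)$. In the planar version, the one exception, the edge $\src(b_2)u_C$, needs a separate bridge argument (Lemma~\ref{lem:planar-routes}).

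\textbf{Planarity.} Needing two connections per occurrence (for $b_o$ and for $d_o$) is what makes planarity nontrivial. Positive and negative occurrences must attach to opposite paths of the variable cycle. That is why the paper reduces from \textsc{Planar Monotone Rectilinear 3-SAT} and orders the vertices along each path by the ports of the representation. A clause with three literals then forces a crossing, which the paper absorbs into the ordinary crossing vertex $u_C$.

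\textbf{Degree three.} Your step of splitting each degree-four vertex into two adjacent degree-three vertices, at the cost of multiplicity two on the middle edge, matches the paper's. It is valid only when every degree-four vertex is ordinary, as the paper ensures. Your clause vertex carries a token and has three corridors plus an exit, so it would not qualify.
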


Grids and grid graphs are the graph classes closest to atom reconfiguration:
one- and two-dimensional target configurations place target traps at
regularly spaced positions, and moves between neighboring traps induce a
grid-like graph structure. The same reduction transfers to them with minor
modifications.

\begin{restatable}{theorem}{gridAndSubgridHardness}
\label{thm:grid_and_subgrid_hardness}
\textsc{Single-Move Labeled Token Routing} on unique-matching instances is
\textsf{NP}-complete on grids and on grid graphs, even on instances that,
when solvable, admit a solution with pairwise edge-disjoint routes.
\end{restatable}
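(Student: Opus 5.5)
We derive Theorem~\ref{thm:grid_and_subgrid_hardness} from the reduction behind Theorem~\ref{thm:unique_matching_planar_hardness} by embedding its instances into grids. Membership in \textsf{NP} is immediate: a certificate is the matching, the routes, and the movement order, checkable in polynomial time. For hardness we start from the planar, maximum-degree-four, unique-matching instance $I=(G,S,T,L)$ produced by that reduction. It has two properties we rely on: it is solvable if and only if the source formula is satisfiable, and when solvable it admits a solution with pairwise edge-disjoint routes.

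\textbf{Step 1: grid graphs.} We first take an orthogonal grid embedding of $G$, for instance via Valiant's or Tamassia's algorithm, which embeds any planar graph of maximum degree four in a polynomial-size grid. Each vertex goes to a grid point, and each edge becomes a path of grid edges; distinct edge paths are internally vertex-disjoint. We then scale the embedding by a factor of, say, $3$, so that distinct edge paths are at distance at least two and create no unintended grid adjacencies. The grid graph $G'$ is the subgraph of the grid induced by the images of the vertices and edges. Every subdivision vertex is left free: it is neither a source nor a target.

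We then argue that $I'$ and $I$ have the same solutions, up to the obvious correspondence. A route in $G'$ between branch vertices is a subdivision of a route in $G$, and vice versa. A path in $G'$ is blocked exactly when the corresponding path in $G$ is blocked, because only branch vertices carry tokens. Edge-disjointness is preserved because distinct edges of $G$ map to disjoint edge paths. The one place needing care is induced adjacency. After scaling, two grid points of $G'$ are adjacent only along an embedded edge path, and this is exactly what the induced-subgraph requirement demands.

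\textbf{Step 2: full grids.} The harder case is the full grid, where every grid vertex exists and tokens could use shortcuts through vertices outside the embedding. Our plan is to fill every grid vertex that is not in $G'$ with a \emph{blocker} token. Each blocker is a source whose unique compatible target is its own vertex, so it never needs to move. Since sources and targets coincide there, the blocker is already matched to itself in place, and it remains permanently present as an obstacle. Every route must then avoid the blocker vertices. Every free vertex of the full grid is therefore a vertex of $G'$, which reduces the full grid case to the grid-graph case. The pairwise edge-disjoint solution carries over, since blockers do not move and so contribute no routes.

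We must check that the definitions allow a source that is also its own target, or a zero-length route. If they do not, we replace each blocker by a pair of adjacent blocker vertices that swap. This cannot work directly, because a swap needs the target empty. So instead we would use an isolated pocket: the grid complement of $G'$ is carved into $2\times1$ dominoes, and each domino holds one source and its own target. This is the step we expect to be the main obstacle. The difficulty is to ensure that such local moves cannot be exploited by other tokens, and that a domino tiling of the complement exists, which can be arranged by the scaling and parity choices. We would first try to rely on the definition permitting a token to already sit on its target, which is the natural reading given the motivation, where atoms already in target traps need not move.
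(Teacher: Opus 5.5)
Your proposal is correct and follows essentially the same route as the paper. You start from the degree-four planar instance $J_{\mathrm{pl}}$ and take a polynomial-size orthogonal grid drawing, scaled so that grid adjacencies between drawn points occur only between consecutive points on one edge; the paper doubles the coordinates, and your factor $3$ works equally well. You then use that subdividing edges with ordinary vertices preserves solutions and edge-disjointness, and pad the bounding grid with self-matched tokens carrying fresh private labels, exactly as the paper does.

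The paper's definitions explicitly allow a vertex in $S\cap T$ with a trivial route, so your domino fallback is unnecessary. It would also fail as stated, because another route can pass through a domino cell once it is empty. Avoiding such self-matched vertices is in fact listed as an open problem in the paper's conclusion.
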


On trees, unique-matching instances are solvable in polynomial time: the
route of every pair is forced, and whether those routes can be ordered is
decided by the acyclicity test described in Section~\ref{sec:related}. Trees are therefore where
the label constraints are worth examining closely. By the independent and
concurrent result discussed in Section~\ref{sec:related}, the problem is
\textsf{NP}-complete on trees of maximum degree three even with two
distinct labels~\cite{Geft2026Trees}. We study the parameterized complexity
of the problem on trees, taking as parameter an upper bound $\mu$, given as
part of the input, on the solution edge multiplicity of the sought solution.
With this parameter, the problem is not fixed-parameter tractable unless
$\textsf{FPT}=\textsf{W}[1]$, even when a movement order is given as part of
the input.

\begin{restatable}{theorem}{solutionMultiplicityHardness}
\label{thm:sol_multiplicity_hardness}
On trees of maximum degree three, \textsc{Single-Move Labeled Token Routing}
is \textsf{NP}-complete, and, for a nonnegative integer $\mu$ given as part
of the input, deciding whether there is a solution with solution edge
multiplicity at most $\mu$ is \textsf{W}[1]-hard when parameterized by
$\mu$; both results hold even when a movement order is given as part of
the input.
\end{restatable}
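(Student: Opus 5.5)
We give a polynomial-time parameterized reduction from \textsc{Multicolored Clique}, parameterized by the number $k$ of color classes. The reduction produces a tree of maximum degree three, a labeling, a movement order, and a bound $\mu=\mathcal{O}(k^2)$. The same construction also yields \textsf{NP}-hardness, since \textsc{Multicolored Clique} is \textsf{NP}-complete.

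\textbf{Membership in \textsf{NP}.} A certificate is a matching. On a tree the routes are forced, and the movement order is either given or guessed. Whether the moves are valid can be checked in polynomial time, for instance with the acyclicity test of Observation~\ref{obs:precedence-acyclicity}. The solution edge multiplicity can be computed directly from the routes.

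\textbf{Selection gadgets.} Let $V_1,\dots,V_k$ be the color classes, each of size $N$, indexed $1,\dots,N$. For each class $i$ we build a long path $P_i$ with pendant vertices hanging from it, so that every vertex has degree at most three. On the pendant vertices we place candidate targets $t_{i,1},\dots,t_{i,N}$ at increasing positions. A selector source $s_i$ sits at one end of $P_i$ and is compatible with all of them. Matching $s_i$ to $t_{i,a}$ encodes selecting the $a$-th vertex of $V_i$. The movement order moves $s_i$ first among the tokens involving $P_i$. From then on, the token parked at $t_{i,a}$ blocks exactly the routes that cross its pendant attachment point. To keep the unused candidate targets from causing problems, each $t_{i,b}$ gets a dummy filler source, compatible with it and with a spare target. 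The fillers are scheduled so that exactly one candidate position stays available to $s_i$.

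\textbf{Verification gadgets.} For each pair $i<j$ we add one edge token $e_{ij}$. Its compatible targets correspond to the edges $uv\in E(G)$ with $u\in V_i$ and $v\in V_j$. Its route must pass through a checking region in $P_i$ and another in $P_j$, which we connect through a backbone tree of maximum degree three. The edge token is scheduled after the selectors. Its target for edge $uv$ is placed so that its route runs past position $\mathrm{idx}(u)$ on a copy path of $P_i$ and past position $\mathrm{idx}(v)$ on a copy path of $P_j$.

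Consistency is enforced with the standard pair of inequalities ``$\le$'' and ``$\ge$''. Each selector is duplicated into two mirrored copies, one on $P_i$ and one on a reversed copy of $P_i$, linked by a label so that both pick the same index. For an incorrect choice, one of the two comparisons then forces the route of $e_{ij}$ to cross an occupied vertex. We ensure the copies agree by using additional synchronizing tokens whose compatibility forces matching indices, again through a crossing argument.

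\textbf{Parameter bound and main obstacle.} Every edge of the backbone lies on the routes of at most the $\mathcal{O}(k^2)$ chosen edge tokens plus $\mathcal{O}(k)$ selectors. Every edge inside a gadget lies on $\mathcal{O}(k)$ routes of any solution. Hence $\mu=\mathcal{O}(k^2)$, while the candidate edge multiplicity may be huge, which is consistent with the later positive results.

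The key steps, in order, are:
\begin{enumerate}
\item build the gadgets;
\item prove that a multicolored clique yields a matching that is valid under the fixed order and has multiplicity at most $\mu$;
\item prove the converse: in any solution, the selector indices together with the edge-token choices encode a clique.
\end{enumerate}

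We expect the main obstacle to be this converse under the degree-three restriction. Pendant targets and fillers could be matched in unintended ways, for example a filler taking a selector's intended target and freeing a different one. We will first try to rule this out with distinct labels for each gadget and each role, so that cross-gadget compatibility is impossible. Then only the position choices within each gadget remain free, and those are exactly controlled by the blocking argument along the paths.
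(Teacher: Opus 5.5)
There is a genuine gap, and it sits where you expected: the verification gadget and the converse direction. In addition, the one concrete blocking mechanism you describe does not work.

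\textbf{The pendant targets block nothing.} A token parked on a pendant leaf $t_{i,a}$ blocks only routes that contain that leaf. A leaf is never an internal vertex of a tree path, so it blocks no route through its attachment point on $P_i$. To block anything, the selected target must lie on $P_i$ itself. Then the fillers parked on the other candidate positions also block routes through $P_i$. Their schedule relative to the checks must be fixed before the clique is known, since the statement covers a prescribed movement order, and the proposal does not work this out.

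\textbf{The equality test has no mechanism on a tree.} The route of $e_{ij}$ to the target for $uv$ is the unique tree path from one fixed source. Suppose it traverses a stretch of $P_i$ ending at position $\mathrm{idx}(u)$ and then a stretch of $P_j$ determined by $\mathrm{idx}(v)$. It must leave $P_i$ through a branch attached at position $\mathrm{idx}(u)$. So each later check needs its own copy for every index, each copy with its own selector: the reversed copy for the second inequality, and both checks for $P_j$. Everything then hinges on forcing all these copies to select the same index. Labels cannot do that: a label only makes one source compatible with one target; it cannot tie one selector's choice to another's. Your synchronizing tokens acting ``through a crossing argument'' would need the same equality test again and face the same tree obstruction. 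The argument is therefore circular as written, and $\mu=\mathcal{O}(k^2)$ is unsupported once the copies are counted.

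\textbf{How the paper avoids this.} The paper reduces from \textsc{Grid Tiling} and uses no positional ``$\le$''/``$\ge$'' comparisons at all; consistency is forced through the matching.
\begin{itemize}
\item Each value subtree $\rowsub_i(a)$ or $\colsub_j(b)$ hangs off an attachment vertex that is both the source and the target of its sealing token. If the sealing token is matched with itself, that vertex stays occupied forever, so no tile token can leave the subtree.
\item Each row and each column has only one storage target compatible with its sealing tokens, so at most one value subtree per row or column is unsealed.
\item The tile target of cell $(i,j)$ is compatible only with column tile tokens of legal tiles, which forces some $\colsub_j(b)$ to be unsealed.
\item The local target vacated by that column tile token is compatible only with itself and with the row tile token of the same tile. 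That token lies in $\rowsub_i(a)$, so $\rowsub_i(a)$ is unsealed too.
\end{itemize}
The converse thus uses only the matching and the fact that a route through a vertex held by a self-matched token is never clear; it never uses the movement order. This is why one solution-independent order (sealing, column tile, row tile, cleanup tokens) handles the prescribed-order version, and why $\mu=3K^2+4K$ follows from counting the few long routes.
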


The solution edge multiplicity counts only the routes of a solution.
The \emph{candidate edge multiplicity} is determined by the instance alone:
for each edge, one counts the compatible source-target pairs whose unique
paths in the tree use that edge, and the candidate edge multiplicity of the
instance is the maximum of these counts over all edges. Our last hardness
result shows that bounding the candidate edge multiplicity does not make the
problem tractable either: the problem remains \textsf{NP}-complete when the
candidate edge multiplicity is bounded by a constant, so unless
$\textsf{P}=\textsf{NP}$, it admits no algorithm whose running time is
polynomial in the input size for each fixed value of this parameter.

\begin{restatable}{theorem}{candidateMultiplicityHardness}
\label{thm:para_np_tree}
\textsc{Single-Move Labeled Token Routing} is \textsf{NP}-complete on trees
with candidate edge multiplicity at most eight, even when every source is
compatible with at most three targets and every target with at most three
sources.
\end{restatable}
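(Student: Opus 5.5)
Membership in \textsf{NP} is immediate: a certificate is the matching together with the movement order, since on a tree the routes are forced, and the order is checked by simulating the moves, or by the acyclicity test of Observation~\ref{obs:precedence-acyclicity} once the matching is fixed. For hardness I would reduce from a bounded-occurrence satisfiability variant, concretely \textsc{3-SAT} in which every variable occurs at most three times (or exactly twice positively and once negatively), which remains \textsf{NP}-complete. Bounded occurrence is what keeps both the candidate edge multiplicity and the per-vertex compatibility degree constant. The tree will be a high-degree \emph{star of gadgets}: a central vertex $c$, with one pendant subtree per variable and one per clause, each attached to $c$ by a path. Unbounded degree is consistent with the theorem, because the central vertex has degree linear in the formula size; bounded degree is ruled out by the later fixed-parameter algorithm.

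For each variable $x$ I would build a small subtree containing two sources and two targets, so that the compatible pairs inside it form an alternating $4$-cycle, and there are exactly two perfect matchings, corresponding to $x=\text{true}$ and $x=\text{false}$. The two matchings differ in which pairs cross a designated vertex of the gadget, and hence in which direction the gadget is ``blocked'': under one matching a token temporarily occupies (or keeps occupied until late) a vertex on the path from $c$ toward the literal-connection points. For each clause $C$ I would attach a clause subtree whose source is compatible with at most three targets, one located in the gadget of each literal of $C$ (reached through $c$). The intended meaning is that the clause token can be delivered into the gadget of literal $\ell$ only if the assignment chosen there makes $\ell$ true: under the falsifying matching, the entry to that target is permanently occupied by a token that must stay before and after the clause move, producing a cycle in the precedence digraph. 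Conversely, a token whose target lies in the literal gadget but which is displaced by the clause token must have a backup target, and I would give each literal slot a local ``return'' target so that every target retains at most three compatible sources.

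The key correctness steps, in order, are the following. First, under any satisfying assignment I exhibit an explicit matching and order: the variable gadgets are settled in the orientation consistent with the assignment, then every clause token is routed through $c$ to the gadget of one true literal, with leftover slots filled locally. Second, conversely, from any solution I read off an assignment from each variable gadget's matching, and argue via Observation~\ref{obs:precedence-acyclicity} that a clause token routed into a false literal's gadget yields a precedence cycle. Third, I bound the candidate edge multiplicity. Edges incident to $c$ carry only the pairs of one gadget plus the at most three clause-to-literal pairs of the corresponding literal occurrences. Each variable has at most three occurrences, each contributing a constant, and the internal alternating pairs contribute a few more; counting carefully should give at most eight.

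The main obstacle is the second step, soundness. Because all clause tokens pass through the single hub $c$, I must ensure that the ordering freedom at $c$ cannot be exploited: a clause token must not be able to slip into a false literal's gadget before that gadget is locked. I would first try making the locking token's source sit \emph{on} the entry path, so it blocks from time zero until it moves. I would also make its target lie beyond the clause target on the same path, so that it must move after the clause token arrives, yet is itself blocked by that token. This gives a $2$-cycle whenever the falsifying matching is used. Tuning these positions while keeping the multiplicity at eight is the delicate part.
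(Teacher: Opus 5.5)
Your high-level plan (a hub of unbounded degree, a $K_{2,2}$ variable gadget, bounded occurrences) is reasonable. However, the step you single out as delicate fails as written. Let $k$ be the locking token and $c$ the clause token heading to the slot $t_o$. With $\src(k)$ on the entry path and $\tgt(k)$ beyond $t_o$, the precedence digraph gets the arc $(k,c)$ from $\src(k)\in V(P_c)$, and the arc $(k,c)$ again from $t_o=\tgt(c)\in V(P_k)$. Both arcs point the same way. Nothing forces $k$ to move after $c$ arrives, so moving $k$ first and then $c$ is feasible, and this lock does not keep the clause token out of a false slot. A $2$-cycle also needs an arc $(c,k)$, that is, $\src(c)\in V(P_k)$ or $\tgt(k)\in V(P_c)$. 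The natural choice is to put both $\src(k)$ and $\tgt(k)$ on the clause token's route, which blocks it permanently as in Observation~\ref{obs:occurrence-edge-isolation}. You would then still have to place the two sources and two targets of the $K_{2,2}$ gadget so that the falsifying matching creates such a pair on the route to every false slot and the satisfying matching on none. You would also have to check that the satisfying matching creates no other cycle. Your description does neither.

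The slack bookkeeping is a second gap, not a detail. For every gadget $H$, the numbers $|S\cap V(H)|$ and $|T\cap V(H)|$ are fixed, so the number of tokens entering $H$ minus the number leaving it is the same in every solution. Since the clause token enters one literal's gadget in one solution and another's in a different one, some other token must cross the hub to compensate. A return target local to each literal gadget therefore cannot balance the counts: per clause you get $1+|C|$ sources against $2|C|$ targets. Any repair, such as a single return target in the clause gadget compatible with the $|C|$ displaced tokens, adds cross-hub pairs that your count of eight omits. It also adds routes whose interaction with the locks must be checked in both directions. As it stands, soundness, completeness, and the bound $\cm(I)\le 8$ are all unverified.

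For comparison, the paper keeps the clause's choice entirely inside the clause gadget (choice tokens, selection and non-selection targets, fillers). It couples gadgets only through two private-label tokens per occurrence, $\ell_o$ and $w_o$, which close the $4$-cycle $a_x,\ell_o,c_o,w_o,a_x$ exactly when the selected occurrence is false. These are the only cross-hub pairs, giving $2\cdot 4=8$ for \textsc{4-Bounded 3-SAT}. If you keep at most three occurrences per variable, you must allow two-literal clauses, since formulas with exactly three literals per clause and at most three occurrences per variable are always satisfiable.
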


Next, we turn to algorithmic results. The first concerns paths, where the linear
order of the vertices forces the matching: the $i$th source must be matched
to the $i$th target, and it remains to test the compatibility of these
pairs.

\begin{restatable}{theorem}{pathTime}
\label{thm:path_time}
\textsc{Single-Move Labeled Token Routing} on a path can be solved in time
$\mathcal{O}(|I|)$.
\end{restatable}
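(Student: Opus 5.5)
The proof has two parts: a structural lemma that forces the matching, and a linear-time check of that forced matching. Let the vertices of the path be $v_1,\dots,v_n$ in order. Let the sources be $s_1,\dots,s_k$ and the targets $t_1,\dots,t_k$, each listed by increasing position. If $|S|\neq|T|$, we reject immediately. The central claim is that in every solution, $s_i$ is matched to $t_i$ for every $i$. Conversely, if $s_i$ and $t_i$ are compatible for every $i$, then a solution exists. Given this equivalence, the algorithm makes one left-to-right scan to list sources and targets in order, then tests $L(s_i)\cap L(t_i)\neq\emptyset$ for each $i$. A vertex that is both a source and a target appears in both lists.

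\textbf{Necessity (no crossing).} The route of a pair is the unique subpath between its endpoints, and no token moves twice. Suppose the matching is not order-preserving. Then some pair $(s,t)$, $(s',t')$ has $s<s'$ and $t>t'$. Consider the two moves in time order. If $s$ moves first, its subpath $[s,t]$ contains $s'$, which is still occupied, so the move is illegal. If $s'$ moves first, its subpath $[t',s']$ or $[s',t']$ contains $s$ whenever $t'\le s$, and then the move is illegal.

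Otherwise $t'>s$. Then $t'$ lies strictly inside $[s,t]$ and stays occupied afterwards, since tokens move at most once. The later move of $s$ must therefore pass over $t'$, which is illegal. The degenerate case $s=t$ is a fixed token that cannot be passed, so it is handled by the same argument.

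The main care in this step is the case analysis on the relative positions of $s, s', t, t'$, in particular coinciding vertices when a vertex is both a source and a target. I would organize it by noting a single fact: each of the two routes contains an endpoint of the other pair that is occupied at the time of that move.

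\textbf{Sufficiency.} Assume $s_i$ is compatible with $t_i$ for all $i$. We must exhibit a valid movement order, and I would apply Observation~\ref{obs:precedence-acyclicity}. The candidate order is as follows.
\begin{itemize}
\item Tokens with $s_i<t_i$ move right, in decreasing order of $i$.
\item Tokens with $s_i>t_i$ move left, in increasing order of $i$.
\item Tokens with $s_i=t_i$ stay put.
\end{itemize}
Take a right-mover $i$ at its turn. Between $s_i$ and $t_i$ there is no original source $s_j$ with $j>i$ still in place, since all right-movers $j>i$ have already moved. There is also no left-mover $j>i$ in that interval: such a $j$ would need $t_j<s_j\le t_i$ with $j>i$, contradicting that targets are listed in increasing order. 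Any token already parked there is at some $t_j$ with $j>i$, so it lies beyond $t_i$. Left-movers are symmetric. Since the matching is order-preserving, right-movers and left-movers occupy disjoint intervals, and moving them does not interfere.

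Both the scan and the compatibility checks sum to $\mathcal{O}(|I|)$, and the solution itself, including the order, can be output within the same bound.
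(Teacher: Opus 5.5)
Your route to the forced matching differs from the paper's and works once repaired, but your Case~1 rests on a false claim. The paper looks at the \emph{first} move of a solution, shows that it must take $s_i$ to $t_i$, and inducts on the two sides of $t_i$. You instead take an inversion $(s,t),(s',t')$ with $s<s'$ and $t>t'$ and argue that neither relative order of the two moves is feasible.

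Case~1 asserts that if $s$ moves first, the path from $s$ to $t$ contains $s'$. This holds only when $t\ge s'$. When $t<s'$ it fails: for $t'<t<s<s'$ (both tokens move left), for $t'<s<t<s'$, and for a trivial pair $s=t$, the route of $s$ contains neither $s'$ nor $t'$. So the move of $s$ is not blocked by the other pair at all. In that situation the obstruction is the \emph{later} move: since $t'<t<s'$, the route of $s'$ along $[t',s']$ passes over $t$, which is occupied once $s$ has moved.

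For the same reason, the ``single fact'' you propose to organize the analysis around is false as stated. In the example $t'<t<s<s'$, the route of $s$ contains no endpoint of the other pair at all. The correct invariant is a $2$-cycle in the precedence digraph $D(M,\Pi)$. Write $p=(s,t)$ and $q=(s',t')$.
\begin{itemize}
\item The arc $(q,p)$ is present: $s'$ lies on the route of $p$ if $t\ge s'$, and $t$ lies on the route of $q$ if $t\le s'$.
\item The arc $(p,q)$ is present: $s$ lies on the route of $q$ if $t'\le s$, and $t'$ lies on the route of $p$ if $t'\ge s$.
\end{itemize}
Your Case~2 is exactly the second dichotomy; Case~1 needs the first. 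With this fix, Observation~\ref{obs:precedence-acyclicity} rules out every non-identity matching. Your argument is then more local than the paper's induction, since it reduces everything to a two-pair lemma and needs no recursion on the sides of a target.

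Your sufficiency argument uses the same order as the paper and is fine. Under the identity matching, the intervals of right-movers and left-movers are disjoint, and the vertex of a trivial pair lies in no other pair's interval, so the interleaving of the groups does not matter.

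The remaining omission is in the running time. This is the one place where the $\mathcal{O}(|I|)$ bound needs an implementation idea, and you only assert it. Testing $L_S(s_i)\cap L_T(t_i)\neq\emptyset$ by comparing labels pairwise costs $|L_S(s_i)|\cdot|L_T(t_i)|$, which is not linear in general. The paper initializes once a Boolean array indexed by the labels. For each $i$ it marks the labels of $L_S(s_i)$, scans $L_T(t_i)$ for a marked label, and unmarks, so each test costs $\mathcal{O}(|L_S(s_i)|+|L_T(t_i)|)$.
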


Our remaining results are parameterized. Since the hardness results above
involve choosing a matching together with a movement order, it is natural to
parameterize by the number $k$ of tokens, which bounds the search space for both
choices: there are at most $k!$ matchings and at most $k!$ movement orders per
matching. For
each guessed matching and movement order, the remaining question of whether
suitable routes exist can be answered efficiently.

\begin{restatable}{theorem}{tokenNumberFpt}
\label{thm:token_number_fpt}
\textsc{Single-Move Labeled Token Routing} is fixed-parameter tractable when
parameterized by the number $k$ of tokens.
\end{restatable}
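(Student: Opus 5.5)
We follow the outline given just before the statement: guess the matching and the movement order, and for each guess decide whether suitable routes exist. Let $k=|S|$. We may assume $|T|=k$, since otherwise the instance is a trivial no-instance (or is handled by the paper's convention on unmatched targets). There are at most $k!$ perfect matchings $M$ between $S$ and $T$. We keep only those in which every pair is compatible; checking compatibility takes time polynomial in $|I|$ per pair. For each such $M$ we guess a movement order $\sigma=(s_1,\dots,s_k)$ of the sources, which gives at most $k!$ further choices. The total number of guesses is therefore at most $(k!)^2$. It remains to show that, for fixed $M$ and $\sigma$, we can decide in polynomial time whether there exist routes $P_1,\dots,P_k$ such that the moves executed in order $\sigma$ are all valid.

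The key observation is that, once $\sigma$ is fixed, the occupied vertex set at the moment the $i$th token moves is completely determined. It is $O_i=\{t(s_j): j<i\}\cup\{s_j : j>i\}$, where $t(s_j)$ denotes the target matched to $s_j$. This set does not depend on the routes chosen for the other tokens, because a route influences only the edges traversed, not which vertices are occupied afterwards. The constraints on the $k$ routes are therefore independent of one another. The $i$th move is valid if and only if there is a path from $s_i$ to $t(s_i)$ in the graph $G-O_i$ to which $s_i$ is added back. The target $t(s_i)$ must itself be unoccupied, that is, $t(s_i)\notin O_i$; this is already encoded in the definition of $O_i$, but it must be checked explicitly when a target coincides with some source $s_j$ with $j>i$. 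A source that is its own target is handled as a trivial move, or is excluded according to the paper's definition.

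Each test is a single BFS, costing $\mathcal{O}(n+m)$. A pair $(M,\sigma)$ is accepted if and only if all $k$ tests succeed, and in that case the BFS paths form a solution. Correctness in the other direction follows directly: any solution provides some matching and some order, and these appear among our guesses. For that guess, the solution's routes witness that every connectivity test succeeds. The total running time is $\mathcal{O}((k!)^2\cdot k\cdot(n+m)+|I|)$, which is FPT in $k$.

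The only subtle step is the independence claim. We must confirm that the problem's definition of a move (no vertex of the route other than its first is occupied, the target included) depends only on the current positions of tokens, and that tokens which have already moved sit exactly on their targets. Both hold because each token moves at most once. If one wants a sharper bound, the guessing of $\sigma$ could be replaced by a greedy choice of any currently movable token; but showing that greedy choice is safe would require an exchange argument, so the plain $(k!)^2$ enumeration is the route I would commit to.
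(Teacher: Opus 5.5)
Your proposal is correct and follows the paper's proof essentially step for step. Both enumerate the at most $k!$ compatible bijections and the $k!$ orderings of $S$. Both observe that the occupied vertices at the $i$th move are the matched targets of earlier sources and the later sources, whatever routes were used. Both then test each move by a breadth-first search in $G$ with those vertices deleted, for a total running time of $(k!)^2\cdot|I|^{\mathcal{O}(1)}$.
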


Finally, we return to trees for our main algorithmic results.
Theorem~\ref{thm:sol_multiplicity_hardness} shows that bounding the solution
edge multiplicity of the sought solution, even together with the maximum
degree, does not make the
problem tractable, and Theorem~\ref{thm:para_np_tree} shows the same for the
candidate edge multiplicity alone. Bounding the candidate edge multiplicity
together with the maximum degree, however, does. Each edge of a tree
separates the tree into two sides, and the pairs counted for that edge are
exactly the compatible pairs with a source on one side and a target on the
other. When the maximum degree is also bounded, only a bounded number of
compatible pairs cross the edges incident with any one vertex, so the
problem can be solved by dynamic programming over the tree.

\begin{restatable}{theorem}{treeCandidateMultiplicityFpt}
\label{thm:tree_candidate_multiplicity_fpt}
\textsc{Single-Move Labeled Token Routing} on a tree instance $I=(G,S,T,L)$ can
be solved in time $2^{\mathcal{O}(\Delta\cdot \cm(I)^2)}\cdot |I|^{\mathcal{O}(1)}$,
where $\Delta$ is the maximum degree of $G$ and $\cm(I)$ is the candidate edge
multiplicity of $I$. Consequently, the problem is fixed-parameter tractable when
parameterized by $\Delta+\cm(I)$.
\end{restatable}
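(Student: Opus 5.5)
We root the tree $G$ at an arbitrary vertex $r$ and run a bottom-up dynamic program over the subtrees $G_v$. A solution on a tree is a matching $M$ of compatible pairs together with a movement order. By Observation~\ref{obs:precedence-acyclicity}, the order exists if and only if the precedence digraph on the matched pairs is acyclic. In this digraph, pair $p$ must precede pair $q$ when the target of $p$ lies on the path of $q$ (so $q$ can only move after $p$ has vacated it, or rather after $p$'s target is filled, the orientation depending on which endpoint is involved), or when the source of $q$ lies on the path of $p$. Every arc of this digraph is witnessed by a vertex shared by two paths.

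The interface at an edge $e=vu$, with $u$ the parent of $v$, consists of the matched pairs whose path crosses $e$. There are at most $\cm(I)$ candidate pairs crossing $e$, so the choice of crossing matched pairs has at most $2^{\cm(I)}$ options. The state stored at $e$ is this chosen subset $C_e$, together with the \emph{reachability relation} among pairs of $C_e$ induced by precedence paths that run entirely through pairs with paths inside $G_v$ or crossing $e$. This relation is a digraph on at most $\cm(I)$ elements, so it has $2^{\mathcal{O}(\cm(I)^2)}$ options. Its role is to record whether some chain of precedences inside $G_v$ leads from crossing pair $p$ to crossing pair $q$. A state is \emph{feasible} if some matching of the sources and targets of $G_v$ is consistent with $C_e$, in the sense that each source and target of $G_v$ is matched exactly once, either internally or via $C_e$, and its restricted precedence digraph is acyclic with exactly this reachability on $C_e$.

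At a vertex $v$ with children $c_1,\dots,c_d$, where $d\le\Delta$, we combine one feasible state per child edge with a choice of the pairs crossing the parent edge. The pairs crossing any edge at $v$ number at most $\Delta\cdot\cm(I)$, so enumerating all combinations costs $2^{\mathcal{O}(\Delta\cdot\cm(I)^2)}$. For each combination, we perform the following steps.
\begin{enumerate}
\item Check consistency of the chosen pairs. A pair crossing two edges at $v$ must be chosen on both or on neither. A pair ending at $v$ (source or target equal to $v$) must be chosen on exactly its one edge. Together these ensure that $v$'s token or target is matched exactly once.
\item Add the new precedence arcs witnessed at $v$. These are the arcs between all pairs whose paths contain $v$; they are computable because all such pairs are among the chosen crossing pairs.
\item Take the union of the children's reachability relations with these new arcs, and compute the transitive closure on the $\mathcal{O}(\Delta\cm(I))$ pairs.
\item Reject the combination if the closure has a cycle.
\item Otherwise, project the closure onto $C_{(v,\mathrm{parent})}$ to obtain the parent state.
\end{enumerate}
At the root we accept if some feasible state with empty crossing set exists.

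The main obstacle is correctness of the summary, namely that projecting reachability loses nothing. We would argue this by showing that any cycle in the global precedence digraph either lies inside a subtree or passes through pairs crossing the boundary edge. The key fact is that each arc is witnessed at a vertex, and a vertex on a subtree side can only relate pairs containing it. Hence a cycle leaving $G_v$ must re-enter through a pair crossing $e$, and its internal segments are captured by the stored reachability. Given this, the usual exchange argument shows that any two partial solutions with the same state are interchangeable.

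Because arcs are only added at vertices and the combine step closes transitively, there is no double-counting. The total running time is $n$ times the combination cost, which is $2^{\mathcal{O}(\Delta\cdot\cm(I)^2)}\cdot|I|^{\mathcal{O}(1)}$.
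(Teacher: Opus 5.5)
Your dynamic program is essentially the paper's. The paper also stores, for each boundary edge $e_v$, the set $X\subseteq\cand(v)$ of matched pairs crossing it, together with the order induced on $X$ by reachability through arcs witnessed in $G_v$. It combines child entries at $v$ by adding the arcs witnessed at $v$, rejecting cycles and projecting onto the pairs crossing $e_v$, and its running-time count is the same as yours.

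\textbf{Gap: the self-pair.} Your combination step omits one case, and as written this makes the algorithm incorrect: a vertex $v\in S\cap T$ matched to itself.
\begin{itemize}
\item The pair $(v,v)$ crosses no edge, so it never appears among your chosen crossing pairs.
\item Hence your assertion in step~2, that every pair whose path contains $v$ is among the chosen crossing pairs, fails for it.
\item Your requirement that the token and the target at $v$ be matched exactly once through chosen pairs then rejects yes-instances. For example, take a path on which every vertex is both a source and a target and carries a label private to itself; the all-self-pair matching solves it, but your DP rejects it.
\end{itemize}
The fix is the one the paper makes. When $(v,v)$ is compatible, allow it as an additional option at $v$. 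It keeps $v$ occupied throughout, so it forms a $2$-cycle with every pair whose path contains $v$. The option is therefore admissible only if no chosen pair at $v$ has $v$ as an endpoint or as an internal vertex of its path.

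Relatedly, state the requirement at $v$ as an explicit check: exactly one chosen pair has source $v$ if $v\in S$, and exactly one has target $v$ if $v\in T$. Your two consistency rules do not imply this, since two children may each supply a pair with target $v$.

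\textbf{Minor slip: arc orientation.} In your first paragraph, both precedence rules are reversed. If $t_p$ lies on $P_q$, then $q$ must move before $p$; if $s_q$ lies on $P_p$, then again $q$ must move before $p$. This is harmless for the algorithm, because reversing every arc yields the transpose of the digraph of Observation~\ref{obs:precedence-acyclicity}. The transpose has the same cycles and the transposed reachability relation. The description should still be corrected.
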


Neither parameter of Theorem~\ref{thm:tree_candidate_multiplicity_fpt} can be
omitted: by Theorem~\ref{thm:para_np_tree} the problem is
\textsf{NP}-complete on trees whose candidate edge multiplicity is constant,
so parameterizing by the candidate edge multiplicity alone does not yield
fixed-parameter tractability unless $\textsf{P}=\textsf{NP}$, and by
Theorem~\ref{thm:sol_multiplicity_hardness} the problem is
\textsf{NP}-complete on trees of maximum degree three, so the same holds for
the maximum degree alone. Conversely, by
Theorem~\ref{thm:tree_candidate_multiplicity_fpt}, no \textsf{NP}-hardness
reduction to instances on trees can produce only instances of constant maximum
degree and constant candidate edge multiplicity, unless
$\textsf{P}=\textsf{NP}$.

The construction of Theorem~\ref{thm:para_np_tree} has a single vertex of
unbounded degree, and that vertex lies on the paths of unboundedly many
compatible pairs. Counting the compatible pairs at vertices instead of edges
gives the \emph{candidate vertex multiplicity}, the maximum over all vertices
of the number of compatible source-target pairs whose unique paths contain
that vertex, and the same dynamic programming algorithm shows that the
problem is fixed-parameter tractable when parameterized by this quantity
alone.

\begin{restatable}{theorem}{treeCandidateVertexMultiplicityFpt}
\label{thm:tree_cvm_fpt}
\textsc{Single-Move Labeled Token Routing} on a tree instance $I=(G,S,T,L)$
can be solved in time $2^{\mathcal{O}(\cvm(I)^2)}\cdot|I|^{\mathcal{O}(1)}$,
where $\cvm(I)$ is the candidate vertex multiplicity of $I$. Consequently,
the problem is fixed-parameter tractable when parameterized by $\cvm(I)$.
\end{restatable}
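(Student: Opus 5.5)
Theorem~\ref{thm:tree_cvm_fpt} will follow from the dynamic program behind Theorem~\ref{thm:tree_candidate_multiplicity_fpt}, reorganized so that the state size at each vertex is bounded by $\cvm(I)$ rather than by $\Delta\cdot\cm(I)$. The plan is to root the tree and process vertices bottom-up, as in the proof of Theorem~\ref{thm:tree_candidate_multiplicity_fpt}. For each vertex $v$ with parent edge $e=vp$, let $P_v$ be the set of compatible pairs whose paths contain $v$. We have $|P_v|\le\cvm(I)$, and the pairs crossing $e$ form a subset of $P_v\cup P_p$.

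The DP entry for $e$ records three things: which crossing compatible pairs are chosen in the matching, the direction in which each chosen pair crosses, and a partial order. This partial order is the one that the chosen moves inside the subtree of $v$ impose on the chosen crossing moves. Concretely, it is the transitive closure, restricted to the crossing pairs, of the precedence digraph of Observation~\ref{obs:precedence-acyclicity}. There are at most $2^{\cvm(I)}$ choices of subsets, and at most $2^{\mathcal{O}(\cvm(I)^2)}$ relations on a set of at most $\cvm(I)$ pairs. The table size is therefore $2^{\mathcal{O}(\cvm(I)^2)}$ per edge, and all subtree sources and targets are either matched internally or accounted for by the crossing set.

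The key step is the combine operation at a vertex $v$ with many children, since here we can no longer afford the $\Delta$ factor of Theorem~\ref{thm:tree_candidate_multiplicity_fpt}. Every pair that crosses some child edge of $v$ or the parent edge of $v$ has a path through $v$, so the union of all these crossing sets lies in $P_v$. Its size is therefore at most $\cvm(I)$, even though the number of children is unbounded. I would therefore not guess a table entry for each child independently. Instead, I would guess globally at $v$ the chosen subset $M\subseteq P_v$ and a relation on $M$, giving $2^{\mathcal{O}(\cvm(I)^2)}$ guesses. Each child whose edge is crossed by no pair of $M$ only needs to be internally feasible with empty crossing, and this is a single table lookup. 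Each remaining child is crossed by a pair of $M$, and there are at most $2|M|\le 2\cvm(I)$ such children. For each guess, we check each of these children's table entries for consistency with the projection of $M$ and of the relation. We also check that the union of the children's orders together with the constraints arising at $v$ is acyclic. These constraints at $v$ are the following: a token sitting at $v$ blocks every path through $v$ until it moves, a target at $v$ blocks every path through $v$ once filled, and so on. Acyclicity is tested via Observation~\ref{obs:precedence-acyclicity} on a digraph of size $\mathcal{O}(\cvm(I))$. Finally, we check that every source or target at $v$ is covered exactly once and that compatibility holds. The resulting projection to the parent edge gives the table entry.

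The main obstacle is the correctness of the interface claim: that the transitive precedence relation restricted to the crossing pairs suffices to glue subtree solutions. In other words, a cycle in the global precedence digraph must always be witnessed within this summarized information. I would prove this by exchange. Take any cycle in the global precedence digraph and cut it at the edges of the tree. Each maximal segment inside a subtree starts and ends at pairs crossing that subtree's boundary edge, because precedence arcs only relate pairs whose paths intersect. Hence each segment is captured by the recorded order. Verifying that precedence between two pairs is decided locally at a common vertex of their paths, so that the argument above is valid, is the delicate point I expect to spend the most effort on. Once this is done, the running time is $2^{\mathcal{O}(\cvm(I)^2)}\cdot|I|^{\mathcal{O}(1)}$, as claimed.
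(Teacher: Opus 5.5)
Your proposal is correct up to one detail, and its substance coincides with the paper's proof. You use the same states $(X,\prec)$ as in Section~\ref{sec:treefpt}. Your gluing argument is the paper's Claim~\ref{clm:prec-cstr}: every arc is witnessed at an endpoint of one pair lying on the other's path, and a path meeting $G_u$ has an endpoint in $G_u$. Your counting fact is Lemma~\ref{lem:cvm-candidate-bounds}: every pair crossing $e_v$ or a child edge $e_u$ has its path through $v$ and uses at most two edges at $v$. (A small slip: the pairs crossing $e=vp$ lie in $P_v\cap P_p$, not merely in $P_v\cup P_p$.)

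Where you differ is the combine step, and there the restructuring is unnecessary. You think the per-child enumeration must be replaced by a global guess at $v$ to avoid the factor $\Delta$. The paper instead runs the algorithm of Section~\ref{sec:treefpt} unchanged. A child with $\cand(u)=\emptyset$ has exactly one entry, so the number of tuples of child entries is at most $\prod_u 2^{|\cand(u)|^2+|\cand(u)|}$. Since $\sum_u|\cand(u)|\le 2\cvm(I)$ and $|\cand(u)|\le\cvm(I)$, the child part of the exponent is at most $2\cvm(I)^2+2\cvm(I)$, however many children $v$ has. The factor $\Delta$ entered only through the cruder bound in Lemma~\ref{lem:tree-candidate-multiplicity-running-time}, not through the algorithm itself.

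The detail to fix if you keep your version: do not test the children against the projection of one guessed relation $R$ on $M$. In general $R|_{X_u}\neq\prec_u$, because $R$ also carries orders from other children and arcs witnessed at $v$. For example, two pairs both joining $G_u$ to $G_w$ may be ordered by $\prec_u$ but not by $\prec_w$, and no single $R$ projects exactly onto both. An exact-projection test would therefore reject valid combinations. Instead, enumerate one true entry with $X_u=M\cap\cand(u)$ for each of the at most $2\cvm(I)$ crossed children. This gives at most $2^{\sum_u|X_u|^2}\le 2^{2\cvm(I)^2}$ tuples. Then build the digraph from their orders together with the arcs witnessed at $v$, exactly as the paper does.
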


\section{Preliminaries}
\label{sec:prelim}

\subsection{Graph terminology}
We use standard graph-theoretic terminology; see, e.g., Diestel~\cite{Diestel}.
All graphs are finite, simple, and undirected unless stated otherwise. For a
graph $G$ we write $V(G)$ and $E(G)$ for its vertex set and edge set, and we set
$n=|V(G)|$ and $m=|E(G)|$; for a path $P$ we write $V(P)$ and $E(P)$. For a
positive integer $i$ we write $[i]=\{1,\ldots,i\}$, and for an instance
$I$ of a decision problem we write $|I|$ for the length of its encoding.

A \emph{grid} is a graph whose vertex set is the Cartesian product of two
finite nonempty sets of consecutive integers, two vertices being adjacent if
and only if they are at Euclidean distance one. Its \emph{width} $w$ and
\emph{height} $h$ are the sizes of the first and second of these sets, and we
call it an \emph{$h\times w$ grid}. A \emph{grid graph} is a subgraph of a
grid, and an \emph{$h\times w$ grid graph} is a subgraph of an $h\times w$
grid. The \emph{treewidth} of a graph measures how far the graph is from being a
tree, and its \emph{pathwidth} how far it is from being a path; we refer the reader to
Diestel~\cite{Diestel} for the formal definitions.

Let $G$ be a planar graph. A \emph{plane drawing} of $G$ consists of an
injective map assigning to each vertex of $G$ a point of $\mathbb{R}^2$ and, to
each edge $uv\in E(G)$, a simple curve from the point representing $u$ to the
point representing $v$, such that the interior of each edge curve is disjoint
from the points representing vertices and from the interiors of all other edge
curves. An \emph{orthogonal planar representation} of $G$ is a plane drawing in
which every edge is represented by a simple polygonal path all of whose segments
are horizontal or vertical. An orthogonal planar representation is an
\emph{orthogonal grid drawing} if the points representing vertices and the
endpoints of all segments representing edges have integer coordinates. Every
planar graph of maximum degree at most four admits an orthogonal grid drawing
of polynomial size that can be computed in polynomial
time~\cite{Tamassia1987,TamassiaTollisVitter2000}.

\subsection{Instances, moves, and solutions}
Let $G$ be a graph and let $S,T\subseteq V(G)$ with $|S|=|T|$. The vertices of
$S$ are \emph{source vertices}, the vertices of $T$ are \emph{target vertices},
and the vertices of $V(G)\setminus(S\cup T)$ are \emph{ordinary vertices}. At
any point in time, a vertex is \emph{occupied} if it contains a token and
\emph{unoccupied} otherwise. Initially each source holds one token, so
precisely the vertices of $S$ are occupied. In figures, source vertices are
drawn as filled discs, target vertices as hollow squares, ordinary vertices as
hollow discs, and vertices that are both sources and targets as filled discs
enclosed in squares.

Let $\Lambda$ be a finite set of \emph{labels}. A \emph{labeling} is a pair
$L=(L_S,L_T)$ of functions $L_S\colon S\to 2^{\Lambda}$ and
$L_T\colon T\to 2^{\Lambda}$; the sets $L_S(s)$ and $L_T(t)$ are the \emph{label
sets} of the source $s$ and of the target $t$.
A source $s\in S$ and a target $t\in T$ are \emph{$L$-compatible} if
$L_S(s)\cap L_T(t)\neq\emptyset$; when $L$ is clear from context we simply say
\emph{compatible}. The set of compatible pairs $(s,t)\in S\times T$ is the
\emph{compatibility relation} of $L$. A label $\lambda\in\Lambda$ is \emph{private} if it belongs to the
label set of exactly one source and of exactly one target. An
\emph{$L$-compatible matching} is a set $M\subseteq S\times T$ in which every
source of $S$ and every target of $T$ appears in exactly one pair, and every
pair of which is $L$-compatible.

For a pair $(s,t)\in S\times T$, a \emph{route} from $s$ to $t$ (an \emph{$s$-$t$
route}) is a simple $s$-$t$ path in $G$, possibly \emph{trivial}, that is,
consisting of the single vertex $s=t$; the word
\emph{route} designates the path along which a token is moved in a
solution. A \emph{route
family} for a compatible matching $M$ is a family
$\Pi=\{P_{s,t}:(s,t)\in M\}$ in which each $P_{s,t}$ is an $s$-$t$ route; for a
pair $q=(s,t)$ of $M$ we also write $P_q$ for $P_{s,t}$, $\src(q)=s$, and
$\tgt(q)=t$. A \emph{movement
order} for $M$ is an ordering of the pairs of $M$.

At a given point in time, a route $P_{s,t}$ is \emph{clear} if no vertex of
$V(P_{s,t})\setminus\{s\}$ is occupied at that time; in particular, the
route of a pair with $s=t$ is always clear. A \emph{move} takes the
token occupying $s$ to $t$ along a clear $s$-$t$ route; after the move, $t$ is
occupied and, if $s\neq t$, $s$ is unoccupied. In
this paper every token is moved at most once, whence the name of the problem.
Let $\pi$ be a movement order for $M$ and let $\Pi$ be a route family for $M$.
The move of a pair $q=(s,t)$ of $M$ is \emph{feasible} if $P_q$ is clear once
all pairs preceding $q$ in $\pi$ have been moved, at which point the occupied
vertices other than $s$ are precisely the targets of the pairs before $q$ and
the sources of the pairs after $q$. The order $\pi$ is \emph{feasible for
$\Pi$} if the move of every pair of $M$ is feasible. A \emph{solution} consists of a compatible matching
$M$, a route family $\Pi$ for $M$, and a movement order feasible for $\Pi$.

\begin{tcolorbox}[colback=white, colframe=black, arc=2mm, boxrule=0.3mm,
  left=2mm, right=2mm, top=1mm, bottom=1mm]
\textsc{Single-Move Labeled Token Routing}\\[0.5ex]
\textbf{Input:} A graph $G$, sets $S,T\subseteq V(G)$ with $|S|=|T|$, one token on each vertex of $S$, a finite
set $\Lambda$ of labels, and a labeling $L=(L_S,L_T)$ with
$L_S\colon S\to 2^{\Lambda}$ and $L_T\colon T\to 2^{\Lambda}$.\\[0.5ex]
\textbf{Question:} Do there exist a compatible matching
$M\subseteq S\times T$, a route family $\Pi=\{P_{s,t}:(s,t)\in M\}$ for $M$, and
a movement order feasible for $\Pi$?
\end{tcolorbox}

We write an instance as $I=(G,S,T,L)$, leaving the label set $\Lambda$
implicit, and call it a \emph{tree instance} if $G$ is a tree.

The problem belongs to \textsf{NP}: a compatible matching, a route family for
it, and a movement order form a certificate of size polynomial in $|I|$, and
simulating the moves in that order, while maintaining the set of occupied
vertices, checks in polynomial time that every move is feasible. For each
\textsf{NP}-completeness result below we therefore prove only hardness.

\subsection{Special cases and precedence digraphs}
An instance is \emph{all-matching} if every source is compatible with every
target, and \emph{unique-matching} if each source is compatible with exactly
one target and each target with exactly one source.

Let $M$ be a compatible matching and let $\Pi$ be a route family for $M$.
The \emph{precedence digraph} $D(M,\Pi)$ has vertex set $M$. For
distinct pairs $p=(s_p,t_p)$ and $q=(s_q,t_q)$ of $M$, an arc $(p,q)$
means that the move of $p$ must precede the move of $q$, and is added
in two cases. If $s_p\in V(P_q)$, then the token at $s_p$ must leave
before the route $P_q$ is used. If $t_q\in V(P_p)$, then the token
moving to $t_q$ may arrive only after the route $P_p$ has been used.

The characterization below is the acyclicity test referenced in
Section~\ref{sec:related}; we prove it in our notation because the dynamic
programming algorithm of Section~\ref{sec:positive} relies on it.

\begin{observation}\label{obs:precedence-acyclicity}
Let $M$ be a compatible matching and let $\Pi$ be a route family for $M$. There
exists a feasible movement order for $\Pi$ if and only if $D(M,\Pi)$ is acyclic.
\end{observation}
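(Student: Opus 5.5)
Both directions follow the standard argument for precedence constraints: a feasible order is a topological order of $D(M,\Pi)$, and conversely. The plan is to show that a movement order $\pi$ is feasible for $\Pi$ if and only if every arc $(p,q)$ of $D(M,\Pi)$ has $p$ before $q$ in $\pi$. Given this, a feasible order exists exactly when $D(M,\Pi)$ admits a topological ordering, that is, exactly when it is acyclic.

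\emph{Arcs are necessary.} Let $\pi$ be feasible and let $(p,q)$ be an arc.
\begin{itemize}
\item Suppose the arc comes from $s_p\in V(P_q)$ with $p\neq q$. Since $M$ is a matching, $s_p\neq s_q$, so $s_p\in V(P_q)\setminus\{s_q\}$. If $q$ preceded $p$, then at the time of $q$'s move the token of $p$ would still sit on $s_p$. The occupied vertices other than $s_q$ include the sources of later pairs, so $P_q$ would not be clear.
\item Suppose the arc comes from $t_q\in V(P_p)$. If $q$ preceded $p$, then $t_q$ would be occupied when $p$ moves, since the targets of earlier pairs are occupied. We need $t_q\neq s_p$ to conclude that $P_p$ is not clear. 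If $t_q=s_p$, then the same vertex is also a source lying on $P_q$ (as $t_q\in P_q$), and it is not $s_q$. So the first case gives the arc $(p,q)$ as well, and that case already forces $p$ before $q$.
\end{itemize}
Either way $p$ precedes $q$.

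\emph{Topological orders are feasible.} Let $\pi$ be a topological order of $D(M,\Pi)$ and consider the move of a pair $q$. The vertices occupied other than $s_q$ are of two kinds.
\begin{itemize}
\item Sources $s_p$ of pairs $p$ after $q$. If such an $s_p$ were on $P_q$, the arc $(p,q)$ would put $p$ before $q$, a contradiction.
\item Targets $t_p$ of pairs $p$ before $q$. If such a $t_p$ were on $P_q$, the arc $(q,p)$ would put $q$ before $p$, a contradiction.
\end{itemize}
Hence $P_q$ is clear at its move and the move is feasible.

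The main obstacle is bookkeeping around vertices that are both a source and a target, and around trivial routes. A vertex $v=t_p=s_{p'}$ may be occupied at different times by different tokens. The clearness condition already excludes $s_q$ itself, and for a trivial route ($s_q=t_q$) the arcs only relate $q$ to pairs whose routes pass through $s_q$; these cases need to be checked so that no spurious self-loops or missed constraints arise. Apart from that, both directions reduce to the fact that a digraph has a topological ordering if and only if it is acyclic.
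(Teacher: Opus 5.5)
Your proof is correct and is essentially the paper's argument. A feasible order respects every arc, with the subcase $t_q=s_p$ reduced to the source case exactly as in the paper; conversely, a topological ordering of $D(M,\Pi)$ is feasible, because any occupied vertex on a route would give an arc pointing the wrong way (and your closing bookkeeping concerns are already covered, since arcs join only distinct pairs and a trivial route is always clear).
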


\begin{proof}
Suppose that $\pi$ is a feasible movement order for $\Pi$, and let $(p,q)$ be
an arc of $D(M,\Pi)$. If $s_p\in V(P_q)$, then $s_p$ is a vertex of $P_q$
other than $s_q$, occupied until $p$ is moved, so $p$ precedes $q$. If
$t_q\in V(P_p)$, then either $t_q=s_p$, and the previous case applies, or
$t_q$ is a vertex of $P_p$ other than $s_p$, occupied once $q$ is moved, so
again $p$ precedes $q$. Hence, every arc goes from an earlier to a later pair
of $\pi$, and $D(M,\Pi)$ has no directed cycle.

Conversely, suppose that $D(M,\Pi)$ is acyclic and let $\pi=(q_1,\ldots,q_k)$
be an ordering of the pairs of $M$ in which every arc of $D(M,\Pi)$ goes from
an earlier pair to a later pair; such a topological ordering exists
because $D(M,\Pi)$ is acyclic. Consider a pair $q_i$ of this ordering
and write $s_i$ for its source. If the route of $q_i$ contains the source of
some pair $q_j$ with $j\neq i$ other than at $s_i$, then $(q_j,q_i)$ is an arc of the
precedence digraph, so $j<i$ and that source has already been vacated. If the
route of $q_i$ contains the target of some pair $q_j$ with $j\neq i$ other than at $s_i$,
then $(q_i,q_j)$ is an arc, so $i<j$ and that target is not yet occupied.
Hence, when the token at $s_i$ is moved, every vertex of the route of $q_i$
other than $s_i$ is unoccupied: the route is clear. Since this holds for
every $i$, the movement order $\pi$ is feasible.
\end{proof}

\subsection{Parameterized complexity and multiplicity}
We use standard parameterized complexity
terminology~\cite{cygan2015parameterized}. A parameterized problem is
\emph{fixed-parameter tractable} if there is a computable function $f$ such
that every instance $I$ with parameter $\kappa$ can be solved in time
$f(\kappa)\cdot|I|^{\mathcal{O}(1)}$; the class of all fixed-parameter
tractable problems is denoted $\textsf{FPT}$. A \emph{parameterized
reduction} from a parameterized problem $Q$ to a parameterized problem $Q'$
is an algorithm that maps every instance $(I,\kappa)$ of $Q$ to an instance
$(I',\kappa')$ of $Q'$ that is a yes-instance if and only if $(I,\kappa)$
is, and that does so in time $f(\kappa)\cdot|I|^{\mathcal{O}(1)}$ and with
$\kappa'\le f(\kappa)$, for some computable function $f$. A parameterized
problem is \emph{$\textsf{W}[1]$-hard} if every problem of the class
$\textsf{W}[1]$ admits a parameterized reduction to it; unless
$\textsf{FPT}=\textsf{W}[1]$, no $\textsf{W}[1]$-hard problem is
fixed-parameter tractable~\cite{cygan2015parameterized}.

The first of our three multiplicity measures is defined for route families in
arbitrary graphs. For a route family $\Pi$, the \emph{solution edge multiplicity}
of $\Pi$ is
$\solc(\Pi)=\max_{e\in E(G)}|\{P\in\Pi:e\in E(P)\}|$. The solution edge
multiplicity of a solution is that of its route family. In
Theorem~\ref{thm:sol_multiplicity_hardness} we parameterize by an upper
bound on this quantity: the input contains, in addition to an instance $I$,
a nonnegative integer $\mu$, the question is whether $I$ admits a solution
with solution edge multiplicity at most $\mu$, and the parameter is $\mu$.

The second measure is defined only for tree instances $I=(G,S,T,L)$, in which
every pair $(s,t)\in S\times T$ is joined by a unique path $P_{s,t}$. For an
edge $e\in E(G)$, the \emph{candidate edge multiplicity} of $e$ is
$\cm_I(e)=|\{(s,t)\in S\times T: s\text{ and }t\text{ are compatible and }e\in
E(P_{s,t})\}|$, and the \emph{candidate edge multiplicity} of $I$ is
$\cm(I)=\max_{e\in E(G)}\cm_I(e)$. The third measure counts the same pairs at
vertices instead of edges. For a vertex $v\in V(G)$, the \emph{candidate
vertex multiplicity} of $v$ is $\cvm_I(v)=|\{(s,t)\in S\times T: s\text{ and
}t\text{ are compatible and }v\in V(P_{s,t})\}|$, and the \emph{candidate
vertex multiplicity} of $I$ is $\cvm(I)=\max_{v\in V(G)}\cvm_I(v)$. The name
\emph{multiplicity} is borrowed from the literature on coordinating the movement of robots
along paths given as part of the input, where the \emph{vertex multiplicity}
of an instance is the maximum number of these paths through a single
vertex~\cite{DBLP:conf/atal/AbrahamsenGHU23}.

For every route family $\Pi$ of a solution of a tree instance $I$ we have
$\solc(\Pi)\le \cm(I)$, since every route of a solution is the unique path of some
compatible pair. The ratio between the two can nevertheless be linear in the
number $k$ of tokens. Consider an all-matching instance on a star whose leaves
are $k$ sources and $k$ targets, with $S\cap T=\emptyset$. Every source is
compatible with every target, and the path from a source leaf to any target
leaf uses the edge incident with that source leaf. That edge therefore lies on
the paths of $k$ compatible pairs, and $\cm(I)=k$. In every solution each source
is matched with exactly one target, so every edge lies on exactly one route
of the solution, and every solution has solution edge multiplicity one.

\subsection{The all-matching case}
We close the preliminaries with the all-matching case, the starting point for
our study of compatibility constraints. When the matching is unrestricted, the
matching, the routes, and the movement order can be chosen simultaneously, by a
procedure that repeatedly deletes a leaf of a spanning tree. The two nontrivial
cases are a leaf that is a target, which is filled from a closest remaining
source and whose move is placed \emph{before} all moves produced later, and a
leaf that is a source, which is emptied into a closest remaining target and
whose move is placed \emph{after} all moves produced later; in each case the
minimality of the choice is what makes the route clear at the time it is
used. The result below is due to C\u{a}linescu, Dumitrescu, and
Pach~\cite{Calinescu2008Reconfigurations}, and a stronger form, in which the
matching is additionally required to minimize the total distance traveled
by the tokens, is also known~\cite{ElSabeh2023ARO}.

\allMatching*

\begin{proof}
Let $I$ be an all-matching instance on a graph $G$ with source set $S$ and
target set $T$, where $|S|=|T|$.

If $G$ is disconnected, then every route lies in a single connected component,
so every component must contain the same number of sources as targets; if this
fails, the instance is a no-instance, and otherwise it suffices to solve the
instance independently in each component and concatenate the resulting movement
orders. We may therefore assume that $G$ is connected.

Compute an arbitrary spanning tree $F$ of $G$. Since every route in $F$ is also
a route in $G$, it is enough to construct a solution whose routes use only
edges of $F$. We describe a recursive procedure on the current tree $F$ together
with the current source set $S$ and target set $T$, maintaining the invariant
that $F$ is connected and $|S|=|T|$. Let $v$ be a leaf of $F$.

\begin{enumerate}
\item \textbf{Case $v\notin S\cup T$.} The leaf $v$ is irrelevant, as no route
between two remaining vertices uses it. Delete $v$ from $F$ and continue.

\item \textbf{Case $v\in S\cap T$.} Match $v$ to itself and use the trivial
route. Since $v$ is a leaf, it is not an internal vertex of any route between
the remaining vertices, and its move may be placed anywhere in the movement
order, since its route contains no vertex other than $v$ itself. Delete $v$
from $F$, remove it from $S$ and from $T$, and continue.

\item \textbf{Case $v\in T\setminus S$.} Choose a source $s\in S$ closest to $v$
in $F$, match $s$ to $v$, and let $P$ be the unique $s$-$v$ path in $F$. Place
this move before the moves obtained recursively; at the time it is executed,
the occupied vertices are precisely the current sources. By the choice of $s$,
no internal vertex of $P$ belongs to $S$, and $v\notin S$, so $s$ is the only
occupied vertex of $P$ and $P$ is clear. Delete $v$ from $F$, remove $s$ from
$S$ and $v$ from $T$, and continue.

\item \textbf{Case $v\in S\setminus T$.} Choose a target $t\in T$ closest to $v$
in $F$, match $v$ to $t$, and let $P$ be the unique $v$-$t$ path in $F$. By the
choice of $t$, no internal vertex of $P$ belongs to $T$. Place this move after
the moves obtained recursively, delete $v$ from $F$, remove $v$ from $S$ and $t$
from $T$, and continue. After the recursive moves have been executed, the
occupied vertices of the remaining tree are precisely the remaining targets; in
particular $t$ is still unoccupied and every internal vertex of $P$ is
unoccupied, so $P$ is clear at that time and the token at $v$ can be moved
to $t$.
\end{enumerate}

Each reduction removes at least one vertex. The case analysis shows how to
extend a feasible movement order of the reduced instance to one of the original
instance: in the case $v\in T\setminus S$ one prepends the move into $v$, and in
the case $v\in S\setminus T$ one appends the move out of $v$. By induction on
$|V(F)|$, the original instance admits a feasible movement order.

A spanning tree of $G$ can be computed in time $\mathcal{O}(n+m)$. The recursive
procedure performs at most $n$ leaf removals, and with a straightforward
implementation each step finds the closest remaining source or target by a graph
search in time $\mathcal{O}(n)$. Listing the routes takes total time
$\mathcal{O}(n^2)$, since at most $n$ moves are produced and each route has
length at most $n-1$. Hence, the whole solution is constructed in time
$\mathcal{O}(n^2+m)$.
\end{proof}

\section{Hardness under structural and compatibility constraints}
\label{sec:hardness}

We now prove the hardness results stated in Section~\ref{sec:our_contributions}.
Their purpose is to separate several possible sources of difficulty in
\textsc{Single-Move Labeled Token Routing}. Section~\ref{sec:planar} shows
that the problem is hard on planar graphs of bounded degree and on grids,
even when the compatible matching is unique and the instances, when
solvable, admit a solution with pairwise edge-disjoint routes. Hardness
therefore arises neither from the need to choose a matching nor from any
need for two routes to share an edge. We then turn to trees, where the
route of every compatible pair is forced. Section~\ref{sec:tree-hardness}
shows that on trees of maximum degree three, choosing a matching is
\textsf{W}[1]-hard when parameterized by an input bound on the solution edge
multiplicity, even when a movement order is given as part of the input, and
Section~\ref{sec:cm-hardness} shows that on trees of unbounded degree, the
problem is \textsf{NP}-complete even when the candidate edge multiplicity is
at most eight.

\subsection{Reducing the degree of ordinary vertices}
\label{sec:degree-reduction}
For ease of understanding, we first describe each construction using ordinary
vertices of high degree and then reduce the degree by the operation defined
next, which replaces such a vertex by a tree of ordinary vertices of maximum
degree three. The resulting instance is a yes-instance if and only if the
original one is.

Let $u$ be an ordinary vertex of a graph $G$ with $d\ge 4$ neighbors, and let
$F_u$ be a tree of maximum degree three with $d$ leaves and at most $2d$
vertices. The \emph{expansion} of $u$ \emph{by} $F_u$ replaces $u$ by $F_u$: it
deletes $u$, adds a disjoint copy of $F_u$ whose vertices are all ordinary, and
joins the $d$ leaves of the copy to the $d$ neighbors of $u$, one leaf to each
neighbor. The resulting graph is denoted $G-u+F_u$.

\begin{lemma}\label{lem:subcubic-expansion}
Let $(G,S,T,L)$ be an instance of \textsc{Single-Move Labeled Token Routing},
let $u\in V(G)$ be an ordinary vertex of degree at least four, and let $G'=G-u+F_u$
be obtained by expanding $u$ by a tree $F_u$ of maximum degree three with as
many leaves as $u$ has neighbors. Then $(G,S,T,L)$ is a
yes-instance if and only if $(G',S,T,L)$ is a yes-instance. Moreover, $G'$ is
a tree if $G$ is a tree, and, if $G$ is planar, the leaves can be attached
to the neighbors of $u$ so that $G'$ is planar.
\end{lemma}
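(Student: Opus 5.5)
We prove the equivalence by translating routes back and forth between $G$ and $G'$ while keeping the matching and the movement order fixed. Every vertex of $F_u$ is ordinary, as is $u$. Hence occupancy is never at stake inside $F_u$ or at $u$: at any time, the occupied vertices of $G'$ are exactly the occupied vertices of $G$ other than $u$, and $u$ itself is never occupied. Write $\ell_w$ for the leaf of $F_u$ attached to the neighbor $w$ of $u$.

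For the forward direction, take a solution $(M,\Pi,\pi)$ of $(G,S,T,L)$. Replace each route $P\in\Pi$ that passes through $u$, say entering from $w_1$ and leaving towards $w_2$, by the walk that goes from $w_1$ to $\ell_{w_1}$, then along the unique $\ell_{w_1}$-$\ell_{w_2}$ path in $F_u$, then to $\ell_{w_2}$ and on to $w_2$. The route $P$ visits $u$ at most once because it is simple, and it cannot end at $u$ because $u$ is neither a source nor a target. The new walk is a simple path: the inserted vertices lie in $F_u$, which is disjoint from the rest of $P$, and the $\ell_{w_1}$-$\ell_{w_2}$ path in the tree $F_u$ is simple. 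Routes avoiding $u$ are unchanged. Keep the same matching and the same order. A route in $G'$ is clear at a given time exactly when the original route was clear in $G$, since the added vertices are ordinary and never occupied. So the result is a solution of $(G',S,T,L)$.

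For the converse, take a solution of $(G',S,T,L)$ and consider a route $P'$. Because $F_u$ is attached to the rest of the graph only through its leaves, each maximal subpath of $P'$ inside $F_u$ enters through some leaf $\ell_{w_1}$ and leaves through some leaf $\ell_{w_2}$. Contract every such subpath to the single vertex $u$. This yields a walk in $G$ from the same source to the same target whose vertices other than $u$ are vertices of $P'$. If $u$ now appears more than once, which can happen when $P'$ enters $F_u$ several times, shortcut the walk between the first and last occurrence of $u$. We obtain a simple path $P$ with $V(P)\setminus\{u\}\subseteq V(P')$. Since $u$ is never occupied, $P$ is clear whenever $P'$ is, so the same matching and movement order remain feasible. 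The shortcutting step is the only point requiring care, and the subset inclusion settles it because clearness is monotone under taking subsets of vertices. Alternatively, one could argue through the precedence digraph of Observation~\ref{obs:precedence-acyclicity}: any arc of $D(M,\Pi)$ yields an arc of $D(M,\Pi')$, so acyclicity transfers. The direct simulation argument is simpler.

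For the structural claims, suppose $G$ is a tree. Removing $u$ leaves $d$ components, each containing exactly one neighbor of $u$. Attaching each of them to a distinct leaf of the tree $F_u$ produces a connected graph with $|V(G)|-1+|V(F_u)|$ vertices and $|E(G)|-d+|E(F_u)|+d$ edges, so it is again a tree. Now suppose $G$ is planar. Fix a plane drawing, and let $w_1,\dots,w_d$ be the neighbors of $u$ in the cyclic order of its incident edges. Draw $F_u$ in a small disc around $u$ with its leaves on the boundary circle. This can be done for any tree: order the leaves as they are met by a DFS walk around a plane embedding of $F_u$, so that they appear on the outer face in that order. Then connect the $i$-th leaf in this order to $w_i$ along the original edge $uw_i$, trimmed to lie outside the disc. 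Both the leaf order and the edge order are the same cyclic order, so no crossings are created. Hence $G'$ is planar.
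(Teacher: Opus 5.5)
Your proposal is correct and follows essentially the same argument as the paper: you reroute each route through the unique leaf-to-leaf path in $F_u$ for the forward direction, you contract $F_u$ to $u$ and shortcut between the first and last occurrence of $u$ for the converse, and you draw $F_u$ inside a small disc around $u$ for planarity. Your edge count for the tree claim and your argument that $F_u$ can be drawn with its leaves on the disc boundary in a prescribed cyclic order are only minor elaborations of steps the paper states more briefly.
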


\begin{proof}
Every vertex of $F_u$ is ordinary, and is therefore permanently unoccupied.
Consider a solution of $(G,S,T,L)$. Since $u$ is ordinary, every route
that uses $u$ contains a subpath of the form $v,u,w$, where $v$ and $w$ are
neighbors of $u$ in $G$. Replace this subpath by the path from $v$ to $w$ that
enters $F_u$ through the leaf adjacent to $v$, follows the unique path in $F_u$
to the leaf adjacent to $w$, and then proceeds to $w$. This transforms
every route into a
route of $G'$. Since the newly introduced vertices are permanently unoccupied,
every route that was clear in $G$ remains clear in $G'$. Hence, the same
matching and movement order, together with the transformed routes, form a
solution of $(G',S,T,L)$.

Conversely, consider a solution of $(G',S,T,L)$, with route family $\Pi'$. Contract
$F_u$ to a single vertex $u$. Every route in $\Pi'$ then becomes a walk in
$G$. Since the original route is simple, the resulting walk can repeat only
$u$. For each such walk, retain the subwalk from its source to its first
occurrence of $u$ and the subwalk from its last occurrence of $u$ to its
target, identifying these two occurrences as a single occurrence of $u$.
The resulting walk repeats no vertex and is therefore a simple route with
the same endpoints. Its
internal vertices, apart from the permanently unoccupied vertex $u$, form a
subset of the internal vertices of the original route. Therefore, every route
that was clear in $G'$ yields a clear route in $G$. Hence, the same matching and
movement order, together with the resulting routes, form a solution of
$(G,S,T,L)$.

If $G$ is a tree, then $G'$ is obtained from the forest $G-u$ by attaching
each of its components to a distinct leaf of $F_u$, so $G'$ is
a tree.

For planarity, fix a plane drawing of $G$ and a closed disc around $u$ so
small that it contains no vertex other than $u$, meets no edge that is not
incident with $u$, and meets each edge incident with $u$ in a single curve from
$u$ to the boundary of the disc. Let $p_1,\ldots,p_d$ be the points at which
these edges cross the boundary of the disc, in cyclic order, and let
$v_1,\ldots,v_d$ be the corresponding neighbors of $u$. Draw $F_u$ inside the
disc without crossings, with its $d$ leaves at the points $p_1,\ldots,p_d$, and
join the leaf at $p_i$ to $v_i$ by the part of the edge $uv_i$ outside the disc,
for each $i\in[d]$. This yields a plane drawing of $G'$
(Figure~\ref{fig:high_degree}).
\end{proof}

\begin{figure}
    \centering
    \includegraphics[width=0.8\linewidth]{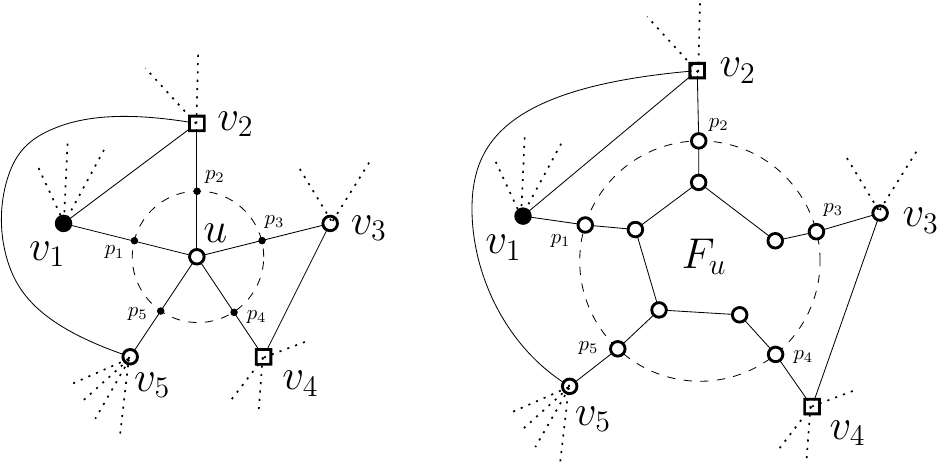}
    \caption{The expansion of an ordinary vertex $u$ of degree at
    least four. On the left, $u$ has five neighbors $v_1,\ldots,v_5$, and the
    edge $uv_i$ crosses the boundary of the dashed disc around $u$ at the
    point $p_i$. On the right, the part of the drawing inside the disc is
    replaced by a tree $F_u$ of maximum degree three, whose leaves lie at the
    points $p_1,\ldots,p_5$, and the leaf at $p_i$ is joined to $v_i$ by the
    part of $uv_i$ outside the disc; for readability, the disc is enlarged on
    the right and the rest of the drawing is stretched accordingly. None of the
    vertices of $F_u$ is a source or a target. Dotted edge segments indicate
    that the graph continues beyond the neighbors of $u$: the rest of the
    graph, which the expansion leaves unchanged, is not drawn.}
    \label{fig:high_degree}
\end{figure}

\subsection{Unique matching with edge-disjoint routes on planar graphs and grids}
\label{sec:planar}

On unique-matching instances the matching is fixed, and Geft, Halperin, and
Nakar~\cite{GeftHalperinNakar2024} show that choosing the routes is hard
on $3\times w$ grid graphs, even when every token but one has a prescribed
route. When the routes are not prescribed, every solution of their instances
has an edge that lies on at least $k/3$ of the routes, where $k$ is the
number of tokens: all
sources lie to the left of all targets, so every route crosses any vertical
line between them, and only three edges cross such a line, so one of them lies
on at least a third of the routes. We show that hardness does not require
routes that share edges: every instance constructed in this subsection that
is a yes-instance admits a solution whose routes are pairwise edge-disjoint,
that is, a solution with solution edge multiplicity one.

\begin{figure}
    \centering
    \includegraphics[width=0.5\linewidth]{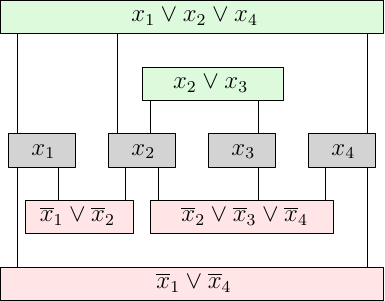}
    \caption{A rectilinear representation of the \textsc{Planar Monotone Rectilinear 3-SAT} instance $\varphi=(x_1\lor x_2\lor x_4)\land(x_2\lor x_3)\land(\overline{x}_1\lor\overline{x}_2)\land(\overline{x}_2\lor\overline{x}_3\lor\overline{x}_4)\land(\overline{x}_1\lor\overline{x}_4)$. The positive clauses are drawn above the variable line, and the negative clauses below it. A clause box is joined by a vertical segment to each of its variables, so one clause may reach several variables.}
    \label{fig:3sat}
\end{figure}

This subsection proves Theorems~\ref{thm:unique_matching_planar_hardness}
and~\ref{thm:grid_and_subgrid_hardness}, both by modifying a single
reduction that we first present on general graphs. On general graphs the
reduction yields a weaker statement than the known hardness on $3\times w$
grid graphs, but its construction and correctness proof are the basis of
both modifications, so we give them in full.

\subsubsection{The reduction on general graphs}
\label{sec:planar-general}

We reduce from \textsc{Planar Monotone Rectilinear 3-SAT}; we first define
its instances.
A clause of a CNF formula is \emph{positive} if none of its literals are
negated variables and \emph{negative} if all of them are.
A formula is \emph{monotone} if each of its clauses is positive or negative.
A \emph{rectilinear representation} of a monotone formula $\varphi$ consists
of an axis-parallel rectangle for each variable and for each clause of
$\varphi$, called the \emph{box} of that variable or clause, and of a
vertical segment for each clause $C$ and each variable occurring in $C$,
which joins the box of $C$ to the box of that variable, such that the boxes
are pairwise disjoint, the boxes of the variables all meet one horizontal
line, called the \emph{variable line}, the box of every positive clause
lies above that line and the box of every negative clause below it, and
every vertical segment meets the two boxes it joins only at its endpoints
and meets no other box and no other vertical segment
(Figure~\ref{fig:3sat}). We identify each variable and each clause with its
box.

An instance of \textsc{Planar Monotone Rectilinear 3-SAT} is a monotone CNF
formula whose clauses contain at most three literals, together with a
rectilinear representation of the formula; the question is whether the formula is
satisfiable. This problem is \textsf{NP}-complete~\cite{deBergKhosravi2012}.

Let $(\varphi,\mathcal{R})$ be such an instance, so that $\varphi$ is a
monotone CNF formula whose clauses contain at most three literals and
$\mathcal{R}$ is a rectilinear representation of $\varphi$. Throughout the
construction, $x$ denotes a variable of $\varphi$ and $C$ a clause of
$\varphi$. We assume that every clause contains two or three literals, that no two
literals of a clause have the same variable, and that every variable occurs
in some clause. All three assumptions can be made without loss of
generality. A clause with a single literal can only be satisfied by setting its
variable accordingly. Doing so, removing every clause that becomes satisfied,
deleting the literal that becomes false from every remaining clause, and repeating
this while some clause has a single literal yields, in polynomial time, a
formula that has no such clause and is satisfiable if and only if $\varphi$ is,
unless a clause becomes empty, in which case $\varphi$ is unsatisfiable; its
rectilinear representation is obtained from $\mathcal{R}$ by deleting the
boxes and segments of the removed clauses and literals. In a
monotone formula a variable occurring twice in a clause occurs twice as the
same literal, so removing the repeated literal leaves an equivalent formula with the
same rectilinear representation. A variable that occurs in no clause can be
deleted together with its box. An \emph{occurrence} is a pair consisting of a
clause and one of the literals of that clause, called the \emph{literal of the
occurrence}; the \emph{variable of the occurrence} is the variable of its
literal, and we say that the occurrence is an occurrence \emph{of} that
variable and \emph{in} that clause. Occurrences in distinct clauses are
distinct even when their literals are equal, so the number of occurrences is
the sum of the sizes of the clauses. We identify a clause $C$ with the set of
its occurrences and write $o\in C$. An occurrence $o$ is \emph{positive} if
its literal is its variable and \emph{negative} if its literal is the
negation of its variable, and $o$ is \emph{true} or \emph{false} under a
truth assignment according to the value of its literal.

To represent $\varphi$ as a graph, we create tokens and gadgets for each variable and each clause. A variable gadget contains two paths for the \emph{assignment token}, one corresponding to setting the variable to \texttt{true} (the \emph{true path} for the variable), and the other to setting the variable to \texttt{false} (the \emph{false path} for the variable). In a clause gadget, there are at most three possible paths for the \emph{clause token}, one corresponding to each occurrence in the clause.

Additional tokens tie the paths available to a clause token to the paths
taken by the assignment tokens. For each occurrence, a \emph{blocking
token} has its source in the variable gadget, on the true path for a
negative occurrence and on the false path for a positive occurrence, and
its target on the path of the clause gadget associated with that
occurrence. When the assignment token moves along the true path or the
false path, every blocking token whose source lies on that path must move
out of its way, into the clause gadget of its occurrence, where it blocks
the path of that occurrence. For each occurrence, a \emph{delay token} has
its source in the clause gadget, on every path from the source of the
clause token to its target, and its target in the variable gadget, on the
same path as the source of the blocking token. The clause token therefore
moves only after the assignment tokens that pass the targets of its delay
tokens, by which time the blocking tokens on the paths of those
assignment tokens have reached the clause gadget.

\smallskip
\noindent\textbf{Construction.}
For a token $z$, we write $\src(z)$ and $\tgt(z)$ for its source and its
target. Throughout this subsection the letters $a$, $b$, $c$, and $d$, with
subscripts, denote tokens.

We make use of the following types of tokens, each named by the initial of
its role: an \emph{assignment token} $a_x$ for each variable $x$, a
\emph{blocking token} $b_o$ and a \emph{delay token} $d_o$ for each
occurrence $o$, and a \emph{clause token} $c_C$ for each clause $C$. When
the occurrences of one clause are numbered $o_1,o_2,\ldots$, we write $b_i$
and $d_i$ for the blocking token and the delay token of $o_i$.

Figure~\ref{fig:unique_m_hardness} shows the part of the construction arising
from one clause with three occurrences and from the three variables occurring in it. We describe the variable gadgets, the clause gadgets, and the occurrence edges in turn.

\begin{figure}
    \centering
    \includegraphics[width=0.6\linewidth]{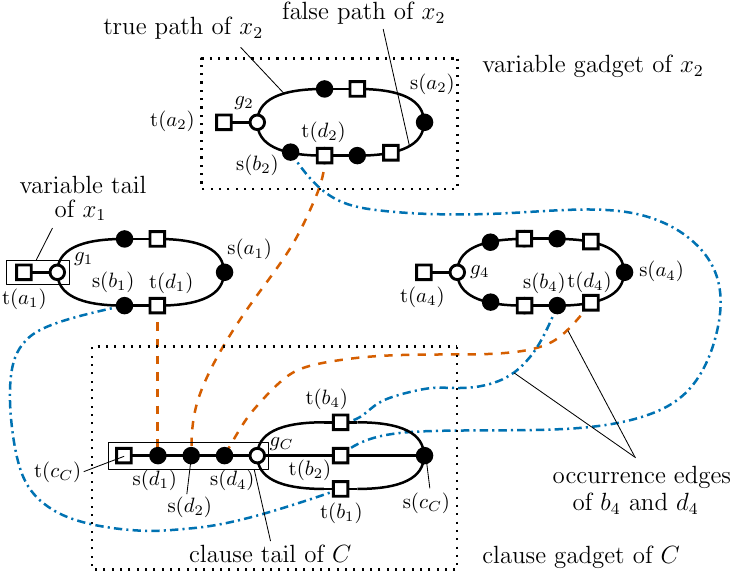}
    \caption{Part of the construction for the positive clause $C=x_1\lor x_2\lor
x_4$ of $\varphi$ (Figure~\ref{fig:3sat}); for $i\in\{1,2,4\}$, $a_i$ and
$g_i$ denote the assignment token and the junction of the gadget of $x_i$,
and $b_i$ and $d_i$ the blocking and delay tokens of the occurrence of $x_i$
in $C$. In every variable gadget, the true path is drawn above and the
false path below. The dotted rectangles single out the gadget of $x_2$ and
the gadget of $C$. The blue dash-dotted curves and the orange dashed curves
are the occurrence edges of the blocking tokens and of the delay tokens,
respectively. The occurrence edges of other clauses are omitted, but the
sources and targets that those clauses create on the true and false paths
are shown.}
    \label{fig:unique_m_hardness}
\end{figure}
\begin{description}
\item[Variable gadgets.] For each variable $x$, we create the vertices
$\src(a_x)$ and $\tgt(a_x)$, an ordinary vertex $g_x$, called the
\emph{junction} of the gadget, and, for each occurrence $o$ of $x$, the
vertices $\src(b_o)$ and $\tgt(d_o)$. The \emph{true path} is a path from
$g_x$ to $\src(a_x)$ whose internal vertices are $\src(b_o)$ followed by
$\tgt(d_o)$ for each negative occurrence $o$ of $x$, the pairs of distinct
occurrences appearing in an arbitrary order; the \emph{false path} is a
path from $g_x$ to $\src(a_x)$ defined in the same way from the positive
occurrences of $x$; the \emph{variable tail} is the edge $g_x\tgt(a_x)$.
Sources and targets thus alternate along the true path and along the
false path.

\item[Clause gadgets.] For each clause $C$, we create the vertices
$\src(c_C)$ and $\tgt(c_C)$, an ordinary vertex $g_C$, called the
\emph{junction} of the gadget, and, for each occurrence $o\in C$, the
vertices $\tgt(b_o)$ and $\src(d_o)$. The \emph{literal path} of $o$ is
the path $\src(c_C),\tgt(b_o),g_C$, and the \emph{clause tail} is a path
from $g_C$ to $\tgt(c_C)$ whose internal vertices are the vertices
$\src(d_o)$ with $o\in C$, in an arbitrary order.

\item[Occurrence edges.] For each occurrence $o$, we add the edge
$\src(b_o)\tgt(b_o)$ and the edge $\src(d_o)\tgt(d_o)$, the
\emph{occurrence edges} of $b_o$ and of $d_o$.

\item[Labels.] For each token, we assign a private label to its source and
its target, so that the compatible matching of the constructed instance is
unique.
\end{description}

\smallskip
In this reduction, the variable gadgets and clause gadgets are pairwise vertex-disjoint, and each occurrence edge meets the two gadgets it joins only at its endpoints. Let $J=(G,S,T,L)$ denote the constructed instance of \textsc{Single-Move Labeled Token Routing}. Each variable, clause, and occurrence contributes a constant number of vertices and edges, so $J$ can be computed in polynomial time.
We next present the properties of the instance $J$ that will be used in
the correctness proof. The first observation shows that the occurrence edge of a blocking or
delay token is used by no other token; since these edges are
the only edges joining two gadgets, the assignment tokens and the clause
tokens are confined to their own gadgets, which leaves each of them the
few routes described in the second observation.

\begin{observation}\label{obs:occurrence-edge-isolation}
Let $o$ be an occurrence and let $z$ be $b_o$ or $d_o$. In every solution
of $J$, no route of a token other than $z$ contains both $\src(z)$ and
$\tgt(z)$; in particular, no such route uses the occurrence edge of $z$.
\end{observation}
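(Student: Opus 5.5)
The plan is to reduce the claim to a 2-cycle in the precedence digraph and invoke Observation~\ref{obs:precedence-acyclicity}. Fix a solution of $J$ with compatible matching $M$ and route family $\Pi$. Since every label of $J$ is private, $M$ is the unique compatible matching, so each token $y$ is the pair $(\src(y),\tgt(y))$ with its route $P_y\in\Pi$. Suppose, for contradiction, that for some token $y\neq z$ the route $P_y$ contains both $\src(z)$ and $\tgt(z)$.

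First I check from the construction that the sources and targets of $J$ are pairwise distinct vertices. Each of $\src(a_x)$, $\tgt(a_x)$, $\src(b_o)$, $\tgt(b_o)$, $\src(d_o)$, $\tgt(d_o)$, $\src(c_C)$, $\tgt(c_C)$ is created as its own vertex, and no vertex is both a source and a target. Since $y\neq z$, it follows that $\src(z)\neq\src(y)$ and $\tgt(z)\neq\src(y)$. In particular, $\src(z)$ is a vertex of $P_y$ other than its first vertex $\src(y)$.

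Next I read off two arcs of $D(M,\Pi)$ directly from its definition. Since $\src(z)\in V(P_y)$, the digraph has the arc $(z,y)$: the token at $\src(z)$ must leave before $P_y$ is used. Since $\tgt(z)\in V(P_y)$, it also has the arc $(y,z)$: $z$ may arrive at $\tgt(z)$ only after $P_y$ has been used. These two arcs form a directed cycle, so by Observation~\ref{obs:precedence-acyclicity} no movement order is feasible for $\Pi$. This contradicts the assumption that we started from a solution.

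The "in particular" part is immediate: a route using the edge $\src(z)\tgt(z)$ contains both of its endpoints. The only step needing care is the distinctness check in the second paragraph, which rules out degenerate coincidences such as $\tgt(z)=\src(y)$. Such a coincidence would break the arc $(y,z)$, since the definition of a clear route excludes the route's own source. I expect no real obstacle beyond that check.
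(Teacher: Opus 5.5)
Your proof is correct, but it takes a slightly different route from the paper's.

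\textbf{The paper's argument.} The paper argues directly from occupancy. The vertex $\src(z)$ is occupied until $z$ moves, and $\tgt(z)$ is occupied from then on, so at every moment at least one of the two is occupied. Since neither vertex is the source of any other token, a route of another token containing both is never clear.

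\textbf{Your argument.} Your arcs $(z,y)$ and $(y,z)$ encode exactly these two halves of the timeline. Observation~\ref{obs:precedence-acyclicity} then turns the resulting $2$-cycle into the contradiction. The content is therefore the same. The paper's version is a self-contained invariant. Yours is more mechanical and reuses the acyclicity characterization already proved.

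\textbf{A correction to your last paragraph.} On your route, the distinctness check is not actually needed. The arcs of $D(M,\Pi)$ are defined by membership in the vertex set of a route, and the route's first vertex is not excluded. Moreover, the proof of Observation~\ref{obs:precedence-acyclicity} explicitly handles the case $t_q=s_p$ by reducing it to the case $s_p\in V(P_q)$. So even if $\tgt(z)$ were equal to $\src(y)$, the arc $(y,z)$ would still exist and would still be a valid constraint: $z$ cannot arrive before the token at $\src(y)$ has left. A coincidence of this kind would therefore not break your $2$-cycle.

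It is the paper's direct argument that relies on sources and targets of $J$ being distinct. The paper uses this to ensure that neither $\src(z)$ nor $\tgt(z)$ is the first vertex of another token's route. Your check is harmless, but it is not what makes your argument work.
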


\begin{proof}
Fix an occurrence $o$ and let $z$ be $b_o$ or $d_o$. At every point of a
movement order, the vertex $\src(z)$ is occupied if $z$ has not yet moved,
and the vertex $\tgt(z)$ is occupied otherwise, so at least one of the two
is occupied at every point. Neither vertex is the source or the target of
a token other than $z$, so neither is the first vertex of the route of
another token. A route of another token containing both therefore
contains, at the time it is used, an occupied vertex other than its own
first vertex, and is not clear. The occurrence edge of $z$ joins these
two vertices, so no route of another token uses it.
\end{proof}

\begin{observation}\label{obs:variable-clause-route-forms}
In every solution of $J$, the route of each assignment token $a_x$ consists of either the true path or the false path of the variable gadget for $x$, followed by its variable tail. Similarly, the route of each clause token $c_C$ consists of one literal path of the clause gadget for $C$, followed by its clause tail.
\end{observation}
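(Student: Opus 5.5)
The plan is to derive Observation~\ref{obs:variable-clause-route-forms} from Observation~\ref{obs:occurrence-edge-isolation} and the structure of the gadgets. First I would show that each assignment token and each clause token stays inside its own gadget. By construction, the only edges joining two distinct gadgets are occurrence edges. By Observation~\ref{obs:occurrence-edge-isolation}, no route of a token other than $b_o$ or $d_o$ uses the occurrence edge of $b_o$ or of $d_o$. Hence the route of $a_x$ (respectively $c_C$) uses no edge leaving the gadget of $x$ (respectively $C$). Since the route starts inside that gadget, it is a path in the subgraph induced by the gadget.

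Next comes a case analysis on that subgraph. The gadget of $x$ consists of two internally disjoint $g_x$--$\src(a_x)$ paths, the true path and the false path, which together form a cycle through $g_x$ and $\src(a_x)$, plus the pendant edge $g_x\tgt(a_x)$. The vertex $\tgt(a_x)$ is a leaf of the gadget whose only neighbor is $g_x$. So every simple $\src(a_x)$--$\tgt(a_x)$ path in the gadget reaches $g_x$ and then takes the tail edge. The part from $\src(a_x)$ to $g_x$ is a simple path on the cycle, so it is either the true path or the false path. Since a route is read from source to target, the claimed form follows, with the chosen path traversed from $\src(a_x)$ to $g_x$.

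For a clause gadget, the subgraph consists of $\src(c_C)$ joined to $g_C$ by the internally disjoint literal paths $\src(c_C),\tgt(b_o),g_C$, together with the clause tail from $g_C$ to $\tgt(c_C)$. This tail is a path whose internal vertices have no other neighbors inside the gadget; their only other edges are occurrence edges. Thus $\tgt(c_C)$ is reachable only through the tail, which is entered at $g_C$. A simple path from $\src(c_C)$ to $g_C$ avoiding the tail must leave $\src(c_C)$ through some $\tgt(b_o)$. That vertex has degree two in the gadget, so the path goes directly to $g_C$ and cannot return to another literal path without revisiting $g_C$. Hence the route is exactly one literal path followed by the clause tail.

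The main subtlety, rather than a real obstacle, is the claim that the route is confined to the gadget. This needs the structural fact that every edge leaving a gadget is an occurrence edge, and that each endpoint of such an edge has no other outside edges. I would state this explicitly from the construction: each $\src(b_o)$, $\tgt(d_o)$, $\tgt(b_o)$ and $\src(d_o)$ is incident with exactly one occurrence edge. Observation~\ref{obs:occurrence-edge-isolation} then applies directly. Degenerate cases, such as a variable with no negative occurrences, make the true path a single edge $g_x\src(a_x)$, and they pose no problem, since the cycle argument is unaffected. We assume every variable occurs somewhere, so at least one path has internal vertices and the graph stays simple.
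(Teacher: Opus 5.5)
Your proposal is correct and follows the paper's argument: Observation~\ref{obs:occurrence-edge-isolation} rules out every edge joining two gadgets, so each assignment or clause token stays inside its own gadget, and its route is then read off from the gadget's structure. Your case analysis fills in the step the paper compresses into ``the structure of the routes follows from the structure of the two gadgets'': the cycle with a pendant tail at $g_x$ in a variable gadget, and the parallel literal paths followed by the clause tail in a clause gadget.
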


\begin{proof}
By Observation~\ref{obs:occurrence-edge-isolation}, neither an assignment token nor a clause token can use an occurrence edge. Since the gadgets are otherwise vertex-disjoint, the route of an assignment token remains inside its variable gadget, and the route of a clause token remains inside its clause gadget. The structure of the routes follows from the structure of the two gadgets.
\end{proof}

Confining the assignment tokens and the clause tokens to their gadgets also
settles the solution edge multiplicity of $J$.

\begin{observation}\label{obs:edge-disjoint-routes}
In every solution of $J$, the routes of the tokens are pairwise
edge-disjoint. Consequently, every solution of $J$ has solution edge
multiplicity one.
\end{observation}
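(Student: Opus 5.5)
The approach is to pin down the route of every token using Observations~\ref{obs:occurrence-edge-isolation} and~\ref{obs:variable-clause-route-forms}, and then check edge-disjointness class by class. Every edge of $G$ is an occurrence edge, an edge of a variable gadget, or an edge of a clause gadget. Fix a solution of $J$.

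First I determine the routes of the blocking and delay tokens. Let $z$ be $b_o$ or $d_o$. I claim that its route is exactly the occurrence edge $\src(z)\tgt(z)$. The plan is to argue that any other $\src(z)$-$\tgt(z)$ path must use at least one other occurrence edge $\src(z')\tgt(z')$ with $z'\neq z$, because the gadgets are vertex-disjoint and are joined only by occurrence edges, and $\src(z)$ and $\tgt(z)$ lie in different gadgets. Using the occurrence edge of $z'$ contradicts Observation~\ref{obs:occurrence-edge-isolation} applied to $z'$. Hence $z$'s route is its own occurrence edge, which no other route uses, again by Observation~\ref{obs:occurrence-edge-isolation}.

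Next I handle the assignment and clause tokens. By Observation~\ref{obs:variable-clause-route-forms}, the route of $a_x$ lies inside the gadget of $x$. It is the only route there other than the single-edge occurrence routes, which do not use gadget edges. The same holds for $c_C$ inside the gadget of $C$. Since the gadgets are pairwise vertex-disjoint, and hence edge-disjoint, no gadget edge lies on two routes. Combining the three edge classes shows that every edge lies on at most one route, so the routes are pairwise edge-disjoint and the solution edge multiplicity is one. The maximum is exactly one because some route is nontrivial.

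The main obstacle is the first step, namely justifying that a detour between the two endpoints of $z$ must cross another occurrence edge. To handle it, I will note that every vertex of a gadget other than the junctions and the token endpoints of $a_x$ or $c_C$ is incident with exactly one occurrence edge. Removing the occurrence edge of $z$ therefore leaves the gadgets connected to each other only through other occurrence edges, so a path leaving the gadget of $\src(z)$ must use one of them.
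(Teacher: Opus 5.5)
Your proposal is correct and is essentially the paper's proof: you confine $a_x$ and $c_C$ to their pairwise vertex-disjoint gadgets via Observation~\ref{obs:variable-clause-route-forms}, and you force each blocking or delay token $z$ onto its own occurrence edge, because any other $\src(z)$-$\tgt(z)$ route would use another token's occurrence edge, which Observation~\ref{obs:occurrence-edge-isolation} forbids. The degree remark in your last paragraph is unnecessary; it suffices that $\src(z)$ and $\tgt(z)$ lie in different gadgets and that occurrence edges are the only edges joining distinct gadgets.
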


\begin{proof}
By Observation~\ref{obs:variable-clause-route-forms}, the route of an assignment
token uses only edges of its variable gadget, and the route of a clause token
only edges of its clause gadget; the gadgets are pairwise vertex-disjoint, so
these routes are pairwise edge-disjoint. By
Observation~\ref{obs:occurrence-edge-isolation}, the occurrence edge of a
blocking or delay token $z$ is used by no token other than $z$. The token $z$
cannot use any other edge, since every route from $\src(z)$ to $\tgt(z)$ other
than its occurrence edge would have to leave the gadget containing $\src(z)$
through the occurrence edge of another token. Hence, every edge belongs to at
most one route.
\end{proof}

The remaining property concerns the order in which the tokens move. Consider
a solution of $J$ with route family $\Pi$. By
Observation~\ref{obs:variable-clause-route-forms}, every assignment token
$a_x$ uses either the true path or the false path of its gadget, and we let
$\tau_\Pi$ be the truth assignment that sets $x$ to \texttt{true} if $a_x$
uses its true path and to \texttt{false} if $a_x$ uses its false path. For
the reverse direction of the reduction, we will show that $\tau_\Pi$
satisfies $\varphi$; the observation below gives the precedence constraints
from which this follows.

\begin{observation}\label{obs:forced-literal-order}
Let $o$ be an occurrence in a clause $C$. In every solution of $J$ with
route family $\Pi$, if $o$ is false under $\tau_\Pi$, then $\tgt(b_o)$ is
occupied when $c_C$ moves.
\end{observation}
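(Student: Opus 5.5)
We chain two precedence facts through the delay token $d_o$ and the blocking token $b_o$. Let $x$ be the variable of $o$. If $o$ is false under $\tau_\Pi$, then either $o$ is positive and $x$ is \texttt{false}, or $o$ is negative and $x$ is \texttt{true}. In both cases $a_x$ uses the path of its gadget that contains $\src(b_o)$ and $\tgt(d_o)$: by construction, the false path carries these vertices for positive occurrences and the true path carries them for negative ones. By Observation~\ref{obs:variable-clause-route-forms}, the route $P$ of $a_x$ therefore contains both $\src(b_o)$ and $\tgt(d_o)$, as internal vertices, since $\src(a_x)$ and $g_x$ are distinct from them.

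The first step is the order of $b_o$ relative to $a_x$. Since $\src(b_o)$ lies on $P$ and is not the first vertex of $P$, the token $b_o$ must have left $\src(b_o)$ before $a_x$ moves; this is the arc $(b_o,a_x)$ of $D(M,\Pi)$ and follows from Observation~\ref{obs:precedence-acyclicity} or directly from clearness. After it moves, $b_o$ sits on $\tgt(b_o)$, and it never moves again because every token moves at most once. So $\tgt(b_o)$ is occupied from the move of $b_o$ onward, in particular at every time after $a_x$ moves.

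The second step is the order of $a_x$ relative to $d_o$. Since $\tgt(d_o)$ lies on $P$, the token $d_o$ can arrive there only after $a_x$ has moved; this is the arc $(a_x,d_o)$. The third step is the order of $d_o$ relative to $c_C$. By Observation~\ref{obs:variable-clause-route-forms}, the route of $c_C$ contains the whole clause tail, which has $\src(d_o)$ as an internal vertex. The route of $c_C$ is clear only if $d_o$ has already vacated $\src(d_o)$, giving the arc $(d_o,c_C)$.

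Combining the three steps, $b_o$ moves before $a_x$, which moves before $d_o$, which moves before $c_C$. Hence $b_o$ has moved before $c_C$ moves, and at that time it occupies $\tgt(b_o)$, as claimed.

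The only delicate point is checking that the relevant vertices are genuinely internal to, or non-initial on, the routes in question, so that the clearness conditions apply. For $\src(d_o)$ on the clause tail and for $\src(b_o)$ and $\tgt(d_o)$ on $P$, this is immediate from the gadget definitions. I expect no real obstacle beyond this.
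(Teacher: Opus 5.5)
Your proof is correct and follows the paper's argument: it derives the chain $b_o$ before $a_x$ before $d_o$ before $c_C$ from $\src(b_o)$ and $\tgt(d_o)$ being internal vertices of the route of $a_x$ and $\src(d_o)$ lying on the clause tail. It then concludes, as the paper does, that $\tgt(b_o)$ is occupied when $c_C$ moves because each token moves at most once.
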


\begin{proof}
Let $x$ be the variable of $o$. By construction, $\src(b_o)$ and $\tgt(d_o)$
lie on the false path of the variable gadget for $x$ if $o$ is positive and
on its true path if $o$ is negative, so both lie on the route of $a_x$
precisely when $o$ is false under $\tau_\Pi$. Suppose that $o$ is false. The
source $\src(b_o)$ must be vacated before $a_x$ moves, so $b_o$ moves before
$a_x$. Moving $d_o$ before $a_x$ would occupy $\tgt(d_o)$, an internal
vertex of the route of $a_x$, so $a_x$ moves before $d_o$. By
Observation~\ref{obs:variable-clause-route-forms}, the route of $c_C$
contains the clause tail, and $\src(d_o)$ lies on this tail, so $d_o$ moves
before $c_C$. Therefore, $b_o$ moves before $c_C$, and since each token moves
at most once, $\tgt(b_o)$ is occupied when $c_C$ moves.
\end{proof}

We now have all the ingredients needed to establish the correctness of the reduction.

\begin{lemma}
\label{lem:unique_matching_hardness}
\textsc{Single-Move Labeled Token Routing} on unique-matching instances is
\textsf{NP}-complete, even on instances in which the routes of every solution
are pairwise edge-disjoint.
\end{lemma}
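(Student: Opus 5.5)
Membership in \textsf{NP} was established in Section~\ref{sec:prelim}, and $J$ is a unique-matching instance computable in polynomial time whose solutions all have pairwise edge-disjoint routes by Observation~\ref{obs:edge-disjoint-routes}. It therefore remains to show that $\varphi$ is satisfiable if and only if $J$ is a yes-instance. The two directions are handled separately.

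\emph{Reverse direction.} Take a solution with route family $\Pi$ and consider the assignment $\tau_\Pi$. Suppose some clause $C$ were falsified under $\tau_\Pi$. By Observation~\ref{obs:variable-clause-route-forms}, the route of $c_C$ contains the literal path of some occurrence $o\in C$, and this path contains $\tgt(b_o)$ as an internal vertex. Since $o$ is false, Observation~\ref{obs:forced-literal-order} shows that $\tgt(b_o)$ is occupied when $c_C$ moves, so the route is not clear, a contradiction. Hence $\tau_\Pi$ satisfies $\varphi$.

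\emph{Forward direction.} Let $\tau$ be a satisfying assignment. For each variable $x$, route $a_x$ along the true path followed by the variable tail if $\tau(x)$ is true, and along the false path followed by the variable tail otherwise. For each clause $C$, fix an occurrence $o^\ast_C$ that is true under $\tau$ and route $c_C$ along the literal path of $o^\ast_C$ followed by the clause tail. Route every blocking and delay token along its occurrence edge. Rather than writing down a movement order, I would apply Observation~\ref{obs:precedence-acyclicity}: list the arcs of the precedence digraph and exhibit a layering in which all arcs go forward.

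The arcs come from the following relations.
\begin{itemize}
\item A blocking token $b_o$ whose source lies on the chosen path of $a_x$ is the case of a false occurrence; it gives the arc $b_o\to a_x$.
\item For the same false occurrence, $\tgt(d_o)$ lies on the route of $a_x$, giving the arc $a_x\to d_o$.
\item Each $\src(d_o)$ lies on the clause tail of its clause, giving the arc $d_o\to c_C$.
\item If $\tgt(b_o)$ lies on the route of $c_C$, then $o=o^\ast_C$ is true, giving the arc $c_C\to b_o$.
\end{itemize}
An occurrence-edge route contains only its own endpoints, and those endpoints are not endpoints of other tokens, so blocking and delay routes create no further arcs. Order the moves in four layers:
\begin{enumerate}
\item first, all $b_o$ with $o$ false;
\item then, all assignment tokens;
\item then, all delay tokens;
\item last, all clause tokens, each followed immediately by the blocking token of its chosen true occurrence (and any remaining true-occurrence blocking tokens at the end).
\end{enumerate}
Every arc listed above goes forward in this order. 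One thing to check is that a true occurrence's $b_o$ does not have its source on the chosen route of $a_x$. This holds because that source lies on the unchosen path, so no arc $b_o\to a_x$ arises and $b_o$ may safely wait until after $c_C$.

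The main obstacle is completeness of the arc enumeration, in particular checking the endpoints $\src(a_x)$, $\tgt(a_x)$, and $g_x$, and the interleaving of sources and targets along the variable paths. I expect this to reduce to reading off the construction: the only non-ordinary internal vertices of each chosen route are the $\src(b_o)$ and $\tgt(d_o)$ vertices, or $\tgt(b_{o^\ast})$ and the $\src(d_o)$ vertices, all of which are already accounted for in the list of arcs.
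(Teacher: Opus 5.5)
Your proposal is correct and follows the paper's proof. The reverse direction is the same argument, and the forward direction uses the same routes and essentially the same five-stage order (false blocking tokens, assignment tokens, delay tokens, clause tokens, remaining blocking tokens). The only difference is that you certify feasibility by listing the arcs of the precedence digraph and invoking Observation~\ref{obs:precedence-acyclicity}, whereas the paper checks directly that each route is clear when it is used. Your arc list is complete, since all sources and targets in $J$ are distinct vertices.
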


\begin{proof}
We show that $\varphi$ is satisfiable if and only if $J$ is a yes-instance.

\smallskip
\noindent
($\Rightarrow$) Suppose that $\varphi$ has a satisfying assignment $\tau$. First, for every
occurrence $o$ that is false under $\tau$, move the blocking token $b_o$
along its occurrence edge to $\tgt(b_o)$. This edge has no internal vertex,
and $\tgt(b_o)$ is unoccupied, since no token has yet reached a target, so
every such move is feasible.

Next, for every variable $x$, move $a_x$ along its true path if
$\tau(x)=\texttt{true}$ and along its false path if
$\tau(x)=\texttt{false}$, followed by its variable tail. The only sources on
the chosen path are those of the blocking tokens of the occurrences of $x$
that are false under $\tau$, and all of these sources were vacated in the
first step. No source of another assignment token, of a clause token, or of
a delay token lies on the chosen path, since the gadgets are pairwise
vertex-disjoint and the sources of clause and delay tokens lie in clause
gadgets. Moreover, no delay token has yet moved, so every target of a delay
token on the chosen path is unoccupied, and so is $\tgt(a_x)$. Hence, the
route of every assignment token is clear.

We next move every delay token along its occurrence edge, which has no
internal vertex. Its target is unoccupied, since no other token is assigned
to it, and since every assignment token has already reached its target,
occupying it does not block any route that remains to be used.

Then, consider a clause $C$. Since $\tau$ satisfies $C$, there exists an
occurrence $o\in C$ that is true under $\tau$. Only the blocking tokens of
false occurrences moved in the first step, so $b_o$ has not moved, and
$\tgt(b_o)$ is unoccupied. Every delay token has moved, so every source of a
delay token on the clause tail of $C$ has been vacated, and $\tgt(c_C)$ is
unoccupied. Thus, $c_C$ can move along the literal path of $o$ and then
along the clause tail to $\tgt(c_C)$.

Finally, move every remaining blocking token along its occurrence edge,
which has no internal vertex; its target is unoccupied, since no other token
is assigned to it. Every token has moved along a clear route, so the unique
compatible matching of $J$, the routes along which the tokens moved, and
this movement order form a solution of $J$.

\smallskip
\noindent
($\Leftarrow$)
Suppose that $J$ is a yes-instance, and let $\Pi$ be the route family of a
solution. We show that $\tau_\Pi$, the truth assignment that sets $x$ to
\texttt{true} if $a_x$ uses its true path and to \texttt{false} if $a_x$
uses its false path, satisfies $\varphi$. Consider a clause $C$
and suppose, for a contradiction, that every occurrence in $C$ is false under
$\tau_\Pi$. By Observation~\ref{obs:forced-literal-order}, the target of the
blocking token of each occurrence in $C$ is occupied when $c_C$ moves. By
Observation~\ref{obs:variable-clause-route-forms}, the route of $c_C$ uses a
literal path of the clause gadget for $C$, and each literal path contains
the target of the blocking token of its occurrence. Hence, whichever literal
path the route of $c_C$ uses, the route contains a vertex other than its
source that is occupied when $c_C$ moves, so no route available to $c_C$ is
clear at that time. This contradicts the feasibility of the movement order,
so $\tau_\Pi$ satisfies $C$; hence $\tau_\Pi$ satisfies every clause of
$\varphi$.

By Observation~\ref{obs:edge-disjoint-routes}, the routes of every solution
of $J$ are pairwise edge-disjoint. Together with membership in \textsf{NP},
established in Section~\ref{sec:prelim}, this proves the lemma.
\end{proof}

\subsubsection{A planar realization}
\label{sec:planar-realization}

The graph of $J$ need not be planar. We now modify $J$, in polynomial time,
into an instance $J_{\mathrm{pl}}=(G_{\mathrm{pl}},S,T,L)$ whose graph is
planar. The modification changes the clause gadgets and fixes the order of
the vertices along the true and false paths of the variable gadgets;
Figure~\ref{fig:planar_realization} shows the resulting graph for the
formula of Figure~\ref{fig:3sat}, drawn as described in the proof of
Lemma~\ref{lem:planar} below.

Both the new clause gadgets and the order of the vertices along the paths of
the variable gadgets are determined by the rectilinear representation
$\mathcal{R}$ given with $\varphi$. The \emph{segment} of an occurrence $o$
is the vertical segment of $\mathcal{R}$ that joins the box of the clause of
$o$ to the box of the variable of $o$. Its endpoint on the box of the clause
is the \emph{clause port} of $o$, which lies on the side of that box that
faces the variable line, and its endpoint on the box of the variable is the
\emph{variable port} of $o$, which lies on the upper side of that box if $o$
is positive and on its lower side if $o$ is negative. Distinct occurrences
have distinct segments, since no two literals of a clause have the same
variable, and the segments of distinct occurrences are disjoint. Hence, the
clause ports of a clause are distinct points of one horizontal side of its
box, and so are the variable ports of the positive occurrences of a
variable, and those of its negative occurrences. We order the ports on a
side from left to right, that is, by increasing $x$-coordinate. We number
the occurrences of each clause $C$ as $o_1,\ldots,o_{|C|}$ in the order of
their clause ports, writing $b_i$ and $d_i$ for the tokens of $o_i$ as in
Section~\ref{sec:planar-general}, and we order the positive occurrences of
each variable, and separately its negative occurrences, in the order of
their variable ports.

The correctness of $J$ rests on two properties
(Observations~\ref{obs:variable-clause-route-forms}
and~\ref{obs:forced-literal-order}): the route of an assignment token is
its true path or its false path, followed by its variable tail, and the
route of a clause token $c_C$ contains $\tgt(b_o)$ and $\src(d_o)$ for some
occurrence $o\in C$, which forces $b_o$ to block the route of $c_C$ when $o$
is false. The modification preserves both properties. Its purpose is to
allow a plane drawing of the resulting graph, which we now describe.

We call a vertex of a gadget an \emph{anchor} if an occurrence edge ends at
it. In the drawing, each gadget lies inside its box, and the two occurrence
edges of each occurrence $o$ run alongside the segment of $o$, that of $b_o$
to its left and that of $d_o$ to its right, crossing the side of each of the
two boxes immediately to the left and to the right of the port of $o$ on
that side. The occurrence edges entering a box therefore arrive at each of
its sides in the order of the ports on that side, the edge of $b_o$
immediately followed by that of $d_o$. Inside the box, each of these edges
is completed by a straight line segment to its anchor. These segments cross
neither each other nor the gadget if the anchors of the edges arriving at a
side lie on a line parallel to that side, in the order in which the edges
arrive, and no point of the gadget lies strictly between this line and that
side: the segments then join points of two parallel lines in the same
order, and the gadget has no point between these lines.

\begin{figure}[htp]
\centering
\includegraphics[width=\linewidth]{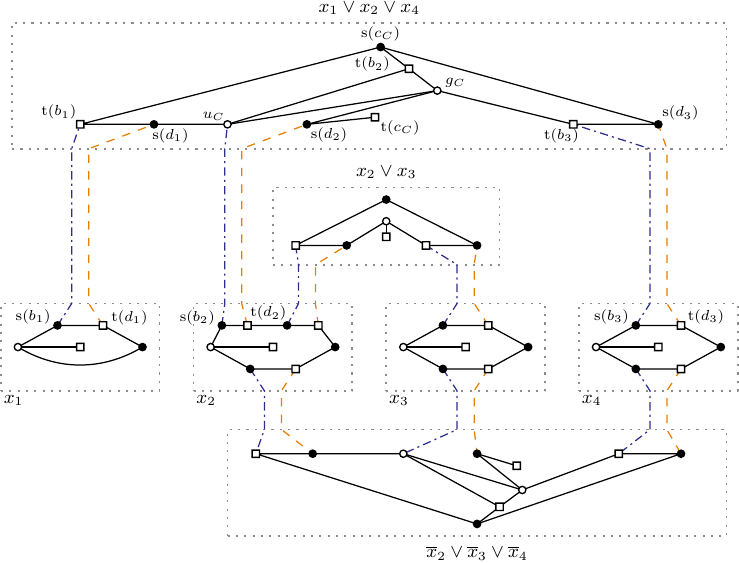}
\caption{The graph $G_{\mathrm{pl}}$ for the formula of Figure~\ref{fig:3sat},
drawn as in the proof of Lemma~\ref{lem:planar} and restricted to the four
variable gadgets and to the clause gadgets of $(x_1\lor x_2\lor x_4)$,
$(x_2\lor x_3)$, and $(\overline{x}_2\lor\overline{x}_3\lor\overline{x}_4)$;
the gadgets of the other two clauses, their occurrence edges, and the
vertices that their occurrences add to the variable gadgets are omitted.
The dotted rectangles are the boxes of $\mathcal{R}$, and the dash-dotted
blue curves and dashed orange curves are the occurrence edges of the
blocking tokens and of the delay tokens, as in
Figure~\ref{fig:unique_m_hardness}. In the box of the clause
$C=x_1\lor x_2\lor x_4$, whose occurrences $o_1$, $o_2$, and $o_3$ are those
in $x_1$, $x_2$, and $x_4$, the vertices are labeled with the tokens
indexed by occurrence, and so are the sources $\src(b_i)$ and targets
$\tgt(d_i)$ of its occurrence edges in the boxes of $x_1$, $x_2$, and $x_4$; $u_C$ is the crossing vertex of $C$, at which the occurrence edge
of $b_2$ ends. In each variable gadget, the false path, which carries the
vertices of the positive occurrences, is drawn above the variable line and
the true path below it, whereas Figure~\ref{fig:unique_m_hardness} draws
the true path above.}
\label{fig:planar_realization}
\end{figure}

In a variable gadget, the anchors on the upper side of the box are the
vertices $\src(b_o)$ and $\tgt(d_o)$ of the positive occurrences $o$, all on
the false path, and those on the lower side are the corresponding vertices
of the negative occurrences, all on the true path. The variable gadget can
therefore be drawn as required once these vertices appear along the false
path in the order of the positive occurrences, and along the true path in
the order of the negative ones; this is how the modification fixes the
order of a variable gadget's vertices.

In the gadget of a clause $C$, the anchors must lie on the line in the order
$\tgt(b_1)$, $\src(d_1)$, \ldots, $\tgt(b_{|C|})$, $\src(d_{|C|})$. The
clause gadget of
$J$ cannot be drawn in this way, since every delay source lies on the clause
tail and every blocking target on a literal path. The modification therefore
places $\src(d_o)$ next to $\tgt(b_o)$ on the literal path of $o$ for the
first and the last occurrence $o$ of each clause. For a clause with three
literals, one of the three literal paths lies inside the cycle formed by the
other two in every plane drawing of the gadget. In a drawing in which the
anchors lie on the line as required, this inner path is the literal path of
the second occurrence: the
blocking targets of the first and the last occurrence lie on the line, and
no point of the gadget lies between the line and the side of the box, so
these targets lie on the outer face. The occurrence edge of the blocking
token of the second occurrence, whose source lies outside the box,
therefore has to cross one of the other two literal paths. The construction
makes the crossing point an ordinary vertex of the gadget, which becomes the
anchor of this occurrence edge, and it keeps the delay source of the second
occurrence on the clause tail, which every route of the clause token uses.
The proof of Lemma~\ref{lem:planar} gives the drawings of both gadgets. The
construction is as follows.

\begin{description}
\item[Variable gadgets.] The variable gadget of $x$ is that of $J$, with
the order of the vertices along its true and false paths fixed as follows.
Along the false path, from $g_x$ to $\src(a_x)$, the vertices $\src(b_o)$
and $\tgt(d_o)$ of the positive occurrences $o$ of $x$ appear in the order
of these occurrences, with
$\src(b_o)$ immediately followed by $\tgt(d_o)$ for each $o$, as in $J$.
The order along the true path is defined symmetrically, with the negative
occurrences of $x$ in place of the positive ones.

\item[Clause gadgets.] The clause gadget of $C$ has the vertices
$\src(c_C)$, $\tgt(c_C)$, and $g_C$ and, for each $o\in C$, the vertices
$\tgt(b_o)$ and $\src(d_o)$, as in $J$, but its paths differ. If $|C|=2$,
the literal path of $o_1$ is $\src(c_C),\tgt(b_1),\src(d_1),g_C$,
the literal path of $o_2$ is $\src(c_C),\src(d_2),\tgt(b_2),g_C$, and the
clause tail is the edge $g_C\tgt(c_C)$; the gadget consists of the
two literal paths and the clause tail. If $|C|=3$, the gadget has one
further ordinary vertex $u_C$, the \emph{crossing vertex} of $C$
(Figure~\ref{fig:planar_realization}). The
literal path of $o_1$ is $\src(c_C),\tgt(b_1),\src(d_1),u_C,g_C$, the
literal path of $o_2$ is $\src(c_C),\tgt(b_2),g_C$, the literal path of
$o_3$ is $\src(c_C),\src(d_3),\tgt(b_3),g_C$, and the clause tail is
$g_C,\src(d_2),\tgt(c_C)$; the gadget consists of the three literal paths,
the clause tail, and the edge $u_C\tgt(b_2)$. Thus, $u_C$ lies on the
literal path of $o_1$, between $\src(d_1)$ and $g_C$, and is adjacent to
$\tgt(b_2)$; it is the anchor of the occurrence edge of $b_2$, defined in
the next item.

\item[Occurrence edges.] For each occurrence $o$, the occurrence edge of
$d_o$ is $\src(d_o)\tgt(d_o)$, as in $J$. The occurrence edge of $b_o$ is
$\src(b_o)\tgt(b_o)$, unless $o$ is the second occurrence of a clause $C$
with three literals, in which case it is $\src(b_o)u_C$. We also define,
for each blocking or delay token $z$, its \emph{occurrence path}, a path
from $\src(z)$ to $\tgt(z)$. For every token $z$ other than the blocking
token $b_2$ of a clause with three literals, the occurrence path of $z$ is
its occurrence edge alone. For the blocking token $b_2$ of a clause $C$
with three literals, whose occurrence edge ends at $u_C$ rather than at
$\tgt(b_2)$, the occurrence path is $\src(b_2),u_C,\tgt(b_2)$, the
occurrence edge followed by the edge $u_C\tgt(b_2)$ of the clause gadget.

\item[Labels.] The labels are those of $J$.
\end{description}

This completes the construction of $J_{\mathrm{pl}}$. We first describe its basic properties.

\begin{lemma}\label{lem:planar-construction-properties}
The instance $J_{\mathrm{pl}}$ can be computed in polynomial time, and its
compatible matching is unique. The gadgets of $G_{\mathrm{pl}}$ are pairwise
vertex-disjoint, and every edge of $G_{\mathrm{pl}}$ with endpoints in two
gadgets is an occurrence edge, which joins the variable gadget and the
clause gadget of its occurrence. The maximum degree of $G_{\mathrm{pl}}$ is
four, and its vertices of degree four are the crossing vertices and the
junctions of the clauses with three literals, all of which are ordinary.
\end{lemma}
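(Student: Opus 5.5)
The lemma is a bookkeeping statement about the construction, so the proof checks each claim against the description of $J_{\mathrm{pl}}$ in order.

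\textbf{Polynomial time and unique matching.} Each variable, clause and occurrence contributes a constant number of vertices and edges. Computing the port orders from $\mathcal{R}$ only requires sorting the ports on each side of each box by $x$-coordinate, which takes polynomial time. Every token receives a private label on its source and its target, and no two tokens share a label. Hence each source is compatible with exactly one target and each target with exactly one source, so the compatible matching is unique.

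\textbf{Vertex-disjointness and crossing edges.} I will list the vertex set of each gadget. A variable gadget consists of $\src(a_x)$, $\tgt(a_x)$, $g_x$, and the vertices $\src(b_o),\tgt(d_o)$ of the occurrences of $x$. A clause gadget consists of $\src(c_C)$, $\tgt(c_C)$, $g_C$, possibly $u_C$, and the vertices $\tgt(b_o),\src(d_o)$ for $o\in C$. Every vertex is created for exactly one variable or clause, so the gadgets are pairwise vertex-disjoint. Next I go through the edge list. Every edge of the true and false paths, the variable tail, the literal paths, the clause tail, and $u_C\tgt(b_2)$ has both endpoints in one gadget. The only remaining edges are the occurrence edges. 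Each of these joins $\src(b_o)$ or $\tgt(d_o)$, which lies in the gadget of the variable of $o$, to $\tgt(b_o)$, $u_C$, or $\src(d_o)$, which lies in the gadget of the clause of $o$.

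\textbf{Degree count.} This is the only step with some case work, and the main care point is not to miss an incident edge at $u_C$ or $g_C$. I count degrees vertex by vertex.
\begin{itemize}
\item In a variable gadget, $g_x$ has degree $3$: its true-path neighbour, its false-path neighbour, and $\tgt(a_x)$. The vertex $\tgt(a_x)$ has degree $1$. The vertex $\src(a_x)$ has degree $2$, as the endpoint of both paths.
\item Each $\src(b_o)$ or $\tgt(d_o)$ has its two path neighbours plus one occurrence edge, so degree $3$. Since every variable occurs, a path with no occurrence vertices would be a direct edge $g_x\src(a_x)$. If both paths were empty this would create a double edge, but that cannot happen because $x$ occurs in some clause. (Even if a single path is empty, the degree stays within the bound.)
\item If $|C|=2$: $\src(c_C)$ and $g_C$ each have degree $3$ (two literal paths, plus the tail at $g_C$); $\tgt(c_C)$ has degree $1$; and each $\tgt(b_i)$ and $\src(d_i)$ has two path neighbours and one occurrence edge, so degree $3$.
\item If $|C|=3$: $\src(c_C)$ has degree $3$. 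The junction $g_C$ has neighbours $u_C$, $\tgt(b_2)$, $\tgt(b_3)$, and $\src(d_2)$, so degree $4$. The crossing vertex $u_C$ has neighbours $\src(d_1)$, $g_C$, $\tgt(b_2)$, and $\src(b_2)$ through the occurrence edge, so degree $4$. The vertex $\tgt(b_2)$ has neighbours $\src(c_C)$, $g_C$, and $u_C$, and no occurrence edge, so degree $3$. The vertices $\tgt(b_1)$, $\src(d_1)$, $\src(d_3)$, $\tgt(b_3)$ each have degree $3$. The vertex $\src(d_2)$ has neighbours $g_C$, $\tgt(c_C)$, and its occurrence edge, so degree $3$. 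Finally, $\tgt(c_C)$ has degree $1$.
\end{itemize}
Hence the maximum degree is four, attained exactly at $u_C$ and at $g_C$ for clauses with three literals. Both are ordinary, since neither is a source or a target of any token.
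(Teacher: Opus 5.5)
Your proposal is correct and follows essentially the same route as the paper: a direct check of each claim against the construction, in which the only vertices of degree four are $u_C$ and $g_C$ for clauses with three literals, both ordinary. One harmless slip: for a clause with two literals, $\src(c_C)$ has degree two (its neighbours are $\tgt(b_1)$ and $\src(d_2)$), not three, which does not affect the bound you need.
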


\begin{proof}
Compared with $J$, each clause with three literals contributes one extra
vertex, its crossing vertex, and a constant number of extra edges, and the
orders along the paths are obtained by sorting the ports of $\mathcal{R}$
from left to right, so $J_{\mathrm{pl}}$ can be computed in polynomial time.
The labels are those of $J$, so the compatible matching remains unique.
The gadgets of $J$ are pairwise vertex-disjoint, and each crossing vertex
is added to one clause gadget, so the gadgets of $G_{\mathrm{pl}}$ are
pairwise vertex-disjoint as well. Every edge of $G_{\mathrm{pl}}$ is an
edge of a gadget or an occurrence edge, and the occurrence edge of $d_o$
joins $\src(d_o)$, in the clause gadget of $o$, to $\tgt(d_o)$, in the
variable gadget of $o$, while that of $b_o$ joins $\src(b_o)$, in the
variable gadget of $o$, to $\tgt(b_o)$ or to the crossing vertex, in the
clause gadget of $o$.

For the degrees, let $C$ be a clause with three literals. The crossing
vertex $u_C$ is adjacent to $\src(d_1)$, $g_C$, and $\tgt(b_2)$ in the
gadget and to $\src(b_2)$ by an occurrence edge, and $g_C$ is adjacent to
the last internal vertex of each of the three literal paths and to
$\src(d_2)$; both have degree four, and both are ordinary. Each vertex of
$G_{\mathrm{pl}}$ other than these has degree at most three. The vertices
$\src(c_C)$, $\tgt(c_C)$, $\src(a_x)$, $\tgt(a_x)$, $g_x$, and the junction
of a clause with two literals are incident with no occurrence edge and with
at most three edges of their gadgets. The vertex $\tgt(b_2)$ of a clause
$C$ with three literals is incident with no occurrence edge and with
exactly three edges of its gadget. Each remaining source or target of a
blocking or delay token is incident with two edges of its gadget and one
occurrence edge.
\end{proof}

We now show that $G_{\mathrm{pl}}$ is planar by giving the drawing
described before the construction.

\begin{lemma}\label{lem:planar}
The graph $G_{\mathrm{pl}}$ is planar.
\end{lemma}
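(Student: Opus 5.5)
We prove planarity by drawing $G_{\mathrm{pl}}$ explicitly on top of the rectilinear representation $\mathcal{R}$, following the plan described before the construction. Each gadget is drawn inside the box of its variable or clause. Each occurrence edge is drawn as a curve that runs alongside the segment of its occurrence. The edge of $b_o$ runs immediately to the left of the segment and the edge of $d_o$ immediately to its right, at distance $\varepsilon$. We choose $\varepsilon$ smaller than half the minimum horizontal distance between two segments, and smaller than the distance from any segment to any box it does not meet. Since the segments are pairwise disjoint and meet no box other than their endpoints' boxes, these $2\varepsilon$-thin strips are pairwise disjoint and avoid every other box. Hence occurrence edges cross neither each other nor any box except at the two sides they are meant to enter. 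Each edge enters a box at a point immediately left or right of the port. Inside the box, it is completed by a straight segment to its anchor.

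It then remains to draw each gadget inside its box so that the drawing meets the condition stated before the construction. For every side receiving edges, the anchors for that side must lie on a line parallel to it, in the left-to-right order of arrival, with no gadget point strictly between that line and the side. Inside the box, the gadget must also be drawn without crossings.

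\textbf{Variable gadgets.} Place $g_x$ at the left end of the variable line inside the box, and $\src(a_x)$ at the right end. Draw the false path as a path going up, running along a horizontal line $\ell^+$ just below the upper side, and coming down to $\src(a_x)$. The true path is drawn symmetrically along a line $\ell^-$ just above the lower side. The variable tail $g_x\tgt(a_x)$ is drawn in the region between the two paths. By the fixed order of the construction, the anchors on $\ell^+$ are $\src(b_o),\tgt(d_o)$ for the positive occurrences in port order. These are exactly the order of arrival: left edge $b_o$, then right edge $d_o$. The case of $\ell^-$ is symmetric. The two paths form a cycle with nothing outside it between $\ell^\pm$ and the sides, so the condition holds.

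\textbf{Clause gadgets.} Take a positive clause; the negative case is the mirror image. Its ports lie on the lower side. Put the line $\ell$ just above that side. For $|C|=2$, the anchors in order are $\tgt(b_1),\src(d_1),\tgt(b_2),\src(d_2)$. I would place $\src(c_C)$ above $\ell$, between the two pairs. The literal path of $o_1$ goes down to $\tgt(b_1)$, right along $\ell$ to $\src(d_1)$, and up to $g_C$, which sits above $\src(c_C)$. The path of $o_2$ is the mirror image: $\src(c_C)$ goes down to $\src(d_2)$. Here is where the order in the construction matters. The second literal path visits $\src(d_2)$ before $\tgt(b_2)$, but on $\ell$ the arrival order is $\tgt(b_2)$ before $\src(d_2)$. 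So the drawing must let the path go down to $\tgt(b_2)$... Checking this against the definition, the construction's path is $\src(c_C),\src(d_2),\tgt(b_2),g_C$. To draw it consistently, I would instead place $\src(c_C)$ to the left of everything. The path of $o_1$ then runs from $\src(c_C)$ along $\ell$ through $\tgt(b_1),\src(d_1)$ and returns up to $g_C$. The path of $o_2$ arcs over the first one, comes down onto $\ell$ at $\src(d_2)$ from the right side, moves left to $\tgt(b_2)$, and goes up to $g_C$. This can be drawn as nested arcs above $\ell$. The tail then goes from $g_C$ to $\tgt(c_C)$ inside an inner face.

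For $|C|=3$, I would draw the literal paths of $o_1$ and $o_3$ as outer arcs touching $\ell$ at their anchor pairs, ordered similarly. The path of $o_2$ runs inside the cycle they form. The tail vertex $\src(d_2)$ must lie on $\ell$ between $\tgt(b_1)\src(d_1)$ and $\tgt(b_3)\src(d_3)$. The $b_2$ edge lands on $u_C$, which lies on the $o_1$ path close to $\ell$, immediately left of $\src(d_2)$'s position. The edge $u_C\tgt(b_2)$ then reaches the inner path. The clause tail $g_C,\src(d_2),\tgt(c_C)$ descends to $\ell$ between the arcs, with $\tgt(c_C)$ in the face beside it.

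\textbf{Main obstacle.} I expect the main obstacle to be the exact embedding of the three-literal clause gadget. All five anchors $\tgt(b_1),\src(d_1),u_C,\src(d_2),\tgt(b_3)$ and $\src(d_3)$ must appear on the outer face, along $\ell$, in the required order. The ordering $\src(d_3)$ before $\tgt(b_3)$ on the $o_3$ path suggests the right arc enters $\ell$ from the right. I would first fix a concrete rotation system and verify it by a face-tracing argument.
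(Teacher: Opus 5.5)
Your proposal is correct and is essentially the paper's proof. The occurrence edges run alongside the segments of $\mathcal{R}$, and each gadget is drawn in its box with its anchors on a line parallel to the port side, in arrival order, with nothing between that line and the side. Your final clause drawing matches the paper's: $\src(c_C)$ is joined to the two outermost anchors, $g_C$ is nested between them, the tail hangs into the notch below $g_C$, and, for $|C|=3$, the path of $o_2$ lies inside the cycle formed by the paths of $o_1$ and $o_3$, with $u_C$ placed on the line itself. That description is already a complete plane drawing, so no further face-tracing is needed.

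One repair is needed: choose $\varepsilon$ below half the Euclidean distance between distinct segments rather than their horizontal distance. The horizontal distance is zero when a positive and a negative occurrence of the same variable have segments with the same $x$-coordinate.
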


\begin{proof}
We draw each gadget inside its box so that, for each side of
the box on which ports lie, the anchors of the occurrence edges arriving
at that side lie on a horizontal line segment parallel to that side, in
the left-to-right order of the ports of these edges, the anchor of the
edge of $b_o$ immediately followed by that of the edge of $d_o$, and that
no point of the gadget lies strictly between the line through this segment
and that side of the box. We first show that such
drawings of the gadgets yield a plane drawing of $G_{\mathrm{pl}}$, and
then that they exist.

Outside the boxes, the segments of distinct occurrences are disjoint, and
the segment of $o$ meets only the two boxes of $o$, at the ports of $o$.
We draw the occurrence edges of $b_o$ and $d_o$ between the two boxes as
vertical line segments immediately to the left and to the right of the
segment of $o$, so close to it that line segments of distinct occurrences
are disjoint and each ends on the box sides next to the ports of $o$.
Along each side, these endpoints thus occur in the order of the ports,
that of $b_o$ immediately before that of $d_o$, which is the order of the
anchors on the horizontal line segment facing that side. We complete each
occurrence edge by a straight line segment from its endpoint on the side
to its anchor. These joining line segments are pairwise disjoint, since
they join points of two parallel lines in the same left-to-right order,
and they meet the gadget only at the anchors, since every point of a
joining line segment other than its anchor lies strictly between the line
through the anchors and the side of the box, where the gadget has no point.
Together with the pairwise
disjointness of the boxes and of the vertical line segments, this makes
the drawing of $G_{\mathrm{pl}}$ plane.

It remains to draw the gadgets. In the variable gadget of $x$, the anchors
on the upper side are the vertices $\src(b_o)$ and $\tgt(d_o)$ of the
positive occurrences $o$ of $x$, which by construction appear along the
false path in the required order, and those on the lower side are the
corresponding vertices of the negative occurrences along the true path.
We draw the false path as a polygonal line above the variable line with
its internal vertices on a horizontal line segment, the true path
symmetrically below, and the variable tail along the variable line between
them; $g_x$ and $\src(a_x)$ lie on the variable line, so no point of the
gadget lies above the line through the first segment or below the line
through the second. In the
clause gadget of a positive clause $C$, the anchors are
$\tgt(b_1),\src(d_1),\tgt(b_2),\src(d_2)$ if $|C|=2$ and
$\tgt(b_1),\src(d_1),u_C,\src(d_2),\tgt(b_3),\src(d_3)$ if $|C|=3$, in
the required order, since the occurrence edge of $b_2$ ends at $u_C$ when
$|C|=3$. Figure~\ref{fig:planar_realization} shows a plane drawing of each
of the two gadgets with these vertices on a horizontal line segment in
this order and no point of the gadget below the line through this segment: $\src(c_C)$ is joined to
the two outermost anchors, $g_C$ lies between them and is joined to its
neighbors on the literal paths of $o_1$ and $o_{|C|}$, the clause tail
lies below $g_C$, and, for $|C|=3$, the literal path of $o_2$ runs from
$\src(c_C)$ through $\tgt(b_2)$ to $g_C$ with the edge $u_C\tgt(b_2)$
beside it. For a negative clause, we reflect this drawing across a
horizontal line.
\end{proof}

We next show that the two properties of $J$ stated above hold in
$J_{\mathrm{pl}}$.

\begin{lemma}\label{lem:planar-routes}
In every solution of $J_{\mathrm{pl}}$, the route of each assignment token
$a_x$ consists of either the true path or the false path of its variable
gadget, followed by its variable tail, and the route of each clause token
$c_C$ lies in the clause gadget of $C$, ends with the clause tail, and
contains $\tgt(b_o)$ and $\src(d_o)$ for some occurrence $o\in C$.
\end{lemma}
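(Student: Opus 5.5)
My plan is to mirror the proofs of Observations~\ref{obs:occurrence-edge-isolation} and~\ref{obs:variable-clause-route-forms}, adapting them to the occurrence paths of $J_{\mathrm{pl}}$. The first step is an isolation statement: for every blocking or delay token $z$, no route of another token contains both $\src(z)$ and $\tgt(z)$. The argument of Observation~\ref{obs:occurrence-edge-isolation} carries over verbatim, since it only uses that one of the two vertices is always occupied and that neither is the first vertex of another route.

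The second step is to show that no route of an assignment or clause token uses an occurrence edge. For every occurrence edge except that of $b_2$ in a clause with three literals, the edge joins $\src(z)$ to $\tgt(z)$, so the first step applies directly. The exceptional edge $\src(b_2)u_C$ needs a separate argument. Its endpoint $\src(b_2)$ lies on a path of a variable gadget and has exactly two gadget neighbours, so a route of another token using the edge passes through the occupied-until-moved vertex $\src(b_2)$. It must therefore be used after $b_2$ has moved, when $\tgt(b_2)$ is occupied. The part of the route inside the clause gadget must avoid $\tgt(b_2)$.

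To handle this, I would argue as follows. By Lemma~\ref{lem:planar-construction-properties}, gadgets are joined only by occurrence edges, so such a route eventually leaves the clause gadget again through another occurrence edge. Every other occurrence edge of the clause gadget ends at $\src(d_o)$ or $\tgt(b_o)$, which is the endpoint of an occurrence path. Crossing such an edge means the route contains $\src(z)$ and $\tgt(z)$ of the corresponding token only if it also visits the other end, so this needs care.

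The cleaner route, which I would try first, is a timing argument. Any occurrence-edge endpoint $w\neq u_C$ is, at every time, either occupied itself or has its partner occupied. If $w$ is occupied, the route is blocked unless $w$ is its first vertex. If $w$ is a vacated source, then $z$ has moved and its target is occupied. The aim is to show that the route must still pass that target, or else is trapped in a gadget with no exit except through another such endpoint. An assignment token starts and ends in its variable gadget, so it would have to return through a second occurrence edge. I would show by a parity/blocking count that at least one crossing point is occupied.

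If this gets messy, the fallback is a simpler argument that each variable path vertex $\src(b_o)$ or $\tgt(d_o)$ used by another token forces it to be unoccupied at that time. A vacated $\src(b_o)$ then implies that $\tgt(b_o)$ is occupied, which cuts $u_C$ off from $\tgt(b_2)$ only through $u_C\tgt(b_2)$. The remaining exits of $u_C$ are $\src(d_1)$ and $g_C$, which are handled similarly. I expect this case analysis around $u_C$ to be the main obstacle.

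Once routes are confined to their gadgets, the gadget structure finishes the proof. The variable gadget is the same cycle-plus-tail as in $J$, so the assignment-token claim follows as in Observation~\ref{obs:variable-clause-route-forms}.

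In a clause gadget, the route from $\src(c_C)$ to $\tgt(c_C)$ must end with the tail, since $\tgt(c_C)$ has the unique neighbour $g_C$ (or $\src(d_2)$, whose other neighbour is $g_C$). The route then reaches $g_C$ through one of the literal paths, or, when $|C|=3$, through the detour $\tgt(b_2),u_C$.

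I would check each simple $\src(c_C)$-$g_C$ path in this small graph:
\begin{itemize}
\item the literal paths of $o_1$ and $o_3$ contain $\tgt(b_o)$ and $\src(d_o)$ for their own occurrence;
\item the literal path of $o_2$, and the path $\src(c_C),\tgt(b_1),\src(d_1),u_C,\tgt(b_2),g_C$, contain $\tgt(b_2)$, and the tail contains $\src(d_2)$;
\item the path $\src(c_C),\tgt(b_2),u_C,g_C$ likewise contains $\tgt(b_2)$ and $\src(d_2)$, and so does the path $\src(c_C),\tgt(b_2),u_C,\src(d_1),\tgt(b_1)$, which cannot continue and so is excluded.
\end{itemize}
In every case the route contains $\tgt(b_o)$ and $\src(d_o)$ for some $o\in C$, as required.
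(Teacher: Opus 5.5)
\textbf{There is a genuine gap at the exceptional edge $\src(b_2)u_C$.} You sketch a timing argument, a ``parity/blocking count'', and a fallback case analysis around $u_C$, but you carry out none of them, and you name this as the main obstacle.

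Two parts of that sketch do not work as written. Your sentence that crossing an occurrence edge ending at $\src(d_o)$ or $\tgt(b_o)$ yields both $\src(z)$ and $\tgt(z)$ ``only if it also visits the other end'' loses track of your own second step. Crossing the edge $\src(z)\tgt(z)$ visits both of its ends by definition, so such edges are already excluded. The fallback also cannot be finished as a vertex-by-vertex blocking argument, because $u_C$ and $g_C$ are ordinary and never occupied, so nothing blocks a route there.

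You also only consider routes that enter a clause gadget through $\src(b_2)u_C$. A priori, a clause token $c_C$ could leave its gadget through $u_C\src(b_2)$ into a variable gadget, and possibly continue through an edge $\src(b_2')u_{C'}$ of another clause $C'$. Your sketch does not address this case.

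\textbf{The missing idea is topological and uses only what you have already shown.} Delete from $G_{\mathrm{pl}}$ every occurrence edge of the form $\src(z)\tgt(z)$. By your isolation step, no route of an assignment or clause token uses these edges, so every such route is a path in the remaining graph.

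By Lemma~\ref{lem:planar-construction-properties}, the only edges of the remaining graph that join two gadgets are the edges $\src(b_2)u_C$. Each is the only edge with exactly one endpoint in the clause gadget of $C$, so it is a bridge that separates that gadget from the rest of its component. The route of an assignment token has both endpoints in one variable gadget, and the route of a clause token has both endpoints in one clause gadget. So both endpoints lie on the same side of every such bridge. A simple path that crosses a bridge once ends on the other side, and one that crosses it twice repeats a vertex. Hence no route uses an edge $\src(b_2)u_C$, every route stays in its own gadget, and no timing argument is needed.

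With this step in place, your cycle-plus-tail argument for the variable gadget and your enumeration of the simple $\src(c_C)$--$g_C$ paths for $|C|=3$ finish the proof, as in the paper. That enumeration is complete, and the case $|C|=2$ is just the two arcs of a cycle.
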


\begin{proof}
The proof of Observation~\ref{obs:occurrence-edge-isolation} uses only that
$\src(z)$ and $\tgt(z)$ are the source and the target of the token $z$
alone, so the observation holds in $J_{\mathrm{pl}}$: no route of a token
other than $z$ contains both $\src(z)$ and $\tgt(z)$. The occurrence edges
fall into two classes. An occurrence edge of the form $\src(z)\tgt(z)$ is
used by no route of an assignment token or of a clause token, by the
observation. The remaining occurrence edges are the edges $\src(b_2)u_C$
of the clauses $C$ with three literals, which the observation does not
exclude, since $u_C$ is not the target of $b_2$. Let $G_{\mathrm{pl}}'$ be
obtained from $G_{\mathrm{pl}}$ by deleting the occurrence edges of the
first class; every route of an assignment token or of a clause token is a
path in $G_{\mathrm{pl}}'$. By Lemma~\ref{lem:planar-construction-properties},
the only edges of $G_{\mathrm{pl}}'$ joining two gadgets are the edges
$\src(b_2)u_C$, and such an edge is the only edge of $G_{\mathrm{pl}}'$
with exactly one endpoint in the clause gadget of $C$, so it is a bridge
of $G_{\mathrm{pl}}'$, whose deletion separates the clause gadget of $C$
from the rest of its component. A route of an assignment token has both
endpoints in one variable gadget, and a route of a clause token has both
endpoints in one clause gadget, so in both cases the two endpoints lie on
the same side of each such bridge. A path that uses a bridge once ends on
the other side of it, and a path that uses it twice is not simple; hence
neither route uses an edge $\src(b_2)u_C$. Therefore, neither route uses an
edge joining two gadgets, and each lies inside the gadget of its token.

In the variable gadget of $x$, the only edge incident with $\tgt(a_x)$ is
the variable tail, so the route of $a_x$ ends with the tail. The rest of
the route is a simple path from $\src(a_x)$ to $g_x$ that avoids
$\tgt(a_x)$, that is, a simple path in the cycle formed by the true path
and the false path. The only such paths are the two arcs of this cycle
between $\src(a_x)$ and $g_x$, which are the true path and the false path.

In the clause gadget of $C$, the only edge incident with $\tgt(c_C)$ is
the last edge of the clause tail, and, if $|C|=3$, the only neighbors of
$\src(d_2)$ in the gadget are $g_C$ and $\tgt(c_C)$. Hence, the route of $c_C$ ends with
the clause tail. The second vertex of the route is a neighbor of
$\src(c_C)$, that is, $\tgt(b_1)$, $\src(d_{|C|})$, or, if $|C|=3$,
$\tgt(b_2)$. If it is $\tgt(b_1)$, then the third vertex is $\src(d_1)$,
the only other neighbor of $\tgt(b_1)$ in the gadget, so the route
contains $\tgt(b_1)$
and $\src(d_1)$. If it is $\src(d_{|C|})$, then the third vertex is
$\tgt(b_{|C|})$, the only other neighbor of $\src(d_{|C|})$ in the gadget, so the
route contains $\tgt(b_{|C|})$ and $\src(d_{|C|})$. If it is $\tgt(b_2)$, then
the route contains $\tgt(b_2)$ and, on the clause tail, $\src(d_2)$.
\end{proof}

As for $J$, a solution of $J_{\mathrm{pl}}$ with route family $\Pi$
determines the truth assignment $\tau_\Pi$ that sets $x$ to \texttt{true}
if $a_x$ uses its true path and to \texttt{false} if $a_x$ uses its false
path; by Lemma~\ref{lem:planar-routes}, $\tau_\Pi$ is well defined. In $J$,
every route of a clause token $c_C$ contains the delay source of every
occurrence in $C$, and this is what forces the blocking token of every false
occurrence in $C$ to move before $c_C$
(Observation~\ref{obs:forced-literal-order}). In $J_{\mathrm{pl}}$, the route
of $c_C$ contains only some of these delay sources, so the next lemma
establishes this precedence only for the occurrences whose delay source lies
on the route of $c_C$.

\begin{lemma}\label{lem:planar-precedence}
Let $o$ be an occurrence in a clause $C$, and let $x$ be the variable of
$o$. In every solution of $J_{\mathrm{pl}}$ with route family $\Pi$ in
which $o$ is false under $\tau_\Pi$ and the route of $c_C$ contains
$\src(d_o)$, the token $b_o$ moves before $a_x$, $a_x$ moves before $d_o$,
$d_o$ moves before $c_C$, and $\tgt(b_o)$ is occupied when $c_C$ moves.
\end{lemma}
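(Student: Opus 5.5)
The plan is to mirror the proof of Observation~\ref{obs:forced-literal-order}, replacing its two structural inputs by Lemma~\ref{lem:planar-routes}. The argument establishes the four claims in the order stated, each as a precedence forced by a vertex lying on a route. The third claim uses the hypothesis that $\src(d_o)$ lies on the route of $c_C$ rather than the clause tail.

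First, we locate $\src(b_o)$ and $\tgt(d_o)$. In the variable gadget of $x$ these vertices lie on the false path if $o$ is positive and on the true path if $o$ is negative, exactly as in $J$; the modification only fixed their order. By Lemma~\ref{lem:planar-routes}, the route of $a_x$ is its true or its false path followed by the variable tail. Because $o$ is false under $\tau_\Pi$, this chosen path is precisely the one carrying $\src(b_o)$ and $\tgt(d_o)$, and both are internal vertices of the route of $a_x$.

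We then read off the precedences from the clearness condition, or equivalently from arcs of $D(M,\Pi)$ as in Observation~\ref{obs:precedence-acyclicity}.
\begin{enumerate}
\item The vertex $\src(b_o)$ is occupied until $b_o$ moves, and it is a vertex of the route of $a_x$ other than $\src(a_x)$. Hence $b_o$ moves before $a_x$.
\item The vertex $\tgt(d_o)$ is occupied from the move of $d_o$ onward, and it also lies on the route of $a_x$ other than $\src(a_x)$. Hence $a_x$ moves before $d_o$.
\item By hypothesis, $\src(d_o)$ lies on the route of $c_C$, and it differs from $\src(c_C)$ because it is a distinct vertex. It is occupied until $d_o$ moves, so $d_o$ moves before $c_C$.
\end{enumerate}
Chaining these gives that $b_o$ moves before $c_C$. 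Since each token moves once and stays at its target, $\tgt(b_o)$ is occupied when $c_C$ moves.

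The only point needing care is the second step: it requires that $\tgt(d_o)$ really is the target of $d_o$ in $J_{\mathrm{pl}}$ and lies on the route of $a_x$. This holds because occurrence edges of delay tokens are unchanged, and only the $b_2$ edges were rerouted to $u_C$. That rerouting does not affect where $\src(b_o)$ lies, so the argument is uniform over all occurrences, including the second occurrence of a three-literal clause.

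I expect no real obstacle; the main work was already done in Lemma~\ref{lem:planar-routes}.
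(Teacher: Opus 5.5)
Your proposal is correct and follows the paper's proof essentially step for step. Lemma~\ref{lem:planar-routes} places $\src(b_o)$ and $\tgt(d_o)$ on the route of $a_x$ when $o$ is false, and the hypothesis places $\src(d_o)$ on the route of $c_C$. These give the chain $b_o$, $a_x$, $d_o$, $c_C$, from which the occupation of $\tgt(b_o)$ follows.
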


\begin{proof}
By construction, $\src(b_o)$ and $\tgt(d_o)$ lie on the false path of the
variable gadget of $x$ if $o$ is positive and on its true path if $o$ is
negative. Since $o$ is false under $\tau_\Pi$, both vertices lie on the
route of $a_x$ by Lemma~\ref{lem:planar-routes}. The source $\src(b_o)$
must be vacated before $a_x$ moves, so $b_o$ moves before $a_x$. Moving
$d_o$ before $a_x$ would occupy $\tgt(d_o)$, an internal vertex of the
route of $a_x$, so $a_x$ moves before $d_o$. The source $\src(d_o)$ is an
internal vertex of the route of $c_C$, so $d_o$ moves before $c_C$. Hence
$b_o$ moves before $c_C$, and since each token moves at most once,
$\tgt(b_o)$ is occupied when $c_C$ moves.
\end{proof}

\uniqueMatchingPlanarHardness*

\begin{proof}
By Lemma~\ref{lem:planar-construction-properties}, $J_{\mathrm{pl}}$ is
computed in polynomial time, is a unique-matching instance, and has maximum
degree four, and by Lemma~\ref{lem:planar} its graph is planar. We first
show that $\varphi$ is satisfiable if and only if $J_{\mathrm{pl}}$ is a
yes-instance.

\smallskip
\noindent
($\Rightarrow$) Suppose that $\varphi$ has a satisfying assignment $\tau$.
We move the tokens in the five stages of the proof of
Lemma~\ref{lem:unique_matching_hardness}: first the blocking tokens of the
occurrences that are false under $\tau$, then the assignment tokens, the
delay tokens, the clause tokens, and finally the remaining blocking tokens.
Every blocking or delay token moves along its occurrence path, $a_x$ moves
along its true path if $\tau(x)=\texttt{true}$ and along its false path
otherwise, followed by its variable tail, and $c_C$ moves along the literal
path of an occurrence in $C$ that is true under $\tau$, followed by the
clause tail. The occurrence path of a blocking or delay token has no
internal vertex other than, possibly, a crossing vertex, which is ordinary,
and its target is neither a source nor the target of another token, so the
route is clear when the token moves. The internal vertices of the route of
$a_x$ are $g_x$ and the vertices $\src(b_o)$ and $\tgt(d_o)$ of the
occurrences $o$ of $x$ on the path along which $a_x$ moves, which are the
occurrences of $x$ that are false under $\tau$; their blocking tokens moved
in the first stage, their delay tokens have not yet moved, and $\tgt(a_x)$
is unoccupied, so the route is clear. The internal vertices of the route of
$c_C$ are $g_C$, possibly $u_C$, the vertex $\tgt(b_o)$ of the occurrence
$o$ whose literal path $c_C$ uses, and delay sources; $b_o$ has not moved
because $o$ is true, every delay token has moved, and $\tgt(c_C)$ is
unoccupied, so the route is clear. Hence, $J_{\mathrm{pl}}$ is a
yes-instance.

\smallskip
\noindent
($\Leftarrow$) Suppose that $J_{\mathrm{pl}}$ is a yes-instance, let $\Pi$
be the route family of a solution, and consider a clause $C$. By
Lemma~\ref{lem:planar-routes}, the route of $c_C$ contains $\tgt(b_o)$ and
$\src(d_o)$ for some occurrence $o\in C$. If $o$ were false under
$\tau_\Pi$, then by Lemma~\ref{lem:planar-precedence} $\tgt(b_o)$ would be
occupied when $c_C$ moves, and the route of $c_C$ would not be clear.
Hence, $o$ is true under $\tau_\Pi$, so $\tau_\Pi$ satisfies $C$. Since
this holds for every clause, $\tau_\Pi$ satisfies $\varphi$.

\smallskip
\noindent\textbf{Bounding the maximum degree.}
By Lemma~\ref{lem:planar-construction-properties}, the vertices of degree
four of $G_{\mathrm{pl}}$ are the crossing vertices and the junctions of
the clauses with three literals, and all of them are ordinary. We apply
Lemma~\ref{lem:subcubic-expansion} to each of them, expanding it by a tree
with four leaves and two internal vertices. The resulting instance is a
yes-instance if and only if $J_{\mathrm{pl}}$ is, its graph is planar and
has maximum degree three, and its compatible matching remains unique, since
only ordinary vertices are introduced.

\smallskip
\noindent\textbf{Bounding the solution edge multiplicity.}
Consider the solution constructed from a satisfying assignment in the
forward direction. The routes of the assignment tokens and of the
clause tokens lie in pairwise vertex-disjoint gadgets and use only edges of
true paths, false paths, variable tails, literal paths, and clause tails.
The occurrence path of a blocking or delay token consists of its occurrence
edge and, for the second occurrence of a clause $C$ with three literals,
the edge $u_C\tgt(b_2)$, which lies on no literal path and on no clause
tail. Hence, the routes of this solution are pairwise edge-disjoint, so
every constructed instance that is a yes-instance admits a solution with
solution edge multiplicity one.

For the instance of maximum degree three obtained by the degree reduction,
modify the solution above as follows: every route that passes through an
expanded vertex now passes through the tree that replaced it, entering it
through the leaf adjacent to the preceding vertex of the route and leaving
it through the leaf adjacent to the following one. The vertices of the tree
are ordinary, so these routes are clear whenever the original ones were,
and the result is a solution. At a crossing vertex $u_C$, the routes of
$b_2$ and of $c_C$ may both pass, entering and leaving the tree through four
distinct leaves, so they share at most the edge joining the two internal
vertices of the tree; at an expanded junction only the clause token passes.
Hence, every constructed instance of maximum degree three that is a
yes-instance admits a solution with solution edge multiplicity at most two.
Together with membership in \textsf{NP}, established in
Section~\ref{sec:prelim}, this proves the theorem.
\end{proof}

\subsubsection{Grids and grid graphs}
\label{sec:planar-grids}

To prove Theorem~\ref{thm:grid_and_subgrid_hardness}, we construct, from
the planar instance $J_{\mathrm{pl}}$ of
Section~\ref{sec:planar-realization}, an instance whose graph is an
induced subgraph of a grid. A \emph{grid point} is a point of the plane
with integer coordinates, so that the vertices of a grid are grid points
and the grid with vertex set the grid points of an axis-parallel
rectangle is the grid \emph{induced} by that rectangle; a \emph{grid line}
is a horizontal or vertical line through grid points. Since
$G_{\mathrm{pl}}$ is planar and has maximum degree four, it admits an
orthogonal grid drawing, in which every vertex is a grid point and every
edge is drawn as a path along grid lines. The grid points on such a
drawing induce a subdivision of $G_{\mathrm{pl}}$, since every bend and
every intermediate grid point on the drawing of an edge becomes a vertex.
We therefore first show that subdividing edges preserves feasibility.

\begin{observation}\label{obs:edge-subdivision}
Let $I=(G,S,T,L)$ be an instance of \textsc{Single-Move Labeled Token
Routing}, and let $G'$ be obtained from $G$ by subdividing any number of
edges, every subdivision vertex being an ordinary vertex of the instance
$I'=(G',S,T,L)$. Then $I$ is a yes-instance if and only if $I'$ is a
yes-instance.
\end{observation}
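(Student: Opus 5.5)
The proof will transfer solutions in both directions, keeping the matching and the movement order unchanged and only rewriting the routes. It mirrors the argument of Lemma~\ref{lem:subcubic-expansion}. Since every subdivision vertex is ordinary, it is never a source or a target, so it is unoccupied at every point of any movement order. Moreover, $I$ and $I'$ have the same sources, targets, and labeling, so they have the same compatible matchings.

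For the forward direction, I fix a solution $(M,\Pi,\pi)$ of $I$. For each route $P_{s,t}\in\Pi$, I replace every edge $uv$ that was subdivided by the path from $u$ to $v$ through its subdivision vertices. The result is a simple $s$-$t$ path in $G'$, because the new vertices are distinct for distinct edges and are not vertices of $G$. The vertex set of the new route is the old vertex set together with ordinary vertices only. At the time $q$ is moved in $\pi$, the old route was clear, and the added vertices are unoccupied. So the new route is clear, and $\pi$ is feasible for the new route family.

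For the reverse direction, I fix a solution $(M,\Pi',\pi)$ of $I'$ and take $P'_{s,t}\in\Pi'$. The endpoints $s$ and $t$ are vertices of $G$. Every subdivision vertex has degree two in $G'$, and the subdivided edges form internally disjoint paths between vertices of $G$. Hence a simple path between vertices of $G$ that enters the subdivided path of an edge $uv$ at an internal vertex must traverse that whole path from $u$ to $v$ or from $v$ to $u$; it cannot reverse direction without repeating a vertex. I therefore contract each such traversal back to the edge $uv$ and obtain a simple $s$-$t$ path $P_{s,t}$ in $G$. Its vertex set is $V(P'_{s,t})\cap V(G)$, a subset of the old vertex set, so clearness is preserved and $\pi$ remains feasible.

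The only step needing care is the argument that a simple route of $G'$ between vertices of $G$ meets each subdivided path either not at all, or entirely, or only at its endpoints. This follows from the degree-two structure of the subdivision vertices. A trivial route with $s=t$ is unaffected in both directions.
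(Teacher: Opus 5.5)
Your proposal is correct and follows essentially the paper's argument: keep the matching and the movement order, expand each route edge into its subdivided path in one direction, and contract subdivided paths back to edges in the other, using that subdivision vertices are ordinary and hence never occupied. Your degree-two argument, which shows that a simple route between vertices of $G$ either traverses a subdivided path entirely or meets it only at its ends, fills in a step the paper asserts without proof.
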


\begin{proof}
Suppose first that $I$ is a yes-instance, and let $M$, $\Pi$, and $\pi$ be a compatible matching, route family, and feasible movement order, respectively, that constitute a solution. Replace every edge of every route in $\Pi$ by the corresponding subdivided path in $G'$. Since every subdivision vertex is ordinary, it is permanently unoccupied. Hence, $\pi$ is feasible for the resulting route family, and $I'$ is a yes-instance.

Conversely, suppose that $I'$ is a yes-instance. Contract every subdivided path to the corresponding edge of $G$. This transforms every route into a simple path with the same endpoints. At the time a token moves, every internal vertex of the resulting route was also an internal vertex of its route in $G'$ and is therefore unoccupied. Hence, the resulting matching, route family, and movement order form a solution of $I$.
\end{proof}

\gridAndSubgridHardness*

\begin{proof}
For hardness on grid graphs, consider the planar instance
$J_{\mathrm{pl}}=(G_{\mathrm{pl}},S,T,L)$ of
Section~\ref{sec:planar-realization}. We use $G_{\mathrm{pl}}$ rather than
the graph of maximum degree three obtained by the degree reduction, since
we need a solution with pairwise edge-disjoint routes. Compute an
orthogonal grid drawing of $G_{\mathrm{pl}}$ of polynomial size, and
multiply all coordinates by two, so that two grid points of the drawing
are at distance one only if they are consecutive on the drawing of one
edge. Let $G_\square$ be the graph whose vertices are the grid points
lying on the drawing and whose edges join consecutive grid points along
the drawing of an edge of $G_{\mathrm{pl}}$.

By construction, $G_\square$ is a subdivision of $G_{\mathrm{pl}}$ of
polynomial size. It is also an induced subgraph of the grid: a grid edge
joining two vertices of $G_\square$ joins two grid points of the drawing
at distance one, which by the doubling are consecutive on the drawing of
one edge, so the grid edge is an edge of $G_\square$.

By Observation~\ref{obs:edge-subdivision}, $(G_{\mathrm{pl}},S,T,L)$ is a
yes-instance if and only if $(G_\square,S,T,L)$ is a yes-instance. Since
the labels are unchanged, every source in $G_\square$ remains compatible
with exactly one target. If $(G_\square,S,T,L)$ is a yes-instance, then so
is $(G_{\mathrm{pl}},S,T,L)$, which therefore admits a solution with
pairwise edge-disjoint routes; replacing every edge of each of its routes
by the path of $G_\square$ that subdivides it yields a solution of
$(G_\square,S,T,L)$ whose routes are again pairwise edge-disjoint, since
the paths that subdivide distinct edges are edge-disjoint. This proves
\textsf{NP}-hardness on grid graphs.

To obtain hardness on grids, let $H$ be the grid induced by the smallest axis-parallel rectangle containing the drawing. For every vertex $u\in V(H)\setminus V(G_\square)$, make $u$ both the source and the target of an additional token, and assign a fresh private label to $u$ as a source and as a target. The only simple route of this token is the one-vertex path at $u$, so $u$ remains occupied throughout every movement order.

Consequently, no token of the instance on $G_\square$ can use a vertex of $H\setminus V(G_\square)$ as an internal vertex of its route. Moreover, by the doubling, $H$ contains no edge with both endpoints in $V(G_\square)$ that does not already belong to $G_\square$. Therefore, adding the vertices and edges of $H\setminus G_\square$ creates no new route for any original token.

A solution for the instance on $G_\square$ extends to the padded instance
on $H$ by matching each added source with itself, using the one-vertex
route. Conversely, removing these pairs from a solution on $H$ yields a
solution on $G_\square$. The added tokens have private labels, and the
original sources retain their unique compatible targets. The added routes
use no edge, so pairwise edge-disjoint routes remain pairwise edge-disjoint.
Hence, \textsc{Single-Move Labeled Token Routing} on unique-matching
instances is \textsf{NP}-complete on grids, even on instances that, when
solvable, admit a solution with pairwise edge-disjoint routes.
\end{proof}

\subsection{Solution edge multiplicity on trees of maximum degree three}
\label{sec:tree-hardness}
We next turn to trees. On a tree, every compatible source-target pair is
connected by a unique path, so there is no choice of routes. Thus, any
hardness on trees must arise from the interaction between choosing a
compatible matching and ordering the resulting moves. The result of
Geft~\cite{Geft2026Trees} recalled in Section~\ref{sec:related} shows that
these choices alone suffice to make \textsc{Single-Move Labeled Token
Routing} \textsf{NP}-hard on instances whose tree has maximum degree three.

We further ask whether the complexity of such instances can be explained by
the amount of overlap among the routes of a solution. Our next result shows
that \textsc{Single-Move Labeled Token Routing}
is \textsf{W}[1]-hard when parameterized by an upper bound $\mu$, given as
part of the input, on the solution edge multiplicity, even
on trees of maximum degree three, and even when a movement order is given as
part of the input.

We prove the result by a parameterized reduction from \textsc{Grid Tiling}, in which we view an ordered pair $(i,j)\in[K]^2$ as a \emph{cell} of a $K\times K$ grid, to be filled with a \emph{tile}, that is, a pair of values in $[N]^2$. Formally, an instance of \textsc{Grid Tiling} consists of two positive integers $K$ and $N$ together with a set $\mathcal{T}_{i,j}\subseteq[N]^2$ for each cell $(i,j)\in[K]^2$; the tiles in $\mathcal{T}_{i,j}$ are the \emph{legal tiles for cell} $(i,j)$.

A solution is an assignment of a value in $[N]$ to each row and to each column such that the two values assigned to the row and the column of a cell form a legal tile for it, that is, if $a$ is assigned to row $i$ and $b$ is assigned to column $j$, then $(a,b)\in\mathcal{T}_{i,j}$. The problem is \textsf{NP}-hard, and \textsf{W}[1]-hard when parameterized by $K$~\cite{Marx2007GT,cygan2015parameterized}.

To encode the \textsc{Grid Tiling} instance, we construct an instance of
\textsc{Single-Move Labeled Token Routing} whose tree consists of a
\emph{tile subtree}, $2K$ \emph{choice subtrees}, one for each row and one
for each column, two \emph{storage paths}, and a \emph{cleanup path},
connected as described below and shown in Figure~\ref{fig:tree_w1_hard}. The choice subtrees are
used to choose a value for each row and each column, while the tile subtree is
used to verify that the chosen values specify a legal tile for each cell.
Each legal tile $(a,b)\in \mathcal{T}_{i,j}$ is associated with two
\emph{tile tokens}: a \emph{row tile token}, initially stored in the choice
subtree for row $i$, and a \emph{column tile token}, initially stored in the
choice subtree for column $j$. Choosing the tile $(a,b)$ for cell $(i,j)$
consists of moving its column tile token to the tile subtree, and its row
tile token into the target that the column tile token leaves free. The
choice subtree of a row or column contains one \emph{sealing token} for
each value, and moving it to a storage path, which contains targets for
the sealing tokens, chooses that value. The cleanup path contains the
sources of \emph{cleanup tokens}, used to fill targets in the choice
subtrees.

\smallskip
\noindent\textbf{Tile subtree.}
The tile subtree is a star with $K^2$ leaves, each a target vertex with a
distinct label set $\{\mathrm{tile}(i,j)\}$ for $(i,j)\in[K]^2$; the target
with label set $\{\mathrm{tile}(i,j)\}$ is the \emph{target for cell
$(i,j)$}. The sources whose label sets contain $\mathrm{tile}(i,j)$ are the
column tile token sources, introduced below.

\smallskip
\noindent\textbf{Choice subtrees.}
We construct $2K$ choice subtrees, one for each row and one for each column.
Each choice subtree contains $N$ \emph{value subtrees}, one for each of the
$N$ possible values. We denote the choice subtree for row $i$ by $\rowsub_i$
and its value subtree for value $a$ by $\rowsub_i(a)$. Similarly, we denote
the choice subtree for column $j$ by $\colsub_j$ and its value subtree for
value $b$ by $\colsub_j(b)$.

The construction must ensure that the tile tokens that leave a choice
subtree are those of a single value subtree, which then represents the value
chosen for that row or column. To that end, each value subtree is connected
to the rest of the tree through a single vertex, its \emph{attachment
vertex}, which is both a source and a target. At the attachment vertex of
$\rowsub_i(a)$, the label set of the source and the label set of the target
are both $\{\mathrm{seal}(\rowsub_i(a))\}$, and at the attachment vertex of
$\colsub_j(b)$, they are both $\{\mathrm{seal}(\colsub_j(b))\}$. The
token initially placed on an attachment vertex is the \emph{sealing token}
of its value subtree. Since the source and the target at an attachment
vertex share their label, a solution has two ways of matching a sealing
token. Leaving it at its own source keeps the attachment vertex occupied, so
that no tile token can leave the value subtree and each tile token must instead
move to a \emph{local target} inside the subtree; we then call the value
subtree \emph{sealed}. Matching the sealing token with a target outside the
choice subtree instead frees the attachment vertex; we call the value
subtree \emph{unsealed}. The storage paths described below supply only one
such outside target per choice subtree, so at most one value subtree per
choice subtree can be unsealed.

The remaining vertices of a value subtree form a path of \emph{tile token
sources}, one end of which is adjacent to the attachment vertex, together
with one \emph{local target} vertex for each tile token source, adjacent
to it as a leaf. In $\rowsub_i(a)$, there is one tile token source for each $j\in[K]$
and each $b\in[N]$ with $(a,b)\in \mathcal{T}_{i,j}$, and the token
initially placed on the source for $j$ and $b$ is the row tile token for
$(a,b)$. In $\colsub_j(b)$, there is one tile token source for each
$i\in[K]$ and each $a\in[N]$ with $(a,b)\in \mathcal{T}_{i,j}$, and the
token initially placed on the source for $i$ and $a$ is the column tile
token for $(a,b)$.

Because of the differing roles of column tile tokens and row tile tokens,
the labels of vertices in value subtrees differ depending on whether they
belong to a choice subtree for a column or for a row. In $\colsub_j(b)$, the
tile token source for $i$ and $a$ has label set
$\{\mathrm{replacement}(i,j,a,b),\mathrm{tile}(i,j)\}$ and the local target
has label set $\{\mathrm{replacement}(i,j,a,b)\}$. Thus, a column tile token
can move either to the target for its cell in the tile subtree or to its own
local target. Moreover, the only sources compatible with the local target of
the column tile token for $(a,b)$ are the source of that token itself and
the source of the row tile token for $(a,b)$ in $\rowsub_i(a)$.

In $\rowsub_i(a)$, the tile token source for $j$ and $b$ has label set
$\{\mathrm{replacement}(i,j,a,b),\mathrm{clean}(i,j)\}$ and its local target
has label set $\{\mathrm{clean}(i,j)\}$. We will use these labels to show
that a row tile token moves either to the local target of the column tile
token for the same tile, or to its own local target.

\smallskip
\noindent\textbf{Storage paths.}
The \emph{row-storage path} contains $K$ target vertices. For every
$i\in[K]$, the $i$th target has label set
$\{\mathrm{seal}(\rowsub_i(a)):a\in[N]\}$ and is therefore compatible with
every sealing token in the choice subtree $\rowsub_i$, but with no sealing
token from any other row choice subtree.

Similarly, the \emph{column-storage path} contains $K$ target vertices. For
every $j\in[K]$, the $j$th target has label set
$\{\mathrm{seal}(\colsub_j(b)):b\in[N]\}$ and is therefore compatible with
every sealing token in the choice subtree $\colsub_j$, but with no sealing
token from any other column choice subtree.

\smallskip
\noindent\textbf{Cleanup path.}
The \emph{cleanup path} contains the following source vertices:
\begin{enumerate}
\item for every $(i,j)\in[K]^2$, one source with label set $\{{\mathrm{clean}(i,j)}\}$, the \emph{cell source} for cell $(i,j)$; it can be used to occupy the local target left unoccupied when the corresponding row tile token is moved to a choice subtree for a column;
\item for every $i\in[K]$, the $i$th \emph{row source} has label set $\{{\mathrm{seal}(\rowsub_i(a)):a\in[N]}\}$; it can be used to replace a sealing token that moves from $\rowsub_i$ to the row-storage path;
\item for every $j\in[K]$, the $j$th \emph{column source} has label set $\{{\mathrm{seal}(\colsub_j(b)):b\in[N]}\}$; it can be used to replace a sealing token that moves from $\colsub_j$ to the column-storage path.
\end{enumerate}

\smallskip
\noindent\textbf{Connections.}
All vertices joining these parts are ordinary. In each choice subtree, the
attachment vertices of the $N$ value subtrees are adjacent to a common
ordinary vertex. For the row choice subtrees $\rowsub_1,\ldots,\rowsub_K$,
these common vertices are adjacent, in this order, to distinct vertices of
a path of ordinary vertices, and for the column choice subtrees
$\colsub_1,\ldots,\colsub_K$, to distinct vertices of a second such path.
One end of the first path is adjacent to one end of the second path; the
former end is also adjacent to an end of the row-storage path, and the
latter to an end of the column-storage path. The other end of the second
path is adjacent to the center of the tile subtree, and the other end of
the first path is adjacent to an end of the cleanup path. Along the cleanup
path, starting from that end, the $K^2$ cell sources come first, followed
by the $K$ row sources and then the $K$ column sources.

\begin{figure}[!tp]
\centering
\includegraphics[width=\linewidth]{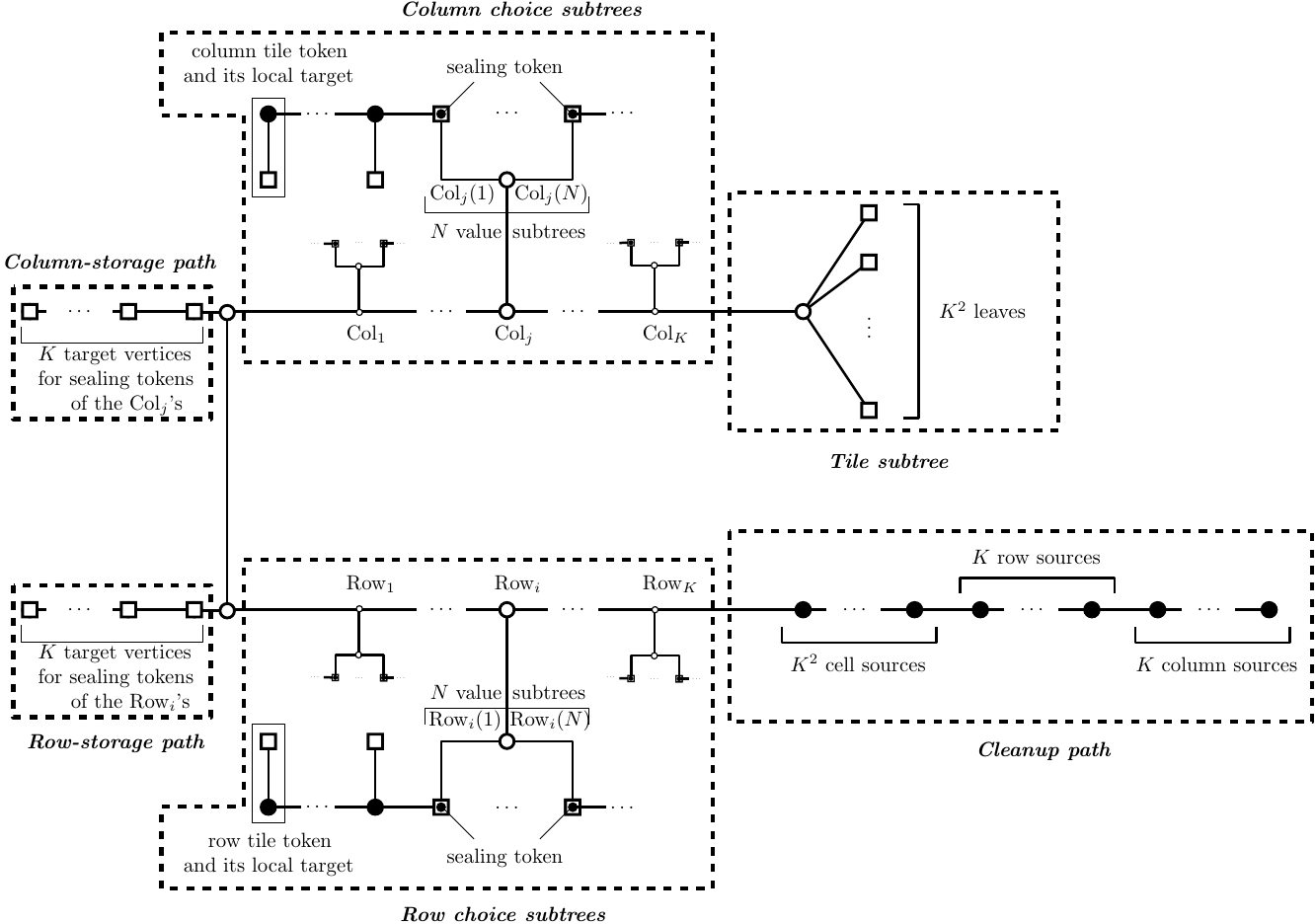}
\caption{The arrangement of the choice subtrees, the tile subtree, the two storage paths, and the cleanup path in the instance constructed from a \textsc{Grid Tiling} instance.}
\label{fig:tree_w1_hard}
\end{figure}

In the constructed tree, the vertices of degree greater than three are the center of the tile subtree, the common neighbors of the attachment vertices in the choice subtrees, and the vertices at which the storage paths and the cleanup path are attached; all of them are ordinary, since every source and every target of the construction lies inside a value subtree, is a leaf of the tile subtree, or lies on a storage path or on the cleanup path. We may therefore apply the operation of Lemma~\ref{lem:subcubic-expansion} to each of them and obtain, in polynomial time, an equivalent instance in which the tree has maximum degree three.

\noindent
For ease of reference, we summarize which sources and targets use each label, as well as their uses:

\begin{description}

\item[Tile labels.] The label $\mathrm{tile}(i,j)$ is associated with cell $(i,j)$. It appears as a source label for the column tile token for $(a,b)$ for each legal tile $(a, b)$ for cell $(i, j)$, and as the target label of a single vertex in the tile subtree.
\item[Sealing labels.] The label $\mathrm{seal}(\rowsub_i(a))$ determines whether the value subtree $\rowsub_i(a)$ is sealed. It labels a vertex in $\rowsub_i(a)$ that is both source and target, the $i$th row source in the cleanup path, and the $i$th target in the row-storage path. Similarly, $\mathrm{seal}(\colsub_j(b))$ labels a vertex in the value subtree $\colsub_j(b)$ that is both source and target, the $j$th column source in the cleanup path, and the $j$th target in the column-storage path.
\item[Replacement labels.] The label $\mathrm{replacement}(i,j,a,b)$, for each cell $(i,j)\in[K]^2$ and each tile $(a,b)\in \mathcal{T}_{i,j}$, appears in the label sets of the tile token source and of the local target in $\colsub_j(b)$ and of the corresponding tile token source in $\rowsub_i(a)$. Thus, either the tile token in $\colsub_j(b)$ or the tile token in $\rowsub_i(a)$ can move to the local target in $\colsub_j(b)$.
\item[Cleanup labels.] The label $\mathrm{clean}(i,j)$ labels a source in the cleanup path and, for every $a,b\in[N]$ with $(a,b)\in\mathcal{T}_{i,j}$, the tile token source for $j$ and $b$ in $\rowsub_i(a)$ together with its local target.
\end{description}

\smallskip
\noindent
Below we present two properties of the construction that are used in the proof.

\begin{observation}\label{obs:value-subtree-attachment}
Every route from a tile token source in a value subtree to a target outside
that value subtree contains the source of the sealing token of that value
subtree as an internal vertex. Moreover, every local target and every target
for a cell in the tile subtree is a leaf of the constructed tree.
\end{observation}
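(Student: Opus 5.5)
This observation follows directly from the construction, and I would verify its two claims separately. The ordinary vertices introduced by Lemma~\ref{lem:subcubic-expansion} do not change which vertices separate which, so I would argue on the tree before degree reduction and then note that expansion preserves the claims. Concretely, each expanded vertex is replaced by a subtree of ordinary vertices, and that subtree attaches to the former neighbors of the vertex. The attachment vertices, local targets, and leaves of the tile star are never expanded, since they are sources or targets. So both separation and leaf status carry over.

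For the first claim, I would fix a value subtree, say $\rowsub_i(a)$ (the column case is symmetric). Its vertex set is the attachment vertex, the path of tile token sources, and one local target leaf hanging from each tile token source. By construction, the only edge leaving this vertex set is the edge from the attachment vertex to the common ordinary vertex of $\rowsub_i$. Hence the attachment vertex is a cut vertex separating the rest of the value subtree from the remainder of the tree.

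Now take a tile token source $s$ in the value subtree and a target $t$ outside it. The unique $s$-$t$ path in the tree must cross this single edge, so it passes through the attachment vertex. The attachment vertex differs from $s$, since $s$ lies on the path of tile token sources, and it differs from $t$, since $t$ lies outside the value subtree. It is therefore an internal vertex of the route. This attachment vertex is exactly the source of the sealing token of the value subtree, which gives the first claim.

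For the second claim, I would read off degrees from the construction. Each local target is joined only to its tile token source, so it has degree one. Each target for a cell is a leaf of the tile star, adjacent only to the center, and the center's only additional neighbor is an end of the column ordinary path. This remains true after expanding the center, because a leaf of the star then becomes adjacent to a single leaf of the replacing tree.

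I do not expect any real obstacle here. The only point needing care is checking that the degree reduction preserves both claims, and the argument above covers it.
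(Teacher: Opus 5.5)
Your proposal is correct and follows the paper's proof: the only edge joining a value subtree to the rest of the tree is incident with the vertex that is both the source and the target of its sealing token, which is neither the tile token source nor the outside target and is therefore an internal vertex of the route, and local targets and cell targets are leaves because they are introduced that way. Your extra check that the expansions of Lemma~\ref{lem:subcubic-expansion} touch only ordinary vertices, and so preserve both the separation and the leaf status, makes explicit a point the paper leaves implicit.
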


\begin{proof}
By construction, the only edge joining a value subtree to the rest of the tree
is incident with the vertex that is both the source and the target of its
sealing token, and that vertex is not a tile token source; hence every path
leaving the value subtree passes through it, and it is an internal vertex of
such a path. The second statement holds because each local target is
introduced as a leaf neighbor of a tile token source and each target for a
cell is introduced as a leaf of the tile subtree.
\end{proof}

As mentioned above, a value subtree is \emph{unsealed} if its sealing token is matched with the corresponding target on a storage path, and \emph{sealed} otherwise; by the summary of sealing labels, the sealing token of a sealed value subtree is matched with the target at its own source.

Throughout this subsection, the given instance of \textsc{Grid Tiling} is the one fixed above, with parameters $K$ and $N$ and legal-tile sets $\mathcal{T}_{i,j}$, and $J_{\mathrm{GT}}$ denotes the constructed instance of \textsc{Single-Move Labeled Token Routing} after the operation of Lemma~\ref{lem:subcubic-expansion} has been applied to its vertices of degree greater than three, so that its tree has maximum degree three.

\begin{figure}[!tp]
\centering
\includegraphics[width=\linewidth]{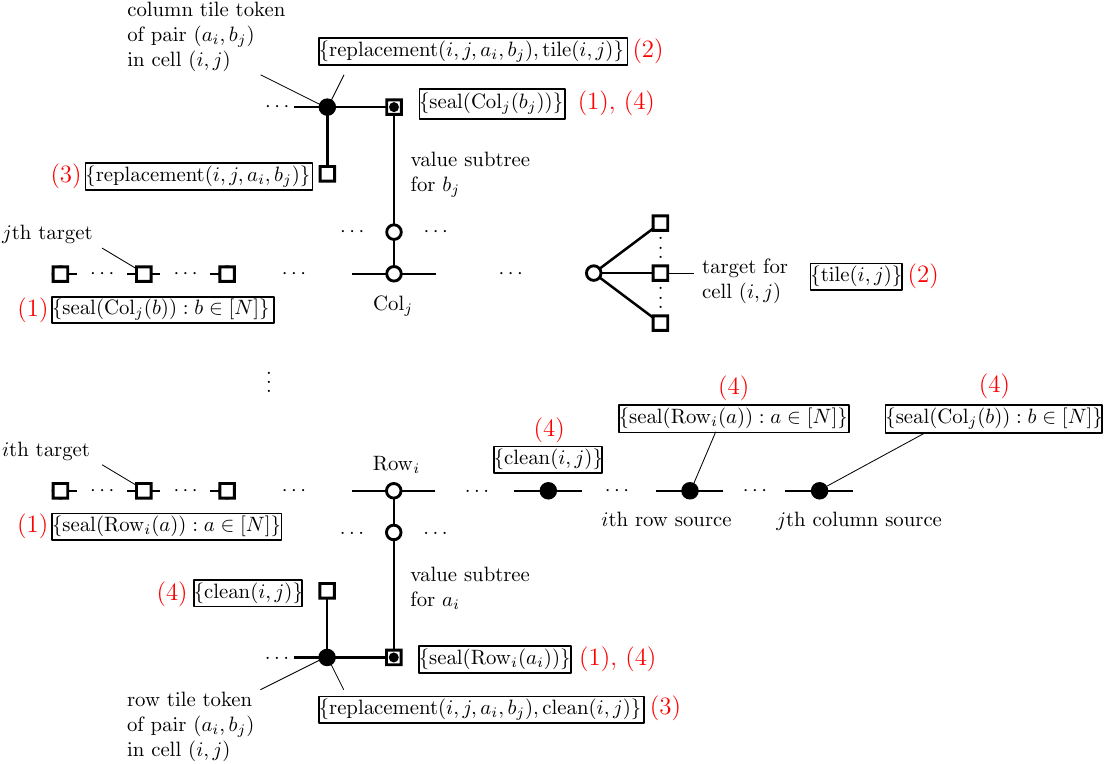}
\caption{The movement order associated with a fixed cell $(i,j)$, assuming that the value $a_i$ is chosen for row $i$ and the value $b_j$ for column $j$. The numbers in parentheses identify the order in which the moves are to be executed, and their positions indicate the source and the target of that move. The label sets of the indicated sources and targets are displayed inside rectangles. Ellipses represent portions of the construction not relevant to the depicted moves: the gadgets retain their relative positions from Figure~\ref{fig:tree_w1_hard}.}
\label{fig:w1_chain_of_moves}
\end{figure}

\begin{lemma}\label{lem:forward-solution-multiplicity}
The constructed \textsc{Single-Move Labeled Token Routing} instance can be computed in polynomial time. Moreover, if the \textsc{Grid Tiling} instance admits a solution, then $J_{\mathrm{GT}}$ admits a solution with solution edge multiplicity at most $3K^2+4K$.
\end{lemma}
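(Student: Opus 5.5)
Polynomial-time computability is immediate: the tree has $O(K^2N^2)$ tile token sources and local targets, $2KN$ attachment vertices, $O(K^2)$ vertices on the storage and cleanup paths, and $O(K)$ connecting ordinary vertices, with label sets of size $O(N)$. Lemma~\ref{lem:subcubic-expansion} then replaces each high-degree ordinary vertex by a subcubic tree of linear size. For the second claim, I would first build a solution on the unexpanded tree and transfer it through Lemma~\ref{lem:subcubic-expansion}. I would argue separately that the expansion trees do not increase the multiplicity beyond the count obtained before expansion. A route through an expanded vertex enters and leaves through leaves adjacent to the vertices it used before. Its edges inside $F_u$ lie on the leaf-to-leaf path, so every internal edge of $F_u$ separates the neighbours of $u$ into two groups and is used only by routes between the groups. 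Such routes already pass through $u$, and their number is at most the number of routes through $u$, which I will bound by the same total count below.

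Fix a Grid Tiling solution with row values $a_i$ and column values $b_j$, and let $(a_i,b_j)\in\mathcal{T}_{i,j}$ be the chosen tile for each cell. The matching is as follows. The sealing tokens of $\rowsub_i(a_i)$ and $\colsub_j(b_j)$ go to the $i$th and $j$th storage targets. All other sealing tokens stay put. The $i$th row source and the $j$th column source on the cleanup path fill the vacated attachment vertices. For each cell, the column tile token of $(a_i,b_j)$ in $\colsub_j(b_j)$ goes to the tile target for $(i,j)$. The row tile token of $(a_i,b_j)$ in $\rowsub_i(a_i)$ goes to the vacated local target in $\colsub_j(b_j)$. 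The cell source $\mathrm{clean}(i,j)$ fills the row token's own local target. All remaining tile tokens go to their own adjacent local targets.

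The movement order (as in Figure~\ref{fig:w1_chain_of_moves}) is:
\begin{enumerate}
\item[(1)] the trivial and local moves of tile tokens that stay inside their value subtrees, restricted to those in sealed subtrees and to the non-chosen tokens in unsealed subtrees (a leaf target adjacent to its own source gives a one-edge route);
\item[(2)] the $2K$ unsealing sealing tokens, which move to the storage paths, filling them from the far end inward;
\item[(3)] for each cell, the column tile token to the tile star, then the row tile token into the freed local target, then the cell source into the row token's local target;
\item[(4)] finally the row and column sources back into the attachment vertices.
\end{enumerate}
Clearness of each route must be checked, or, more cleanly, Observation~\ref{obs:precedence-acyclicity} applied by exhibiting this order as a topological order. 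The tile token sources along the path of a value subtree block each other; to avoid this, the chosen tokens move before the remaining local moves in their subtree that lie nearer the attachment vertex. Alternatively, the local moves of unsealed subtrees can be deferred, since local targets are leaves and do not block the spine. The cleanup path must be emptied from the end nearest the tree: cell sources come first, and they are only needed at step (3). The sealing replacements at step (4) sit further out, and their routes pass through cell-source positions that are already vacated. The chosen attachment vertex is empty exactly from step (2) until step (4), which is when the tile tokens pass through it.

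The multiplicity count is as follows. Routes of nonzero length that cross edges outside value subtrees are the $K^2$ column tile tokens, $K^2$ row tile tokens, $K^2$ cell sources, $2K$ sealing tokens, and $2K$ replacement sources, giving $3K^2+4K$ in total. Local moves use only their private leaf edge. So any edge carries at most this many routes, and edges inside value subtrees carry even fewer. I expect the main obstacle to be verifying clearness of the routes inside a value subtree's spine, where tile token sources and chosen tokens interleave, and getting the order on the cleanup path right. I would handle both by explicitly ordering moves by distance from the attachment vertex and checking the precedence digraph for acyclicity.
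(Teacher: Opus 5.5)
Your matching, the single-edge local moves, filling the storage paths from the far end, and the count of $3K^2+4K$ long routes (leaf edges carry one route, and edges of an expansion tree carry only routes through the expanded vertex) all agree with the paper's proof. The gap is in step~(3), which is where feasibility is actually decided. You process the cells one at a time: column tile token, then row tile token, then cell source. This fails in general, because the construction leaves two orders arbitrary and unrelated: the order of the tile token sources along each spine, and the order of the cell sources along the cleanup path. Take cells $(i,j)$ and $(i',j)$. Suppose that in $\colsub_j(b_j)$ the chosen column tile token of $(i,j)$ is nearer the attachment vertex, whereas on the cleanup path the source of $\mathrm{clean}(i',j)$ is nearer the end attached to the tree. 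Then the route of the column tile token of $(i',j)$ passes through the source of that of $(i,j)$. Likewise, the route of $\mathrm{clean}(i,j)$ passes through the source of $\mathrm{clean}(i',j)$. If the three moves of cell $(i,j)$ come first, $\mathrm{clean}(i,j)$ is blocked. If those of $(i',j)$ come first, the column tile token of $(i',j)$ is blocked. So any feasible order must interleave the two cells. The same conflict arises within a row, between the spine of $\rowsub_i(a_i)$ and the cleanup path.

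Your remarks on the spine do not repair this, and they are themselves incorrect. First, a local move nearer the attachment vertex must precede, not follow, a chosen token farther out. Second, deferring the local moves of unsealed subtrees fails: their sources lie on the spine and block the chosen tokens; only their leaf targets are harmless. The fix, which is what the paper does, is to split step~(3) into three phases:
\begin{enumerate}
\item all chosen column tile tokens, from nearest to farthest from the attachment vertex within each value subtree;
\item all chosen row tile tokens, in the same nearest-first order;
\item all cell sources, from nearest to farthest from the attached end of the cleanup path.
\end{enumerate}
Step~(4) then moves the row and column sources nearest-first. Equivalently, every arc of the precedence digraph goes forward in the sequence: local and storage moves, column tile tokens, row tile tokens, cell sources, row and column sources. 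Arcs inside a phase follow the distance orders just described. This verification is the content of the lemma and cannot be left as a promise to check acyclicity later.
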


\begin{proof}
The construction contains $\mathcal{O}\bigl(KN+\sum_{(i,j)\in[K]^2}|\mathcal{T}_{i,j}|+K^2\bigr)$ vertices and edges. The exhaustive applications of Lemma~\ref{lem:subcubic-expansion} increase the size by at most a polynomial factor. Hence, the constructed instance can be computed in polynomial time.

Suppose that the \textsc{Grid Tiling} instance admits a solution. Let $a_1,\ldots,a_K\in[N]$ and $b_1,\ldots,b_K\in[N]$ satisfy $(a_i,b_j)\in \mathcal{T}_{i,j}$ for every $(i,j)\in[K]^2$. We construct a compatible matching, use the unique tree path for each matched pair as its route, and specify a feasible movement order. The movement order is depicted in Figure~\ref{fig:w1_chain_of_moves}.

\begin{enumerate}
\item \textbf{Sealing tokens ((1) in Figure~\ref{fig:w1_chain_of_moves}).}
For every row $i\in[K]$, match the sealing token in $\rowsub_i(a_i)$ with the $i$th target of the row-storage path. For every $a\in[N]\setminus\{a_i\}$, match the sealing token in $\rowsub_i(a)$ with the target at its own source. Thus, $\rowsub_i(a_i)$ is unsealed, whereas all other value subtrees of $\rowsub_i$ remain sealed.

Similarly, for every column $j\in[K]$, match the sealing token in $\colsub_j(b_j)$ with the $j$th target of the column-storage path, and match all other sealing tokens in $\colsub_j$ with the targets at their own source. Thus, $\colsub_j(b_j)$ is unsealed, whereas all other value subtrees of $\colsub_j$ remain sealed.

Move the sealing tokens matched with storage targets before moving any tile
token, and move those matched with the same storage path in order of
decreasing distance of their storage targets from the vertex at which that
path is attached to the rest of the tree. Then, when a sealing token moves, no
storage target on its route is occupied.

\item \textbf{Column tile tokens ((2) in Figure~\ref{fig:w1_chain_of_moves}).}
For every cell $(i,j)\in[K]^2$, match the column tile token for $(a_i,b_j)$ in $\colsub_j(b_j)$ with the target for cell $(i,j)$ in the tile subtree. These vertices are compatible through the label $\mathrm{tile}(i,j)$.

Match all other column tile tokens with their local targets. Thus, the local target of the column tile token for $(a_i,b_j)$ remains unoccupied, whereas all other local targets of column tile tokens are occupied by the corresponding tile tokens.

In each column value subtree, move the column tile tokens from closest to farthest from the source of the value subtree's sealing token. The routes of the column tile tokens moved to the tile subtree are clear because the relevant value subtrees are unsealed and every tile token source closer to the sealing token's source has already been vacated. Since the targets in the tile subtree and the local targets of column tile tokens are leaves (Observation~\ref{obs:value-subtree-attachment}), occupying one of these targets does not block a later route of a column tile token.

\item \textbf{Row tile tokens ((3) in Figure~\ref{fig:w1_chain_of_moves}).}
For every cell $(i,j)\in[K]^2$, match the row tile token for $(a_i,b_j)$ in $\rowsub_i(a_i)$ with the local target of the column tile token for $(a_i,b_j)$ in $\colsub_j(b_j)$. These vertices are compatible through the label $\mathrm{replacement}(i,j,a_i,b_j)$.

Match all other row tile tokens with their local targets. Thus, for every cell $(i,j)$, the row tile token for $(a_i,b_j)$ occupies the local target left unoccupied by the corresponding column tile token.

After all column tile tokens have moved, move the row tile tokens in every row value subtree from closest to farthest from its sealing token's source. The routes to local targets in column choice subtrees are clear: the value subtrees $\rowsub_i(a_i)$ and $\colsub_j(b_j)$ are unsealed, every row tile token source closer to the sealing token's source has already been vacated, and so have all column tile token sources.

\item \textbf{Cleanup tokens ((4) in Figure~\ref{fig:w1_chain_of_moves}).}
For every cell $(i,j)\in[K]^2$, the local target of the row tile token for $(a_i,b_j)$ remains unoccupied. Match the cleanup token with label $\mathrm{clean}(i,j)$ with this target.

For every row $i\in[K]$, match the $i$th row source of the cleanup path with the target at the source of the sealing token in $\rowsub_i(a_i)$. Similarly, for every column $j\in[K]$, match the $j$th column source of the cleanup path with the target at the source of the sealing token in $\colsub_j(b_j)$.

First move the cleanup tokens with labels of the form $\mathrm{clean}(i,j)$, from closest to farthest from the vertex at which the cleanup path is attached to the rest of the tree. At this point, every row tile token source on the corresponding route has been vacated. Next, move the remaining cleanup tokens, again from closest to farthest from that vertex. Since all tile tokens have already moved, the routes to the sources of the sealing tokens of the unsealed value subtrees are clear.
\end{enumerate}

Every source and every target occurs exactly once in the proposed matching. Every sealing token in a sealed value subtree is matched with the target at its own source, while every sealing token in an unsealed value subtree is replaced by the appropriate cleanup token. Every target for a cell in the tile subtree is matched with one column tile token. For every cell, the local target of this column tile token is matched with the corresponding row tile token, and the local target of that row tile token is matched with the cleanup token carrying the corresponding cleanup label. Every remaining tile token is matched with its local target.

All nontrivial routes are clear when they are used. Therefore, the matching, the unique tree paths between its matched pairs, and the movement order form a solution of $J_{\mathrm{GT}}$.

The only routes that can use an edge shared with another route are the $K^2$ routes from column choice subtrees to the tile subtree, the $K^2$ routes from row choice subtrees to local targets in column choice subtrees, the $K^2$ routes from the cleanup path to local targets in row choice subtrees, the $2K$ routes from unsealed value subtrees to the storage paths, and the $2K$ routes from the cleanup path to the sources of the sealing tokens in unsealed value subtrees. Every remaining route is trivial or consists of the single edge joining a tile token source to its local target; since that local target is a leaf matched with no other source, no other route uses this edge. Hence, every edge, including every edge of a tree introduced by the operation of Lemma~\ref{lem:subcubic-expansion}, is used by at most $3K^2+4K$ routes, so this solution has solution edge multiplicity at most $3K^2+4K$.
\end{proof}

We next establish the properties of an arbitrary solution of $J_{\mathrm{GT}}$ that are needed for the reverse direction.

\begin{lemma}\label{lem:at-most-one-unsealed-value-subtree}
For every row $i\in[K]$, at most one value subtree of $\rowsub_i$ is unsealed. Similarly, for every column $j\in[K]$, at most one value subtree of $\colsub_j$ is unsealed.
\end{lemma}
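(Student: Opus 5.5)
The plan is a counting argument on the sources and targets that carry the sealing labels of a fixed choice subtree, say $\rowsub_i$; the column case is symmetric. Let $\Lambda_i=\{\mathrm{seal}(\rowsub_i(a)):a\in[N]\}$. We first list, from the summary of sealing labels, every source and every target whose label set meets $\Lambda_i$. On the source side these are the $N$ sealing tokens of $\rowsub_i$ and the $i$th row source of the cleanup path. On the target side they are the $N$ targets at the attachment vertices of $\rowsub_i$ and the $i$th target of the row-storage path. No tile, replacement, or cleanup label lies in $\Lambda_i$, so no other vertex carries such a label. This is the step to verify carefully against the construction, and it is unaffected by the expansions of Lemma~\ref{lem:subcubic-expansion}, which add only ordinary vertices.

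Next we use that a compatible matching pairs each source with a target whose label set intersects its own. The sealing token of $\rowsub_i(a)$ has label set $\{\mathrm{seal}(\rowsub_i(a))\}$. Its only compatible targets are therefore the target at its own attachment vertex and the $i$th row-storage target. The target at the attachment vertex of $\rowsub_i(a)$ carries only the label $\mathrm{seal}(\rowsub_i(a))$, so it is compatible only with that sealing token and with the $i$th row source.

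Now suppose that two value subtrees $\rowsub_i(a)$ and $\rowsub_i(a')$ with $a\neq a'$ are both unsealed. By definition, each sealing token is then matched with a target outside its own attachment vertex. By the previous paragraph, that target must be the $i$th row-storage target in both cases. This contradicts the fact that a matching uses each target at most once. Hence at most one value subtree of $\rowsub_i$ is unsealed, and the same argument with $\colsub_j$ and the column-storage path proves the second statement.

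I expect no real obstacle beyond checking the label bookkeeping. The one point needing care is the definition of unsealed. The paper defines it as the sealing token being matched with the storage target, and otherwise matched with its own attachment target; the argument above confirms that these are the only two options.
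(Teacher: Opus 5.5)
Your proposal is correct and matches the paper's argument: each sealing token of $\rowsub_i$ is compatible only with the target at its own attachment vertex and with the $i$th row-storage target, so two unsealed value subtrees of $\rowsub_i$ would both have to be matched with that one storage target, which a matching uses at most once. Your explicit listing of every source and target carrying a label in $\Lambda_i$ is more detailed than the paper, which just cites its summary of sealing labels. The argument only needs the compatibilities of the sealing tokens themselves.
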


\begin{proof}
Fix a row $i\in[K]$. As mentioned in the summary of sealing labels, every sealing token in $\rowsub_i$ that is not matched with the target at its own source must be matched with the $i$th target of the row-storage path. Since a matching uses this target at most once, at most one sealing token in $\rowsub_i$ can be matched with the row-storage path. Hence, at most one value subtree of $\rowsub_i$ is unsealed.

The argument for a column choice subtree is symmetric.
\end{proof}

\begin{lemma}\label{lem:cell-verification}
For every cell $(i,j)\in[K]^2$, there exists a legal tile $(a,b)\in \mathcal{T}_{i,j}$ such that the following hold:
\begin{enumerate}
\item the target for cell $(i,j)$ in the tile subtree is matched with the source of the column tile token for $(a,b)$ in $\colsub_j(b)$;
\item this column tile token is moved from $\colsub_j(b)$ to the target for cell $(i,j)$ in the tile subtree;
\item the local target of this column tile token is matched with the source of the row tile token for $(a,b)$ in $\rowsub_i(a)$; and
\item this row tile token is moved from $\rowsub_i(a)$ to the local target of the column tile token.
\end{enumerate}
Consequently, $\colsub_j(b)$ and $\rowsub_i(a)$ are unsealed.
\end{lemma}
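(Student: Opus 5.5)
Fix a cell $(i,j)$ and follow the chain of forced matchings outward from the target for cell $(i,j)$ in the tile subtree. By the summary of tile labels, the only sources compatible with this target carry $\mathrm{tile}(i,j)$, and these are exactly the column tile token sources for legal tiles $(a,b)\in\mathcal{T}_{i,j}$, located in $\colsub_j(b)$. Since every target is matched, the target for cell $(i,j)$ is matched with one of them, say the column tile token for $(a,b)$. This fixes the tile and gives item~1. Item~2 is then just the statement that a matched token moves along the unique tree path, which in a tree is the only route.

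Next comes the local target $t$ of this column tile token, in $\colsub_j(b)$. Its label set is $\{\mathrm{replacement}(i,j,a,b)\}$. By the summary of replacement labels, the only compatible sources are the column tile token itself, which is already matched to the tile subtree, and the row tile token for $(a,b)$ in $\rowsub_i(a)$ (the tile token source for $j$ and $b$). Since $t$ must be matched, it is matched with that row tile token, and item~3 follows. The uniqueness of the tree path gives item~4.

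For the final claim, use Observation~\ref{obs:value-subtree-attachment}. The column tile token moves from inside $\colsub_j(b)$ to a target outside it, so its route contains the source of the sealing token of $\colsub_j(b)$ as an internal vertex. That vertex is also the target of the same sealing token. If the value subtree were sealed, the sealing token would be matched to its own vertex and would occupy it at every time, so the route would never be clear. Hence $\colsub_j(b)$ is unsealed. The same argument applies to the row tile token leaving $\rowsub_i(a)$ for a target in $\colsub_j(b)$, so $\rowsub_i(a)$ is unsealed.

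The main point needing care is the compatibility enumeration. I must confirm from the label definitions that no other source carries $\mathrm{tile}(i,j)$ or $\mathrm{replacement}(i,j,a,b)$; in particular, the cleanup path and the sealing sources carry none. I must also confirm that a sealed sealing token is matched to its own vertex, which was stated just before Lemma~\ref{lem:at-most-one-unsealed-value-subtree}. The degree-reduction expansion of Lemma~\ref{lem:subcubic-expansion} adds only ordinary vertices, so these label arguments and Observation~\ref{obs:value-subtree-attachment} carry over to $J_{\mathrm{GT}}$ unchanged.
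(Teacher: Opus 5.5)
Your proposal is correct and follows the paper's proof essentially step by step. You force the cell target onto a column tile token via $\mathrm{tile}(i,j)$ and then force that token's local target onto the row tile token via $\mathrm{replacement}(i,j,a,b)$; you then invoke Observation~\ref{obs:value-subtree-attachment} to conclude that both value subtrees are unsealed, and you make explicit a step the paper leaves implicit: a sealed sealing token is matched to its own vertex and so occupies the attachment vertex throughout.
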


\begin{proof}
Fix a cell $(i,j)\in[K]^2$. As discussed in the summary of tile labels, the only sources compatible with the target for cell $(i,j)$ in the tile subtree are the sources of column tile tokens for legal tiles of cell $(i,j)$. Since every target is matched, this target is matched with the source of a column tile token for some legal tile $(a,b)\in \mathcal{T}_{i,j}$ in $\colsub_j(b)$. This proves the first statement.

The target for cell $(i,j)$ lies outside $\colsub_j(b)$, so the corresponding column tile token is moved from $\colsub_j(b)$ to that target, which proves the second statement. By Observation~\ref{obs:value-subtree-attachment}, its route contains the source of the sealing token of $\colsub_j(b)$ as an internal vertex, so that source cannot remain occupied when the column tile token moves. Hence, $\colsub_j(b)$ is unsealed.

As noted in the summary of replacement labels, the local target of this column tile token is compatible only with the source of the column tile token itself and with the source of the corresponding row tile token for $(a,b)$ in $\rowsub_i(a)$. The column tile token is already matched with the target for cell $(i,j)$ in the tile subtree and therefore cannot be matched with its local target. Hence, this local target is matched with the source of the corresponding row tile token. This proves the third statement.

Finally, the local target of the column tile token lies outside $\rowsub_i(a)$. Thus, the corresponding row tile token is moved from $\rowsub_i(a)$ to this target, which proves the fourth statement. By Observation~\ref{obs:value-subtree-attachment}, its route contains the source of the sealing token of $\rowsub_i(a)$ as an internal vertex, so that source cannot remain occupied when the row tile token moves. Hence, $\rowsub_i(a)$ is unsealed.
\end{proof}

\solutionMultiplicityHardness*
\begin{proof}
By Lemma~\ref{lem:forward-solution-multiplicity}, the mapping from the \textsc{Grid Tiling} instance to $J_{\mathrm{GT}}$ can be computed in polynomial time. We show that the former is a yes-instance if and only if the latter is.

\smallskip
\noindent
($\Rightarrow$) Suppose that the \textsc{Grid Tiling} instance is a yes-instance. Lemma~\ref{lem:forward-solution-multiplicity} yields a solution of $J_{\mathrm{GT}}$. Hence, $J_{\mathrm{GT}}$ is a yes-instance.

\smallskip
\noindent
($\Leftarrow$) Suppose that $J_{\mathrm{GT}}$ is a yes-instance.

For every row $i\in[K]$, apply Lemma~\ref{lem:cell-verification} to cell $(i,1)$. This yields a value $a\in[N]$ such that $\rowsub_i(a)$ is unsealed. By Lemma~\ref{lem:at-most-one-unsealed-value-subtree}, no other value subtree of $\rowsub_i$ is unsealed. Write $a_i$ for this value.

Similarly, for every column $j\in[K]$, apply Lemma~\ref{lem:cell-verification} to cell $(1,j)$. This yields a value $b\in[N]$ such that $\colsub_j(b)$ is unsealed. By Lemma~\ref{lem:at-most-one-unsealed-value-subtree}, no other value subtree of $\colsub_j$ is unsealed. Write $b_j$ for this value.

Now fix a cell $(i,j)\in[K]^2$. By Lemma~\ref{lem:cell-verification}, there exists a legal tile $(a,b)\in \mathcal{T}_{i,j}$ such that $\colsub_j(b)$ and $\rowsub_i(a)$ are unsealed. By the definitions of $a_i$ and $b_j$, together with Lemma~\ref{lem:at-most-one-unsealed-value-subtree}, we have $a=a_i$ and $b=b_j$. Therefore, $(a_i,b_j)=(a,b)\in \mathcal{T}_{i,j}$.

Since this holds for every cell $(i,j)\in[K]^2$, the values $a_1,\ldots,a_K$ and $b_1,\ldots,b_K$ form a solution of the \textsc{Grid Tiling} instance.

Thus, the construction is a polynomial-time reduction from \textsc{Grid Tiling}. Since \textsc{Grid Tiling} is \textsf{NP}-hard, \textsc{Single-Move Labeled Token Routing} is \textsf{NP}-hard on trees of maximum degree three. Together with membership in \textsf{NP}, this proves \textsf{NP}-completeness.

For the parameterized statement, map the \textsc{Grid Tiling} instance to
the pair $(J_{\mathrm{GT}},\mu)$ with $\mu=3K^2+4K$; this bound is output on every input,
whether or not the \textsc{Grid Tiling} instance is solvable, and it is
bounded by a computable function of $K$. If the \textsc{Grid Tiling}
instance is solvable, then $J_{\mathrm{GT}}$ admits a solution with solution edge
multiplicity at most $\mu$ by
Lemma~\ref{lem:forward-solution-multiplicity}. Conversely, the backward direction above converts an arbitrary solution of $J_{\mathrm{GT}}$, so in particular
one with solution edge multiplicity at most $\mu$, into a solution of the
\textsc{Grid Tiling} instance. Hence, $(J_{\mathrm{GT}},\mu)$ is a yes-instance of the
parameterized problem if and only if the \textsc{Grid Tiling} instance is a
yes-instance, so the construction is a parameterized reduction from
\textsc{Grid Tiling}, parameterized by $K$. Since \textsc{Grid Tiling} is
\textsf{W}[1]-hard when parameterized by $K$, the claimed
\textsf{W}[1]-hardness follows.
\end{proof}

\smallskip
\noindent\textbf{Prescribed movement order.}
It remains to prove the part of
Theorem~\ref{thm:sol_multiplicity_hardness} concerning the variant in which
the input additionally specifies an ordering of the tokens. Every pair of a
compatible matching contains exactly one source, and every token initially
occupies one source, so an ordering of the tokens induces a movement order
for every compatible matching; a solution of the variant must use the
induced movement order. The same reduction proves both hardness results for
this variant.

Indeed, augment each constructed instance with the ordering of its tokens
obtained by concatenating the following five sequences, in this order.
\begin{enumerate}
\item The sealing tokens; those associated with the same storage path
appear in decreasing order of the distance between their storage targets
and the vertex at which that path is attached to the rest of the tree.
\item The column tile tokens; those of the same value subtree appear in
increasing order of the distance between their sources and the source of
the sealing token of that value subtree.
\item The row tile tokens, ordered in the same way.
\item The cleanup tokens carrying labels of the form $\mathrm{clean}(i,j)$,
in increasing order of the distance between their sources and the vertex at
which the cleanup path is attached to the rest of the tree.
\item The remaining cleanup tokens, ordered in the same way.
\end{enumerate}
Ties are broken by an arbitrary fixed ordering of the vertices.

This ordering is determined solely by the constructed instance and does not depend on the values of a solution of the \textsc{Grid Tiling} instance. The induced movement order is feasible for the route family of the solution constructed in the proof of Lemma~\ref{lem:forward-solution-multiplicity} from any \textsc{Grid Tiling} solution. In both orders, all sealing tokens move before tile tokens, all column tile tokens move before row tile tokens, and all row tile tokens move before cleanup tokens, and the orders within value subtrees and paths are those used in that proof to ensure that the relevant routes are clear. Conversely, every solution of the variant is also a solution of the unrestricted instance, so the reverse direction applies unchanged. The ordering does not affect the forward solution, whose solution edge multiplicity remains at most $3K^2+4K$.

Therefore, even when a movement order is given as part of the input, \textsc{Single-Move Labeled Token Routing} is \textsf{NP}-complete on trees of maximum degree three and \textsf{W}[1]-hard when parameterized by the input bound $\mu$ on the solution edge multiplicity.

\subsection{Candidate edge multiplicity on trees of unbounded degree}
\label{sec:cm-hardness}
Our last hardness result shows that bounding the candidate edge multiplicity
alone does not make the problem tractable: on trees of unbounded degree,
\textsc{Single-Move Labeled Token Routing} remains \textsf{NP}-complete even
when the candidate edge multiplicity is at most eight.

We reduce from \textsc{4-Bounded 3-SAT}. An instance consists of a Boolean
formula $\varphi$ in conjunctive normal form in which every clause contains
exactly three literals of three distinct variables, and every variable
occurs at most four times. The question is whether $\varphi$ admits a
satisfying assignment. This problem is
\textsf{NP}-complete~\cite{DBLP:journals/dam/Tovey84}. We use the
terminology for occurrences of Section~\ref{sec:planar}: an occurrence of a
literal of $x$ or $\bar x$ in a clause $C$ is an occurrence \emph{of} the
variable $x$ and \emph{in} the clause $C$, it is \emph{positive} or
\emph{negative} according to its literal, and it is \emph{true} or
\emph{false} under a truth assignment according to the value of its
literal. Since the three literals of a clause have distinct variables, a
variable has at most one occurrence in a clause and at most four
occurrences in total, and a clause has exactly three occurrences.

To encode $\varphi$ as a tree, we construct one \emph{variable gadget} for
each variable and one \emph{clause gadget} for each clause. All gadgets are
attached to a common ordinary vertex $g$, whose degree is therefore the
number of variables plus the number of clauses. Every vertex other than $g$ has
constant degree, and the candidate edge multiplicity is at most eight, for
two reasons that Lemma~\ref{lem:cm-construction-properties} verifies once
the construction is in place: a gadget has constant size, because a
variable has at most four occurrences and a clause has three, and, for each
occurrence, only two compatible pairs have their source and their target in
different gadgets.

Figure~\ref{fig:para_np_tree} shows a variable gadget and a clause gadget.
We first describe the tokens of each gadget and the targets they may
occupy; the vertices, edges, and labels of the gadgets are specified in the
construction below. The variable gadget for a variable $x$ determines the truth value of $x$. Its tokens are the \emph{assignment token}
$a_x$ and the \emph{complement token} $\bar{a}_x$, and its targets are the
two \emph{assignment targets} $\mathrm{true}_x$ and $\mathrm{false}_x$. The
labels will ensure that, in every compatible matching, $a_x$ is matched
with one assignment target and $\bar a_x$ with the other. Matching $a_x$
with $\mathrm{true}_x$ represents setting $x$ to \texttt{true}; matching it
with $\mathrm{false}_x$ represents setting $x$ to \texttt{false}. The clause gadget for a clause $C$ determines which occurrence in $C$ is
designated to satisfy $C$. Its tokens are a \emph{choice token} $c_o$
for each occurrence $o$ in $C$ and two \emph{filler tokens}, and its
targets are a \emph{selection target} $\mathrm{sel}_o$ for each occurrence
$o$ in $C$ and two \emph{non-selection targets}. The labels will ensure
that, in every compatible matching, exactly one choice token is matched
with its selection target, the other two choice tokens are matched with the
two non-selection targets, and the two filler tokens are matched with the
two selection targets left free. The occurrence whose choice token is
matched with its selection target is \emph{selected}; the other two
occurrences are \emph{unselected}. We will show that a solution exists only
if the selected occurrence of $C$ is true.

Each occurrence $o$ of a variable $x$ in a clause $C$ contributes two
further tokens, a \emph{literal token} $\ell_o$ and a \emph{witness token}
$w_o$, and these are the only tokens whose routes leave a gadget. Their routes are placed so that, as we will show, they impose precedence
constraints between $a_x$ and $c_o$. The token $a_x$ must move before
$\ell_o$, and, if $o$ is selected, $\ell_o$ must move before $c_o$. The
token $c_o$ must move before $w_o$, and, if $o$ is false, $w_o$ must move
before $a_x$. If $o$ is both
selected and false, the four constraints form a cycle, so no feasible
movement order exists. A compatible matching thus determines a truth
assignment and one selected occurrence per clause, and we will show that it
admits a feasible movement order exactly when every selected occurrence is
true.

\smallskip
\noindent\textbf{Construction.}
Tokens are named by the initials of their roles ($a$, $c$, $\ell$, and $w$
for \emph{assignment}, \emph{choice}, \emph{literal}, and \emph{witness},
respectively; the complement token $\bar a_x$ is named after the assignment
token it complements). For each token $z$, we write $\src(z)$ for its
source and, when the label of $z$ is private, so that its target is
determined, $\tgt(z)$ for its target. The other targets share their
compatible sources with another target and are named individually. When
occurrences are numbered $o_1,o_2,\ldots$, as in
Figure~\ref{fig:para_np_tree}, we write $\ell_i$, $w_i$, $c_i$, and
$\mathrm{sel}_i$ for $\ell_{o_i}$, $w_{o_i}$, $c_{o_i}$, and
$\mathrm{sel}_{o_i}$.

\begin{description}
\item[Variable gadgets.] For each variable $x$, we introduce an ordinary
vertex $g_x$ adjacent to $g$ and, adjacent to $g_x$, the sources
$\src(a_x)$ and $\src(\bar{a}_x)$ and the assignment targets
$\mathrm{true}_x$ and $\mathrm{false}_x$. For each occurrence $o$ of $x$,
we introduce the source $\src(\ell_o)$ of the literal token, adjacent to
$\src(a_x)$, and the target $\tgt(w_o)$ of the witness token, adjacent to
$\mathrm{false}_x$ if $o$ is positive and to $\mathrm{true}_x$ if $o$ is
negative. The target of $\ell_o$ and the source of $w_o$ lie in the clause
gadget of the clause of $o$.

\item[Clause gadgets.] For each clause $C$, we introduce an ordinary vertex
$g_C$ adjacent to $g$. For each occurrence $o$ in $C$, we introduce the
source $\src(c_o)$ of the choice token, the source $\src(w_o)$ of the
witness token, the selection target $\mathrm{sel}_o$, and the target
$\tgt(\ell_o)$ of the literal token, forming the two paths
$g_C,\src(c_o),\src(w_o)$ and $g_C,\mathrm{sel}_o,\tgt(\ell_o)$. Finally,
we introduce the sources of the two filler tokens of $C$ and the two
non-selection targets of $C$, all four adjacent to $g_C$.
\end{description}

\begin{figure}[!tp]
\centering
\includegraphics[width=\linewidth]{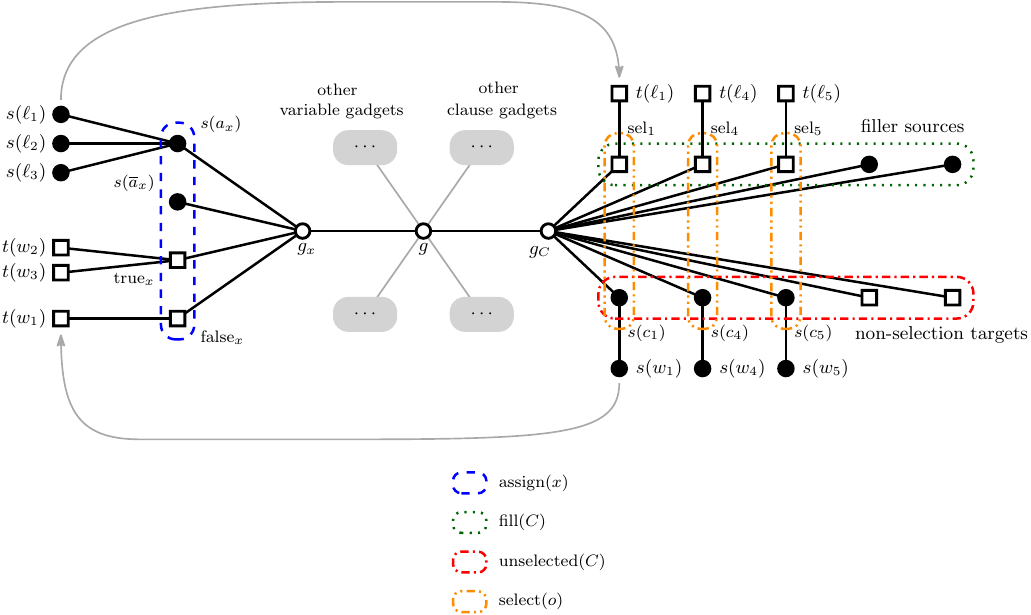}
\caption{Part of the construction of Theorem~\ref{thm:para_np_tree}: the
variable gadget for a variable $x$ with occurrences $o_1$, $o_2$, and $o_3$, of which $o_1$ is positive and lies in the clause $C$ while $o_2$
and $o_3$ are negative and lie in two other clauses, and the clause gadget
for $C$, whose occurrences are $o_1$, $o_4$, and $o_5$. A dashed enclosure
surrounds the sources and targets that carry one shared label, and a
vertex inside two enclosures carries both. The two arrows mark the two
tokens of $o_1$ that move between the gadgets.}
\label{fig:para_np_tree}
\end{figure}

\noindent
For ease of reference, we list, for each label, the sources and targets
that carry it and the compatibilities that it creates; in Figure~\ref{fig:para_np_tree}, the sources and targets carrying one shared label are surrounded by one enclosure. No label is carried by any
source or target other than those listed.

\begin{description}
\item[Assignment labels.] The label $\mathrm{assign}(x)$, for each variable
$x$, labels the sources $\src(a_x)$ and $\src(\bar{a}_x)$ and the targets
$\mathrm{true}_x$ and $\mathrm{false}_x$. Thus, $a_x$ and $\bar{a}_x$ are
each compatible with both assignment targets of $x$ and with no other
target, and each assignment target is compatible only with these two
sources.

\item[Literal labels.] The label $\mathrm{literal}(o)$, for each occurrence
$o$, is private: it labels only $\src(\ell_o)$ and $\tgt(\ell_o)$. It forces
$\ell_o$ to move from the variable gadget of its variable to the clause
gadget of its clause.

\item[Witness labels.] The label $\mathrm{witness}(o)$, for each occurrence
$o$, is likewise private: it labels only $\src(w_o)$ and $\tgt(w_o)$, and
forces $w_o$ to move from the clause gadget to the variable gadget.

\item[Selection labels.] The label $\mathrm{select}(o)$, for each occurrence
$o$, labels $\src(c_o)$ and $\mathrm{sel}_o$.

\item[Filling labels.] The label $\mathrm{fill}(C)$, for each clause $C$,
labels the sources of the two filler tokens of $C$ and the selection target
$\mathrm{sel}_o$ for each occurrence $o$ in $C$. Thus, each filler token of
$C$ is compatible with the three selection targets of $C$, and
$\mathrm{sel}_o$ is compatible with $c_o$ and with the two filler tokens of
$C$.

\item[Non-selection labels.] The label $\mathrm{unselected}(C)$, for each
clause $C$, labels $\src(c_o)$ for each occurrence $o$ in $C$ and the two
non-selection targets of $C$. Thus, each choice token of $C$ is compatible
with its selection target and with the two non-selection targets of $C$,
and each non-selection target is compatible with the three choice tokens of
$C$.
\end{description}
\smallskip
Let $J_{\mathrm{3SAT}}=(G,S,T,L)$ denote the constructed instance. For an occurrence $o$,
we call $(\src(\ell_o),\tgt(\ell_o))$ the \emph{literal pair} and
$(\src(w_o),\tgt(w_o))$ the \emph{witness pair} of $o$. Both pairs are
compatible, and, since their labels are private, both belong to every
compatible matching of $J_{\mathrm{3SAT}}$. We first establish the local properties of $J_{\mathrm{3SAT}}$, among them the bound on its candidate edge multiplicity.

\begin{lemma}\label{lem:cm-construction-properties}
The instance $J_{\mathrm{3SAT}}$ can be computed in polynomial time, $G$ is a tree, and
$\cm(J_{\mathrm{3SAT}})\le 8$. Moreover, every source is compatible with at most three
targets, every target is compatible with at most three sources, and every
vertex other than $g$ has degree at most eleven.
\end{lemma}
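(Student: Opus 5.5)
The plan is to verify each claim directly from the construction, treating the tree structure, the degrees, the compatibility counts, and the candidate edge multiplicity in turn. The one step with real content is the edge-by-edge bound $\cm(J_{\mathrm{3SAT}})\le 8$.

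\emph{Polynomial time and tree structure.} Each variable gadget has at most $5+2\cdot 4$ vertices, and each clause gadget has $1+4\cdot 3+4$ vertices. So the construction has size linear in $|\varphi|$ and is computed in polynomial time. For acyclicity, every vertex other than $g$ is introduced with exactly one edge to a previously introduced vertex: $g_x$ and $g_C$ attach to $g$, and every other vertex attaches to $g_x$, $\src(a_x)$, $\mathrm{true}_x$, $\mathrm{false}_x$, $g_C$, $\src(c_o)$ or $\mathrm{sel}_o$. Hence $G$ is connected with $|V(G)|-1$ edges, and is a tree.

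\emph{Degrees.} Read degrees off the construction, using that a variable has at most four occurrences.
\begin{itemize}
\item $g_x$ has degree $5$: it is adjacent to $g$ and to the four vertices $\src(a_x)$, $\src(\bar a_x)$, $\mathrm{true}_x$, $\mathrm{false}_x$.
\item $\src(a_x)$ has degree at most $5$: it is adjacent to $g_x$ and to at most four literal sources.
\item $\mathrm{true}_x$ and $\mathrm{false}_x$ each have degree at most $5$, by the same count with witness targets.
\item $g_C$ has degree $1+3+3+4=11$: it is adjacent to $g$, to the three $\src(c_o)$, to the three $\mathrm{sel}_o$, and to the four filler and non-selection vertices.
\item Every remaining vertex has degree at most $2$.
\end{itemize}

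\emph{Compatibility counts.} Use the label list, noting that no label other than those listed appears.
\begin{itemize}
\item $\src(a_x)$ and $\src(\bar a_x)$ each meet only $\mathrm{true}_x$ and $\mathrm{false}_x$.
\item $\src(c_o)$ meets $\mathrm{sel}_o$ and the two non-selection targets of its clause.
\item A filler source meets the three selection targets of its clause.
\item Literal and witness sources meet only their private partners.
\item Symmetrically, an assignment target meets $2$ sources, $\mathrm{sel}_o$ meets $c_o$ and the two fillers, a non-selection target meets the three choice tokens, and private targets meet $1$ source.
\end{itemize}

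\emph{Candidate edge multiplicity.} First list all compatible pairs: the four assignment pairs of each $x$, the choice, filler and non-selection pairs inside each clause gadget, and the literal and witness pair of each occurrence. Only the literal and witness pairs join two gadgets. I then bound $\cm_{J_{\mathrm{3SAT}}}(e)$ by a case analysis over the edge types, counting the pairs whose unique path uses $e$.
\begin{itemize}
\item The edge $gg_x$ is used only by the literal and witness pairs of the at most four occurrences of $x$, giving at most $8$. This is the maximum and the main point to check carefully.
\item The edge $gg_C$ carries $3\cdot 2=6$.
\item The edge $g_x\src(a_x)$ carries the two pairs from $\src(a_x)$ plus at most four literal pairs, so at most $6$.
\item The edge $g_x\mathrm{true}_x$ carries the two assignment pairs into $\mathrm{true}_x$ plus at most four witness pairs, so at most $6$. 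The same holds for $g_x\mathrm{false}_x$.
\item The edge $g_x\src(\bar a_x)$ carries $2$.
\item The edge $g_C\src(c_o)$ carries the three pairs of $c_o$ plus the witness pair of $o$, so $4$.
\item The edge $g_C\mathrm{sel}_o$ carries the three pairs into $\mathrm{sel}_o$ plus the literal pair of $o$, so $4$.
\item Filler and non-selection edges carry $3$ each.
\item The leaf edges at $\src(\ell_o)$, $\tgt(w_o)$, $\src(w_o)$ and $\tgt(\ell_o)$ carry $1$ each.
\end{itemize}
The obstacle here is only completeness of the case analysis. To make it airtight, I would observe that a compatible pair's path uses the edges $gg_x$ or $gg_C$ only if its endpoints lie in different gadgets, which only literal and witness pairs do. Inside a gadget, a pair's path uses an edge $e$ only if one endpoint lies in the subtree below $e$. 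The maximum over all edges is therefore $8$.
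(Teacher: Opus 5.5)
Your proposal is correct and follows essentially the same route as the paper. Both arguments read the tree structure off the construction (your vertex-by-vertex attachment argument is equivalent to the paper's observation that each gadget is a tree joined to $g$ by a single edge), take the degrees and compatibility counts from the construction and the label list, and bound $\cm$ edge by edge by counting compatible pairs with exactly one endpoint on the side of the edge away from $g$. Your case values ($8$ at $gg_x$, $6$ at $gg_C$, at most $6$ and $4$ at the gadget edges, $3$ and $1$ elsewhere) match the paper's bounds.
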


\begin{proof}
Each variable, clause, and occurrence contributes a constant number of
vertices, edges, and labels, so $|J_{\mathrm{3SAT}}|$ is linear in $|\varphi|$ and $J_{\mathrm{3SAT}}$ can
be computed in polynomial time. Every gadget is a tree, every gadget is
joined to the rest of $G$ by a single edge incident with $g$, and no edge
joins two gadgets; hence $G$ is a tree.

The compatibility bounds follow from the label list: $a_x$ and $\bar a_x$
are compatible with two targets, $\ell_o$ and $w_o$ with one, and every
choice token and every filler token with three. Symmetrically, every
assignment target is compatible with two sources, $\tgt(\ell_o)$ and
$\tgt(w_o)$ with one, and every selection target and every non-selection
target with three. For the degrees, $g_C$ is adjacent to $g$, to the three
choice sources, the three selection targets, the two filler sources, and the
two non-selection targets of its gadget, and to nothing else, so it has
degree $11$; the vertices $g_x$, $\src(a_x)$, $\mathrm{true}_x$, and
$\mathrm{false}_x$ are adjacent to at most five vertices, since $x$ has at
most four occurrences; every remaining vertex other than $g$ has degree at
most $2$.

We bound the candidate edge multiplicity edge by edge, counting every
compatible pair whose path uses the edge, whether or not the pair belongs
to a solution. Deleting an edge $e$ of $G$ leaves two components, and the
path of a compatible pair uses $e$ exactly when the pair has one endpoint
in each of them; since $g$ lies in one of the two components, the pairs
whose path uses $e$ are those with exactly one endpoint in the component
of $G-e$ that does not contain $g$. By the label list, the only compatible
pairs whose source and target lie in different gadgets are the literal
pair and the witness pair of each occurrence; each remaining compatible
pair joins two vertices of one gadget that are both adjacent to $g_x$ or
both adjacent to $g_C$. For $e=gg_x$, the component of $G-e$ not
containing $g$ is the variable gadget of $x$, which contains an endpoint
of the literal pair and of the witness pair of each of the at most $4$
occurrences of $x$ and no endpoint of any other literal or witness pair,
so $gg_x$ lies on at most $8$ paths; likewise, $gg_C$ lies on $6$ paths.
For an edge $e=g_xv$ with $v$ one of $\src(a_x)$, $\src(\bar{a}_x)$,
$\mathrm{true}_x$, and $\mathrm{false}_x$, the component of $G-e$ not
containing $g$ consists of $v$ and its leaves, so $e$ lies on the paths of
the $2$ compatible pairs that contain $v$ and a vertex of the gadget, and
of at most $4$ literal or witness pairs ending at leaves of $v$, hence on
at most $6$ paths. For an edge $e=g_Cv$ with $v$ another vertex of the
clause gadget, the component of $G-e$ not containing $g$ consists of $v$
and its leaf, if it has one, so $e$ lies on the paths of the at most $3$
compatible pairs that contain $v$ and a vertex of the gadget, and of at
most $1$ literal or witness pair, hence on at most $4$ paths. Every
remaining edge is incident with a leaf and lies on exactly $1$ path.
Therefore, $\cm(J_{\mathrm{3SAT}})\le 8$.
\end{proof}

We now turn to correctness. Since $G$ is a tree, a compatible matching $M$
determines its route family and hence the precedence digraph of
Section~\ref{sec:prelim}, which we denote by $D(M)$. Its vertices are the
pairs of $M$; to simplify notation, we name each pair by its token, so
that an arc from a token to another means that the former must move
before the latter under $M$. By Observation~\ref{obs:precedence-acyclicity},
$M$ is the matching of a solution if and only if $D(M)$ is acyclic.

\begin{lemma}\label{lem:cm-matching-structure}
Let $M$ be a compatible matching of $J_{\mathrm{3SAT}}$.
\begin{enumerate}
\item For each variable $x$, $M$ contains either
$(\src(a_x),\mathrm{true}_x)$ and $(\src(\bar a_x),\mathrm{false}_x)$, or
$(\src(a_x),\mathrm{false}_x)$ and $(\src(\bar a_x),\mathrm{true}_x)$.
\item For each clause $C$, there is exactly one occurrence $o$ in $C$ with
$(\src(c_o),\mathrm{sel}_o)\in M$.
\end{enumerate}
\end{lemma}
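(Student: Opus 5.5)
Both parts are counting arguments on the label list, using that $M$ is a perfect matching between $S$ and $T$ in which every pair is compatible.

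For the first part, fix a variable $x$. By the assignment labels, $\mathrm{true}_x$ and $\mathrm{false}_x$ are compatible only with $\src(a_x)$ and $\src(\bar a_x)$. Conversely, $\src(a_x)$ and $\src(\bar a_x)$ are compatible only with these two targets: their label sets are $\{\mathrm{assign}(x)\}$, and no other label is listed for them. Every target is matched, so the two assignment targets are matched with two distinct sources among $\{\src(a_x),\src(\bar a_x)\}$. Hence they use both sources, and the only two bijections between the two sets are the two alternatives stated.

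For the second part, fix a clause $C$ with occurrences $o_1,o_2,o_3$. By the filling, selection and non-selection labels:
\begin{itemize}
\item a filler token of $C$ is compatible only with $\mathrm{sel}_{o_1},\mathrm{sel}_{o_2},\mathrm{sel}_{o_3}$;
\item a choice token $c_{o_i}$ is compatible only with $\mathrm{sel}_{o_i}$ and the two non-selection targets of $C$;
\item each non-selection target is compatible only with the three choice tokens of $C$;
\item each selection target $\mathrm{sel}_{o_i}$ is compatible only with $c_{o_i}$ and the two filler tokens of $C$.
\end{itemize}
The two non-selection targets must be matched with two distinct choice tokens. The remaining choice token, say $c_{o}$, cannot go to a non-selection target, since both are taken, so it is matched with $\mathrm{sel}_{o}$. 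This gives existence.

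For uniqueness, suppose two occurrences $o\neq o'$ had $(\src(c_o),\mathrm{sel}_o)$ and $(\src(c_{o'}),\mathrm{sel}_{o'})$ in $M$. Then at most one choice token is left for the two non-selection targets, and these targets have no other compatible sources. So one of them would be unmatched, a contradiction.

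The only point needing care is confirming from the exhaustive label list that no other source or target carries these labels. The list states explicitly that no label is carried by anything not listed, so each compatibility set above is exactly as claimed and the counting goes through.
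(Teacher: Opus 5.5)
Your proposal is correct and follows essentially the same counting argument on the label list as the paper. Part~1 rests on the closed compatibility between the two assignment sources and the two assignment targets, and part~2 on the two non-selection targets taking two distinct choice tokens, which forces the third onto its selection target. Your uniqueness step by contradiction makes explicit what the paper leaves implicit (the other two choice tokens are already matched with non-selection targets), and the paper's extra remark about the filler tokens is not needed for the statement.
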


\begin{proof}
Fix a variable $x$. By the assignment labels, $\src(a_x)$ and
$\src(\bar a_x)$ are compatible only with $\mathrm{true}_x$ and
$\mathrm{false}_x$, and these two targets are compatible with no other
source. Since $M$ matches all four vertices, its restriction to these four vertices is one of the two stated possibilities.

Fix a clause $C$. By the non-selection labels, the two non-selection
targets of $C$ are compatible only with the three choice tokens of $C$, so
$M$ matches them with two distinct choice tokens. The remaining choice
token $c_o$ is compatible only with $\mathrm{sel}_o$ besides the two
non-selection targets, so $(\src(c_o),\mathrm{sel}_o)\in M$. The two
filler tokens of $C$ are compatible only with the selection targets of
$C$, so $M$ matches them with the two selection targets other than
$\mathrm{sel}_o$, and $o$ is the only occurrence in $C$ whose choice token
is matched with its selection target.
\end{proof}

Let $M$ be a compatible matching. By Lemma~\ref{lem:cm-matching-structure},
every clause $C$ has exactly one occurrence $o$ with
$(\src(c_o),\mathrm{sel}_o)\in M$; we call it the \emph{selected}
occurrence of $C$, and the other two occurrences in $C$
\emph{unselected}. By the same lemma, $M$ matches $\src(a_x)$ with
$\mathrm{true}_x$ or with $\mathrm{false}_x$ for every variable $x$, so $M$
determines a truth assignment $\tau_M$ by setting
$\tau_M(x)=\texttt{true}$ in the first case and
$\tau_M(x)=\texttt{false}$ in the second.

For the reverse direction of the reduction, we will show that if $M$ is
the matching of a solution, then $\tau_M$ satisfies $\varphi$. Since every
clause has a selected occurrence, it suffices to show that $D(M)$ has a
directed cycle whenever some selected occurrence is false under $\tau_M$: by
Observation~\ref{obs:precedence-acyclicity}, the matching of a solution then
has no false selected occurrence, so $\tau_M$ satisfies every clause. The next lemma
gives the arcs of $D(M)$ that $\ell_o$ and $w_o$ create, and shows that,
when $o$ is selected but false under $\tau_M$, four of them form such a
cycle. Two of the four exist in every compatible matching; the other two
depend on the choices made in $M$, one on whether $c_o$ is matched with
$\mathrm{sel}_o$ and one on whether $a_x$ is matched with the assignment
target adjacent to $\tgt(w_o)$.

\begin{lemma}\label{lem:cm-cycle}
Let $M$ be a compatible matching of $J_{\mathrm{3SAT}}$, and let $o$ be an occurrence of a
variable $x$. Then $D(M)$ contains the arcs $(a_x,\ell_o)$ and $(c_o,w_o)$;
it contains the arc $(\ell_o,c_o)$ if $o$ is selected, and the arc
$(w_o,a_x)$ if $o$ is false under $\tau_M$. Consequently, if $o$ is
selected and false under $\tau_M$, then $D(M)$ contains a directed cycle,
and $M$ admits no feasible movement order.
\end{lemma}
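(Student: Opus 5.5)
Each of the four arcs follows directly from one of the two rules defining $D(M)$, applied to the unique tree paths. I will verify them one at a time by locating the relevant sources and targets on the paths. Recall that $(p,q)$ is an arc if $s_p$ lies on $P_q$, or if $t_q$ lies on $P_p$.

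\emph{Arc $(a_x,\ell_o)$.} The path of $\ell_o$ runs from $\src(\ell_o)$, a leaf adjacent to $\src(a_x)$, through $\src(a_x)$, $g_x$ and $g$, and into the clause gadget. So $\src(a_x)$ lies on $P_{\ell_o}$. This holds whatever target $a_x$ is matched with, so the arc is present in every compatible matching.

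\emph{Arc $(c_o,w_o)$.} The source $\src(w_o)$ is the leaf beyond $\src(c_o)$ on the path $g_C,\src(c_o),\src(w_o)$. The path of $w_o$ toward $\tgt(w_o)$ in the variable gadget therefore passes through $\src(c_o)$, and again the arc is present unconditionally.

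\emph{Arc $(\ell_o,c_o)$ when $o$ is selected.} Now $c_o$ is matched with $\mathrm{sel}_o$, and $\mathrm{sel}_o$ lies on $P_{\ell_o}$: the path enters the clause gadget via $g_C$ and reaches the leaf $\tgt(\ell_o)$ through $\mathrm{sel}_o$. The target of $c_o$ thus lies on the route of $\ell_o$, which gives $(\ell_o,c_o)$.

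\emph{Arc $(w_o,a_x)$ when $o$ is false under $\tau_M$.} Here I check that $a_x$ is matched with the assignment target adjacent to $\tgt(w_o)$.
\begin{itemize}
\item If $o$ is positive and false, then $\tau_M(x)=\texttt{false}$. So $a_x$ is matched with $\mathrm{false}_x$, and $\tgt(w_o)$ is adjacent to $\mathrm{false}_x$.
\item If $o$ is negative and false, then $\tau_M(x)=\texttt{true}$. So $a_x$ goes to $\mathrm{true}_x$, which is adjacent to $\tgt(w_o)$.
\end{itemize}
In either case the path of $w_o$ ends by passing through $a_x$'s target just before the leaf $\tgt(w_o)$. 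The target of $a_x$ thus lies on $P_{w_o}$, giving $(w_o,a_x)$.

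Combining the four arcs yields the cycle $a_x\to\ell_o\to c_o\to w_o\to a_x$. By Observation~\ref{obs:precedence-acyclicity}, $M$ then admits no feasible movement order.

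The only delicate point is to get the arc directions right under the two rules and to confirm which vertices are interior to each path. The main risk is the degenerate case where a vertex coincides with an endpoint, for instance $\mathrm{sel}_o$ being the target of $\ell_o$; this does not occur, since $\mathrm{sel}_o$ is interior to $P_{\ell_o}$.
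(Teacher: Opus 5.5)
Your proof is correct and matches the paper's: you place $\src(a_x)$ and $\src(c_o)$ on the paths of $\ell_o$ and $w_o$ to get the two unconditional arcs, place $\mathrm{sel}_o$ and the assignment target adjacent to $\tgt(w_o)$ on those same paths to get the two conditional ones, and then apply Observation~\ref{obs:precedence-acyclicity} to the resulting cycle. Your closing concern about endpoints is unnecessary, because the arc rules test membership in $V(P_q)$ and $V(P_p)$, which include the endpoints of the paths.
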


\begin{proof}
Let $C$ be the clause of $o$. By construction, the route of $\ell_o$
contains $\src(a_x)$ and $\mathrm{sel}_o$ as internal vertices, and the
route of $w_o$ contains $\src(c_o)$ and the assignment target adjacent to
$\tgt(w_o)$ as internal vertices. By the definition of $D(M)$, the sources
$\src(a_x)$ and $\src(c_o)$ on these routes give the arcs $(a_x,\ell_o)$
and $(c_o,w_o)$; which arcs the two remaining vertices give depends on the tokens that $M$ matches with them.

If $o$ is selected, then $\mathrm{sel}_o$ is the target of $c_o$ and lies
on the route of $\ell_o$, so $D(M)$ contains the arc $(\ell_o,c_o)$. The
assignment target adjacent to $\tgt(w_o)$ is $\mathrm{false}_x$ if $o$ is
positive and $\mathrm{true}_x$ if $o$ is negative, so by
Lemma~\ref{lem:cm-matching-structure} and the definition of $\tau_M$ it is
the target of $a_x$ exactly when $o$ is false under $\tau_M$; since it
lies on the route of $w_o$, $D(M)$ contains the arc $(w_o,a_x)$ in that
case.

If $o$ is selected and false under $\tau_M$, the four arcs form the
directed cycle $a_x,\ell_o,c_o,w_o,a_x$ in $D(M)$, so by
Observation~\ref{obs:precedence-acyclicity} no feasible movement order
exists for $M$.
\end{proof}

We now have all the statements needed to prove this subsection's main result.

\candidateMultiplicityHardness*

\begin{proof}
By Lemma~\ref{lem:cm-construction-properties}, $J_{\mathrm{3SAT}}$ can be computed in
polynomial time, $G$ is a tree, $\cm(J_{\mathrm{3SAT}})\le 8$, every source is compatible
with at most three targets, and every target is compatible with at most
three sources. We show
that $\varphi$ is satisfiable if and only if $J_{\mathrm{3SAT}}$ is a yes-instance.

\smallskip
\noindent
($\Leftarrow$) Suppose that $J_{\mathrm{3SAT}}$ is a yes-instance, and let $M$ be the
compatible matching of a solution. We show that $\tau_M$, the truth
assignment that sets $x$ to \texttt{true} if $M$ matches $\src(a_x)$ with
$\mathrm{true}_x$ and to \texttt{false} if it matches $\src(a_x)$ with
$\mathrm{false}_x$, satisfies $\varphi$. Fix a clause $C$, and let $o$ be
its selected occurrence, which exists by
Lemma~\ref{lem:cm-matching-structure}. If $o$ were false under $\tau_M$,
then by Lemma~\ref{lem:cm-cycle} the matching $M$ would admit no feasible
movement order, contradicting the choice of $M$. Hence, $o$ is true under
$\tau_M$, so $\tau_M$ satisfies $C$. Since this holds for every clause,
$\tau_M$ satisfies $\varphi$.

\smallskip
\noindent
($\Rightarrow$) Suppose that $\varphi$ has a satisfying assignment $\tau$.
Define a compatible matching $M$ as follows. For each variable $x$, include
$(\src(a_x),\mathrm{true}_x)$ and $(\src(\bar a_x),\mathrm{false}_x)$ if
$\tau(x)=\texttt{true}$, and $(\src(a_x),\mathrm{false}_x)$ and
$(\src(\bar a_x),\mathrm{true}_x)$ otherwise. For each occurrence $o$,
include the literal pair and the witness pair of $o$. For each clause $C$,
fix an occurrence $o$ in $C$ that is true under $\tau$; include
$(\src(c_o),\mathrm{sel}_o)$, match the other two choice tokens of $C$
with the two non-selection targets of $C$, and match the two filler tokens
of $C$ with the two remaining selection targets of $C$. Each source and
each target occurs in exactly one pair, and every pair is compatible, so
$M$ is a compatible matching with $\tau_M=\tau$ in which the occurrence
fixed for each clause is selected. In particular, every selected occurrence is true under $\tau$. We now construct a feasible movement order. We move the tokens in
seven stages, moving all tokens of a stage, in any order, before those of
the next; the vertices $g$, $g_x$, and $g_C$ are ordinary and hence never
occupied.

\begin{enumerate}
\item Move every choice token of an unselected occurrence to its
non-selection target, along a route whose only internal vertex is $g_C$.
The route is clear.

\item Move every witness token $w_o$ of an unselected occurrence $o$ that
is false under $\tau$. Its route passes through $\src(c_o)$, vacated in
stage~1, and through the assignment target adjacent to $\tgt(w_o)$, which
is still unoccupied because no token of a variable gadget has moved; its
target $\tgt(w_o)$ is unoccupied. The route is clear.

\item Move every assignment token $a_x$ to the assignment target matched
with it, along a route whose only internal vertex is $g_x$. The witness
targets occupied in stage~2 are leaves adjacent to that assignment target
and do not lie on this route, and the assignment target itself is
unoccupied.

\item Move every literal token $\ell_o$. Its route passes through
$\src(a_x)$, where $x$ is the variable of $o$, vacated in stage~3, and
through $\mathrm{sel}_o$ and $\tgt(\ell_o)$, which are still unoccupied
because no choice token or filler token has reached a selection target.
The route is clear.

\item For each clause $C$ with selected occurrence $o$, move $c_o$ along
$\src(c_o),g_C,\mathrm{sel}_o$; the route ends at $\mathrm{sel}_o$ and does not contain the occupied vertex $\tgt(\ell_o)$. Then move the two
filler tokens of $C$ to the two remaining selection targets of $C$, along
routes whose only internal vertex is $g_C$.

\item Move every witness token $w_o$ that has not yet moved. The
occurrence $o$ is true under $\tau$, so the assignment target adjacent to
$\tgt(w_o)$ is matched with $\bar a_x$, where $x$ is the variable of $o$,
and is still unoccupied; $\src(c_o)$ has been vacated, in stage~1 if $o$
is unselected and in stage~5 if it is selected. The route is clear.

\item Move every complement token $\bar a_x$ to the assignment target not
taken by $a_x$, which is unoccupied, along a route whose only internal
vertex is $g_x$; the witness targets occupied in stage~6 are leaves
adjacent to that assignment target and do not lie on the route.
\end{enumerate}

Every token moves exactly once along the unique path of its pair in $M$.
Hence $M$, together with this movement order, is a solution, and $J_{\mathrm{3SAT}}$ is a
yes-instance.

\smallskip
Thus, the construction is a polynomial-time reduction from
\textsc{4-Bounded 3-SAT}, so \textsc{Single-Move Labeled Token Routing} is
\textsf{NP}-hard under the stated restrictions. Together with membership in
\textsf{NP}, this proves \textsf{NP}-completeness.
\end{proof}

The high degree of the construction is concentrated at $g$, and
Lemma~\ref{lem:subcubic-expansion} cannot remove it without losing the
bound on the candidate edge multiplicity. The lemma preserves feasibility,
but replacing $g$ by a tree $F_g$ of maximum degree three routes the
literal and witness pairs of all occurrences through the edges of $F_g$,
and an edge of $F_g$ then lies on the path of every such pair whose two
gadgets it separates, a number that is not bounded by a constant for an
arbitrary formula. Indeed, by
Theorem~\ref{thm:tree_candidate_multiplicity_fpt}, bounded degree together
with bounded candidate edge multiplicity yields tractability, so any
hardness result with bounded candidate edge multiplicity requires unbounded
maximum degree, unless $\textsf{P}=\textsf{NP}$.

\section{Positive results beyond the all-matching case}
\label{sec:positive}

The hardness results of the previous section show that restricting the
underlying graph to a tree of maximum degree three is not, by itself,
sufficient to obtain tractability. We first show that \textsc{Single-Move
Labeled Token Routing} can be solved efficiently on paths and when the number
of tokens is bounded. We then return to trees under a bound on the candidate
edge multiplicity together with the maximum degree; informally, bounding the
candidate edge multiplicity limits the number of possible matching choices
between the sources and the targets across any edge of the tree. Finally, we
show that the same algorithm yields fixed-parameter tractability when the
problem is parameterized by the candidate vertex multiplicity alone.

\subsection{Paths}
\label{sec:paths}

If an instance on a path admits a solution, then the matching used by
that solution is forced. Thus, it suffices to determine this matching, test
the compatibility of its pairs, and compute a feasible movement order.

\pathTime*

\begin{proof}
Let $P$ be the underlying path, designate one of its endpoints as the left
endpoint, and number the sources from left to right as $s_1,\ldots,s_k$ and the
targets from left to right as $t_1,\ldots,t_k$. The algorithm accepts if and
only if $s_i$ and $t_i$ are compatible for every $i$.

We first show that a solution, if one exists, uses the matching
$\{(s_i,t_i):i\in[k]\}$. Consider the first move of an arbitrary
solution and suppose it takes the token at $s_i$ to $t_j$. If $s_i=t_j$,
the route of this move is the one-vertex path at that vertex, which stays
occupied throughout, so no other route uses it, and each remaining source is
matched with a target on its own side of this vertex. Since $i-1$ sources and $j-1$
targets lie to its left, $i=j$, and the two sides are handled by induction.
Assume henceforth that $s_i\ne t_j$. Once $t_j$ is
occupied, no later route may use it, so every remaining source on one side of
$t_j$ must be routed to a target on the same side. There are exactly $j-1$
targets to the left of $t_j$. If $s_i$ lies to the left of $t_j$, then exactly
$j$ sources lie to the left of $t_j$ before the move, and since the route from
$s_i$ to $t_j$ is clear, $s_i$ is the last of them, so $i=j$. If $s_i$ lies to
the right of $t_j$, then exactly $j-1$ sources lie to the left of $t_j$ and the
clear route forces $s_i$ to be the first source to the right of $t_j$, so again
$i=j$. The move therefore matches $s_i$ to $t_i$, and it leaves all pairs of
smaller index to the left of $t_i$ and all pairs of larger index
to the right of it. Applying the argument independently to the two
sides and inducting yields the claim; consequently, if $s_i$ and $t_i$ are
incompatible for some $i$, no solution exists.

Assume now that $s_i$ and $t_i$ are compatible for every $i$. We construct a
feasible movement order for the matching $\{(s_i,t_i):i\in[k]\}$. First move
the tokens whose target lies to the right of their source, in decreasing
order of their sources; then move the tokens whose target lies to the left of
their source, in increasing order of their sources. We check that each route
is clear when its token moves, writing $u<v$ when the vertex $u$ lies to the
left of the vertex $v$. Let $s_i<t_i$. A source $s_j$ with
$s_i<s_j\le t_i$ has $j>i$, so $t_j>t_i\ge s_j$ and the token at $s_j$ moved
before the token at $s_i$; a target $t_j$ with $s_i<t_j<t_i$ has $j<i$, so
$s_j<s_i<t_j$ and the token at $s_j$ has not moved yet. Now let $t_i<s_i$. A
source $s_j$ with $t_i\le s_j<s_i$ has $j<i$, so $t_j<t_i\le s_j$ and the token at
$s_j$ moved before the token at $s_i$; a target $t_j$ with $t_i<t_j<s_i$ has
$j>i$, so $t_j<s_i<s_j$ and the token at $s_j$ has not moved yet. In both
cases the route from $s_i$ to $t_i$ contains no occupied vertex other than
$s_i$.

It remains to bound the running time. The sources and the targets are found
by a single traversal of $P$ in time $\mathcal{O}(n)$, and the movement order
is produced in time $\mathcal{O}(n)$ by listing the tokens whose source
precedes their target in decreasing order of $i$, followed by the remaining
tokens in increasing order of $i$. What remains is to decide, for every $i$,
whether the label set $L_S(s_i)$ of the source $s_i$ and the label set
$L_T(t_i)$ of the target $t_i$ intersect. Assume, as the encoding allows,
that the labels are integers in $[|\Lambda|]=\{1,\ldots,|\Lambda|\}$, and
initialize a Boolean array indexed by $[|\Lambda|]$ with every entry false,
in time $\mathcal{O}(|I|)$. For each $i$ in turn, set the entry of every
label of $L_S(s_i)$ to true, scan $L_T(t_i)$ for a label whose entry is
true, and reset the entries of $L_S(s_i)$ to false. Each label of each set is
handled a constant number of times, so all $k$ tests take time
$\sum_{i=1}^{k}\bigl(|L_S(s_i)|+|L_T(t_i)|\bigr)=\mathcal{O}(|I|)$, and the
algorithm runs in time $\mathcal{O}(|I|)$.
\end{proof}

\subsection{A bounded number of tokens}
\label{sec:tokens}

By Theorem~\ref{thm:sol_multiplicity_hardness}, the problem remains
\textsf{NP}-complete even when a movement order is given as part of the
input. When the number of tokens is small, however, both the matching and the
movement order can be enumerated, and once both are chosen, finding the routes
reduces to a sequence of reachability tests.

\tokenNumberFpt*

\begin{proof}
Let $I=(G,S,T,L)$ be an instance with $|S|=|T|=k$. Fix a compatible matching
$M$, write $M(s)$ for the target matched with a source $s$, and fix an
ordering $(s_1,\ldots,s_k)$ of $S$. Suppose that the tokens initially placed
at $s_1,\ldots,s_{i-1}$ have been moved to $M(s_1),\ldots,M(s_{i-1})$ and
that no other token has moved. Then the occupied vertices other than $s_i$
are exactly the vertices $M(s_j)$ with $j<i$ and $s_j$ with $j>i$, whatever
routes were used. Hence, there is a clear $s_i$-$M(s_i)$ route at this point
if and only if $M(s_i)$ is reachable from $s_i$ in the graph $G_i$ obtained
from $G$ by deleting these vertices, and $I$ is a yes-instance if and only
if, for some compatible matching $M$ and some ordering of $S$, this holds
for every $i\in[k]$; the routes are then any $s_i$-$M(s_i)$ paths in the
graphs $G_i$.

The algorithm enumerates every bijection $M$ from $S$ to $T$, discards it
unless $L_S(s)\cap L_T(M(s))\neq\emptyset$ for every $s\in S$, and, for
each remaining $M$ and each ordering $(s_1,\ldots,s_k)$ of $S$, tests by
breadth-first search whether $M(s_i)$ is reachable from $s_i$ in $G_i$ for
every $i\in[k]$. It accepts if and only if all $k$ tests succeed for some
$M$ and some ordering, which is correct by the previous paragraph. There are
$k!$ bijections and $k!$ orderings, the compatibility check takes time
polynomial in $|I|$, and each of the $k$ breadth-first searches takes time
$\mathcal{O}(n+m)$, so the running time is
$(k!)^2\cdot|I|^{\mathcal{O}(1)}$, which is of the form
$f(k)\cdot|I|^{\mathcal{O}(1)}$.
\end{proof}

\subsection{Bounded candidate edge multiplicity on bounded-degree trees}
\label{sec:treefpt}

The candidate edge multiplicity bounds, for each edge of the tree, the number of compatible source-target pairs whose unique paths use that edge. After rooting the tree, this bounds the number of pairs of a matching whose paths leave any rooted subtree. Bounded degree then bounds the number of child subtrees that are combined at each vertex. Together, these two bounds yield, via dynamic programming, a fixed-parameter algorithm for instances on trees in which both the candidate edge multiplicity and the maximum degree of the tree are bounded. We first present the notation and describe the algorithm. We then prove its correctness (Lemma~\ref{lem:tree-dp-correctness}) and bound its running time (Lemma~\ref{lem:tree-candidate-multiplicity-running-time}) in separate statements, since Section~\ref{sec:treecvmfpt} reuses the algorithm and its correctness proof but bounds its running time by a different parameter.

\smallskip
\noindent\textbf{Rooted tree, candidate pairs, and partial matchings.}
Let $I=(G,S,T,L)$ be a tree instance, and let $\Delta$ be the maximum degree of $G$. For a compatible pair $q=(s,t)$ we write $P_q$ for the unique $s$-$t$ path in $G$, which is the route of $q$ in every solution whose matching contains $q$, and we say that $q$ \emph{crosses} an edge if $P_q$ uses that edge. Root $G$ at an arbitrary vertex $r$. For each vertex $v$, let $G_v$ denote the subtree rooted at $v$. For each vertex $v$ other than $r$, let $e_v$ be the edge joining $v$ to its parent, called the \emph{boundary edge} of $G_v$. We call $v$ the interface vertex of $G_v$: every path of a pair crossing $e_v$ meets $G_v$ at $v$ before crossing $e_v$. Let $\cand(v)$ be the set of compatible pairs that cross $e_v$, called the \emph{candidate pairs} of $v$. Thus, $\cand(v)$ consists of the compatible pairs with one endpoint in $G_v$ and the other outside $G_v$, and $|\cand(v)|=\cm_I(e_v)\le\cm(I)$ by the definition of candidate edge multiplicity. For the root we set $\cand(r)=\emptyset$. A set of compatible pairs is a \emph{partial matching} if every source occurs in at most one pair and every target occurs in at most one pair.

\smallskip
\noindent\textbf{Description of the entries in the dynamic programming table.}
A subtree interacts with the rest of the tree only through compatible pairs
whose paths cross its boundary edge. The interface vertex is the point at
which all such paths leave the subtree, and the table stores the boundary
pairs together with the precedence relation forced among them. For each
vertex $v$, the table contains one entry $\operatorname{DP}[v,X,\prec]$ for each partial matching $X\subseteq\cand(v)$ and each strict partial order $\prec$ on $X$, that is, each transitive and irreflexive relation on $X$. Every entry is either true or false. The set $X$ stands for the pairs of a partial matching that cross $e_v$, and $p\prec q$ stands for the requirement, imposed by the pairs of that matching inside $G_v$, that the move of $p$ precede the move of $q$.
 
We say that a partial matching $M$ \emph{realizes} the state $(X,\prec)$ at $v$ if it satisfies the following conditions.
 
\begin{enumerate}
\item \textbf{Coverage.} Every source in $S\cap V(G_v)$ appears as the source of exactly one pair of $M$, and every target in $T\cap V(G_v)$ appears as the target of exactly one pair of $M$.
 
\item \textbf{Locality.} Every pair of $M$ has at least one endpoint in $V(G_v)$.
 
\item \textbf{Boundary.} $M\cap\cand(v)=X$. For $v\neq r$, this means that the pairs of $M$ that cross $e_v$ are precisely the pairs of $X$; for $v=r$, it means that $X=\emptyset$.
 
\item \textbf{Precedence.} Let $D_v(M)$ be the digraph with vertex set $M$ in which, for two distinct pairs $p=(s_p,t_p)$ and $q=(s_q,t_q)$ of $M$, there is an arc $(p,q)$, meaning that the move of $p$ must precede the move of $q$, if $s_p\in V(G_v)\cap V(P_q)$ or $t_q\in V(G_v)\cap V(P_p)$; we say that such an arc is \emph{witnessed in $G_v$}. The digraph $D_v(M)$ is acyclic, and, for $p,q\in X$, we have $p\prec q$ if and only if $D_v(M)$ contains a directed path from $p$ to $q$. Informally, $p\prec q$ precisely when the matching inside $G_v$ forces $p$ to be moved before $q$.
\end{enumerate}
 
The intended invariant is that $\operatorname{DP}[v,X,\prec]$ is true if and only if some partial matching realizes $(X,\prec)$ at $v$. We write $\emptyset$ both for the empty set and for the empty order.

\smallskip
\noindent\textbf{Initialization.}
Let $v$ be a vertex with no children. We mark the following entries as true; every state not listed below is set to false.
 
\begin{enumerate}
\item If $v\notin S\cup T$, then $\operatorname{DP}[v,\emptyset,\emptyset]$ is true.
 
\item If $v\in S\setminus T$, then, for each pair $(v,t)\in\cand(v)$, the entry $\operatorname{DP}[v,\{(v,t)\},\emptyset]$ is true.
 
\item If $v\in T\setminus S$, then, for each pair $(s,v)\in\cand(v)$, the entry $\operatorname{DP}[v,\{(s,v)\},\emptyset]$ is true.
 
\item Suppose that $v\in S\cap T$. If $(v,v)$ is compatible, then $\operatorname{DP}[v,\emptyset,\emptyset]$ is true. In addition, for every two pairs $(v,t),(s,v)\in\cand(v)$ that together form a partial matching, the entry $\operatorname{DP}[v,X,\prec]$
with $X=\{(v,t),(s,v)\}$ and $(v,t)\prec(s,v)$ is true, the token at $v$
having to leave before the token assigned to $v$ arrives.
\end{enumerate}

\smallskip
\noindent\textbf{Combining child states.}
Let $v$ be a vertex with at least one child. For each child $u$ of $v$,
choose a true entry $\operatorname{DP}[u,X_u,\prec_u]$; choose also a
partial matching $X\subseteq\cand(v)$. If $v\in S\cap T$, we additionally
choose either the compatible self-pair $(v,v)$ or no self-pair. Let
$Y=X\cup\bigcup_u X_u$, where $u$ ranges over the children of $v$. These
are the pairs whose routes use an edge incident with $v$. We check the
following three conditions on $Y$. If all three hold, we compute an order
$\prec$ on $X$ and mark $\operatorname{DP}[v,X,\prec]$ as true.
  
\begin{enumerate}
\item \textbf{Endpoint conditions.} The set $Y$ is a partial matching. Moreover, for each child $u$ of $v$, the set $X_u$ consists precisely of the pairs in $Y$ having exactly one endpoint in $V(G_u)$. If $v$ is not the root, then $X$ consists precisely of the pairs in $Y$ having exactly one endpoint in $V(G_v)$.
 
\item \textbf{Requirements at $v$.} If $v\notin S\cup T$, then no pair in $Y$ has source or target $v$. If $v\in S\setminus T$, then exactly one pair in $Y$ has source $v$, and no pair in $Y$ has target $v$. Symmetrically, if $v\in T\setminus S$, then exactly one pair in $Y$ has target $v$, and no pair in $Y$ has source $v$.
 
If $v\in S\cap T$, then there are two possibilities. If the self-pair $(v,v)$ is chosen, then no pair in $Y$ has source or target $v$, and no pair in $Y$ uses $v$ as an internal vertex of its path. Otherwise, exactly one pair in $Y$ has source $v$, and exactly one pair in $Y$ has target $v$.
 
\item \textbf{Acyclicity.} Let $D$ be the digraph with vertex set $Y$ whose
arcs are the following. For each child $u$ of $v$ and all $p,q\in X_u$
with $p\prec_u q$, the pair $(p,q)$ is an arc. For all distinct
$p=(s_p,t_p)$ and $q=(s_q,t_q)$ in $Y$, the pair $(p,q)$ is an arc whenever
$s_p=v$ and $v\in V(P_q)$, or $t_q=v$ and $v\in V(P_p)$; we say that such
an arc is \emph{witnessed at $v$}. The digraph $D$ is acyclic.
\end{enumerate}
 
We define $\prec$ on $X$ by setting $p\prec q$ if and only if $D$ contains
a directed path from $p$ to $q$; since $D$ is acyclic, $\prec$ is a strict
partial order. Every entry $\operatorname{DP}[v,X,\prec]$ not marked
true by any choice is false. Note that the entry
$\operatorname{DP}[v,X,\prec]$ keeps only $X$ and not all of $Y$. By the
endpoint conditions, each pair of $Y\setminus X$ has both endpoints in
$G_v$, so no later combination can interact with that pair except through
the precedence relation that it induces on the boundary pairs. The
computation above $v$ therefore needs only $X$ and $\prec$.

Figure~\ref{fig:dp-combine} illustrates one combination step, at a vertex
$v$ with children $u$ and $w$. The pairs $p_1$ and $p_4$ have an endpoint
outside $G_v$ and therefore cross $e_v$, the pair $p_3$ joins the two child
subtrees, the pair $p_2$ has source $v$, and the pair $p_5$ lies inside
$G_u$. In $G_u$, the route of $p_1$ contains $\src(p_5)$ and the route of
$p_3$ contains $\tgt(p_5)$, so the arcs witnessed in $G_u$ are $(p_5,p_1)$
and $(p_3,p_5)$. The pair $p_5$ does not cross $e_u$, so the entry chosen
at $u$ has $X_u=\{p_1,p_3\}$ and $p_3\prec_u p_1$. In $G_w$, the route of
$p_3$ contains $\src(p_4)$, so the entry chosen at $w$ has $p_4\prec_w p_3$.
Since $v$ is the source of $p_2$ and lies on the routes of $p_1$, $p_3$, and $p_4$, the arcs $(p_2,p_1)$, $(p_2,p_3)$, and $(p_2,p_4)$ are witnessed at
$v$. The digraph $D$ consists of these three arcs together with $(p_3,p_1)$
and $(p_4,p_3)$, and it is acyclic. Only $p_1$ and $p_4$ cross $e_v$, so
$X=\{p_1,p_4\}$, and the directed path $p_4,p_3,p_1$ of $D$ gives
$p_4\prec p_1$, an order contained in neither child entry.
 
\begin{figure}[h]
\centering
\includegraphics[width=\linewidth]{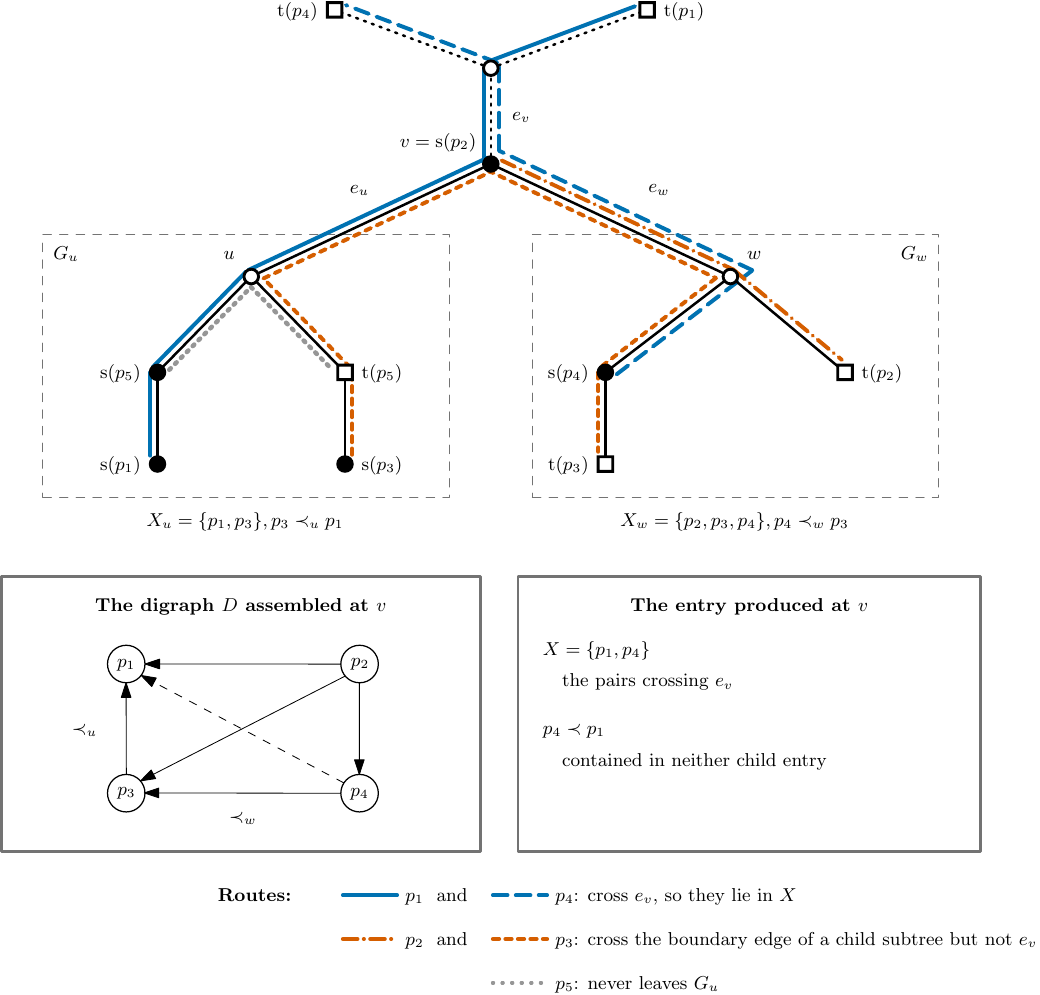}
\caption{One combination step at a vertex $v$ with children $u$ and $w$,
described in the text. The route of each pair of the matching is drawn
beside the tree edges it uses; as the legend shows, its color and dash
pattern indicate whether the pair crosses $e_v$, stops at a child boundary,
or stays inside $G_u$. Below the tree, the entries chosen at $u$ and $w$
are listed under $G_u$ and $G_w$, the left panel shows the digraph $D$, and
the right panel shows the entry produced at $v$; the dashed arrow is the
order $p_4\prec p_1$.}
\label{fig:dp-combine}
\end{figure}

\smallskip
\noindent\textbf{Order of computation and acceptance.}
The algorithm first computes the path $P_q$ of each compatible pair $q$ and the set $\cand(v)$ of each vertex $v$, by enumerating the compatible pairs and, for each of them, the edges of its path; this takes time polynomial in $|I|$. It then processes the vertices in a bottom-up order, and computes the entries $\operatorname{DP}[v,X,\prec]$ of each vertex $v$ by the initialization if $v$ has no children and by combining child states otherwise. Finally, it accepts if and only if $\operatorname{DP}[r,\emptyset,\emptyset]$ is true; since $\cand(r)=\emptyset$, this is the only entry of the form $\operatorname{DP}[r,X,\prec]$.

\smallskip
\noindent\textbf{Correctness.}
We first show that the entries computed by the algorithm have the intended meaning.

\begin{lemma}\label{lem:tree-candidate-multiplicity-table-correctness}
For each vertex $v$, after the algorithm has computed the entries $\operatorname{DP}[v,X,\prec]$, such an entry is true if and only if there exists a partial matching that realizes $(X,\prec)$ at $v$.
\end{lemma}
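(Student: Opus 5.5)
We proceed by induction on the height of $G_v$, that is, in the bottom-up order in which the algorithm computes the entries. For each vertex $v$ we prove two directions. \emph{Soundness}: every entry marked true is realized by some partial matching. \emph{Completeness}: every state $(X,\prec)$ realized by some partial matching is marked true. The base case is a vertex $v$ with no children. Then $G_v=\{v\}$, coverage and locality force every pair of $M$ to contain $v$, and the four initialization cases list exactly the possibilities. In the case $v\in S\cap T$ with $X=\{(v,t),(s,v)\}$, the vertex $v=s_{(v,t)}$ lies on $P_{(s,v)}$, so the arc $((v,t),(s,v))$ is witnessed in $G_v$. This is the only arc, so the induced order is the recorded one.

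\textbf{Soundness.} Suppose $\operatorname{DP}[v,X,\prec]$ is marked true by a choice of child entries $(X_u,\prec_u)$, of $X$, and possibly of $(v,v)$. By induction, each child entry is realized by some $M_u$. Set $M=\bigcup_u M_u\cup X\cup\{(v,v)\text{ if chosen}\}$. By locality, the pairs of $M_u\setminus X_u$ have both endpoints in $G_u$, so they are disjoint from the other parts; together with the endpoint conditions this makes $M$ a partial matching. Coverage and boundary then follow from the endpoint conditions and the requirements at $v$, and locality is immediate. The main work is the precedence condition. Every arc of $D_v(M)$ is witnessed either in some $G_u$ or at $v$. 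An arc witnessed in $G_u$ between two pairs of $M$ has both endpoints in $M_u$, because a path of a pair meets $G_u$ only if the pair has an endpoint in $G_u$ (it enters through $e_u$). Hence $D_v(M)$ is the union of the digraphs $D_u(M_u)$ and the arcs witnessed at $v$. Now contract each $D_u(M_u)$ onto its boundary pairs $X_u$: by the inductive hypothesis, reachability within $D_u(M_u)$ between boundary pairs is exactly $\prec_u$. We then show that a directed cycle, or a path between pairs of $X$, in $D_v(M)$ maps to one in $D$ by replacing each maximal segment inside some $D_u(M_u)$ with the corresponding $\prec_u$ arc. A segment must enter and leave $M_u\setminus X_u$ only through $X_u$, because the arcs witnessed at $v$ and those witnessed in other children only involve pairs of $Y$. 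Conversely, every arc of $D$ lifts to a path in $D_v(M)$. Hence $D_v(M)$ is acyclic and induces $\prec$ on $X$.

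\textbf{Completeness.} Suppose $M$ realizes $(X,\prec)$ at $v$. For each child $u$, let $M_u$ be the pairs of $M$ with an endpoint in $G_u$, let $X_u=M_u\cap\cand(u)$, and let $\prec_u$ be the reachability order of $D_u(M_u)$ on $X_u$. Then $M_u$ realizes $(X_u,\prec_u)$. Acyclicity holds because $D_u(M_u)$ is a subgraph of $D_v(M)$: an arc witnessed in $G_u\subseteq G_v$ between pairs of $M_u$ is also an arc of $D_v(M)$. So the child entry is true by induction. We choose the self-pair exactly when $(v,v)\in M$. Then the endpoint conditions and the requirements at $v$ follow from coverage and locality of $M$. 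In particular, if $(v,v)\in M$, no other pair contains $v$ as an endpoint, and no path of a pair in $Y$ can pass through $v$, since that would create a 2-cycle with $(v,v)$ in $D_v(M)$. (We must check that this case is indeed excluded by acyclicity: if a pair $q$ passes through $v$, then both arcs $((v,v),q)$ and $(q,(v,v))$ are present.) Finally, $D$ embeds into the transitive closure of $D_v(M)$, so $D$ is acyclic, and by the same segment-replacement argument the order computed on $X$ equals $\prec$.

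We expect the main obstacle to be this segment-replacement argument, specifically showing that every arc of $D_v(M)$ between pairs lying in different parts is witnessed at $v$. This follows from the tree structure: if $s_p\in V(P_q)$ with $s_p\in G_u$ and $q\notin M_u$, then $P_q$ would have to enter $G_u$, forcing $q\in X_u$. With that established, the rest is bookkeeping.
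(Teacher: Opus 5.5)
Your proposal is correct and follows the paper's proof essentially step for step: induction on the height of $G_v$, the union of child realizers for soundness, the restriction of $M$ to the pairs with an endpoint in $G_u$ for completeness, the $2$-cycle argument that excludes paths through $v$ when $(v,v)\in M$, and the segment-replacement argument that constitutes Claim~\ref{clm:prec-cstr}. One phrase should be tightened: arcs witnessed in another child $G_w$ may involve pairs of $M_w\setminus X_w$, which are not in $Y$, so what you actually need (and what your tree argument gives) is that such arcs involve only pairs of $M_w$ and that $M_w\cap M_u\subseteq X_u$, so a segment can enter or leave $M_u\setminus X_u$ only through $X_u$.
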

 
\begin{proof}
We prove the statement by induction on the height of $G_v$.
 
\smallskip
\noindent\textbf{Base case.}
Suppose that $v$ has no children.
 
\begin{enumerate}
\item \textbf{Case $v\notin S\cup T$.}
The empty matching is the only partial matching satisfying the coverage and locality conditions, and it realizes $(\emptyset,\emptyset)$.
 
\item \textbf{Case $v\in S\setminus T$.}
The coverage condition requires exactly one pair with source $v$. Since every pair of the matching must have an endpoint in $G_v$, this pair crosses the boundary edge $e_v$ and therefore belongs to $\cand(v)$. Hence, the realizable states are precisely the states $(\{(v,t)\},\emptyset)$ created in the second initialization case.
 
\item \textbf{Case $v\in T\setminus S$.}
This case is symmetric to the case $v\in S\setminus T$.
 
\item \textbf{Case $v\in S\cap T$.}
A realizing matching either contains the self-pair $(v,v)$, which realizes $(\emptyset,\emptyset)$, or contains two pairs crossing $e_v$, one with source $v$ and one with target $v$. In the latter case, the pair with source $v$ is forced to precede the pair with target $v$. Thus, the realizable states are exactly the entries created in the fourth initialization case.
\end{enumerate}
 
This proves the statement for vertices with no children.
 
\smallskip
\noindent\textbf{Inductive step.}
Let $v$ be a vertex with at least one child, and assume that the statement holds for every child of $v$.
 
\smallskip
\noindent\textbf{Entries produced by the algorithm.}
Consider a choice of child entries and a set $X$ that passes all conditions of the algorithm. By the inductive hypothesis, for each child $u$ of $v$, there is a partial matching $M_u$ that realizes $(X_u,\prec_u)$ at $u$.
 
Construct a set $M$ by taking the union of the matchings $M_u$ over all children $u$ of $v$, treating equal pairs as a single pair, adding every pair in $X$ that is not already present, and adding $(v,v)$ when the self-pair is chosen.
 
The endpoint conditions ensure that every pair appears in the appropriate parts of this union. A pair with endpoints in two distinct child subtrees occurs in the matchings of the corresponding two children. A pair with one endpoint in a child subtree and one endpoint at $v$ occurs only in the matching of that child. A pair with one endpoint in a child subtree and the other outside $G_v$ occurs in the matching of that child and in $X$. Finally, a pair with one endpoint at $v$ and the other outside $G_v$ occurs only in $X$.
 
Since every $M_u$ is a partial matching, since the child subtrees are disjoint, and since $Y$ is a partial matching, the set $M$ is a partial matching. Every source and target in a child subtree is covered by the corresponding child matching. The requirements at $v$ ensure that the source and target roles of $v$, when present, are also covered exactly once. Thus, $M$ satisfies the coverage condition.
 
Every pair inherited from $M_u$ has an endpoint in $G_u$, every pair in $X$ has an endpoint in $G_v$, and the only possible additional pair is $(v,v)$. Hence, $M$ satisfies the locality condition. The endpoint conditions also imply that $M\cap\cand(v)=X$, which is the boundary condition.
 
It remains to verify the precedence condition.
 
\begin{claim}\label{clm:prec-cstr}
The digraph $D_v(M)$ is acyclic if and only if $D$ is acyclic. Moreover, when both are acyclic, for all $p,q\in X$, the digraph $D_v(M)$ contains a directed path from $p$ to $q$ if and only if $D$ does.
\end{claim}
 
\begin{proof}
Every arc of $D_v(M)$ is witnessed either at $v$ itself or in one child subtree. The arcs of $D_v(M)$ witnessed at $v$ are precisely the arcs of $D$ witnessed at $v$.
 
Now fix a child $u$ of $v$. Any pair whose two endpoints belong to $G_u$ has its entire path inside $G_u$. Hence, such a pair cannot impose a precedence constraint outside $G_u$. The only effect of the matching inside $G_u$ on the rest of the tree is therefore the order that it forces among the pairs that cross the boundary edge of $G_u$, namely the pairs in $X_u$. Since $M_u$ realizes $(X_u,\prec_u)$, this order is exactly $\prec_u$.
 
Thus, after replacing the part of the matching inside each child subtree by
the corresponding order $\prec_u$, the digraph $D$ contains all precedence
requirements witnessed in $G_v$ that can affect pairs outside the child
subtrees. Any directed path that enters a child subtree through a boundary
pair and leaves it again can be replaced by the corresponding relation
$\prec_u$; a path that has both endpoints inside the child subtree is handled
entirely by the child state. Consequently, $D_v(M)$ is acyclic if and only
if $D$ is acyclic. Moreover, for $p,q\in X$, the digraph $D_v(M)$ contains
a directed path from $p$ to $q$ if and only if $D$ does. \end{proof}
 
By Claim~\ref{clm:prec-cstr}, the algorithm rejects the current choice precisely when $M$ violates the precedence condition. Otherwise, the order $\prec$ defined by the algorithm is exactly the order required in the definition of a state realized by $M$. Hence, every entry produced by the algorithm is realized by some partial matching.
 
\smallskip
\noindent\textbf{Every realizable state is produced.}
Conversely, suppose that a partial matching $M$ realizes $(X,\prec)$ at $v$. For each child $u$ of $v$, let $M_u$ consist of the pairs of $M$ having at least one endpoint in $G_u$. Let $X_u=M_u\cap\cand(u)$ be the set of pairs of $M_u$ that cross the boundary edge of $G_u$, and, for $p,q\in X_u$, let $p\prec_u q$ if and only if $D_u(M_u)$ contains a directed path from $p$ to $q$.
 
The matching $M_u$ satisfies the coverage and the locality conditions for $G_u$, its pairs crossing the boundary edge of $G_u$ are exactly the pairs in $X_u$, and $D_u(M_u)$ is acyclic, since it is a subdigraph of $D_v(M)$. Thus, $M_u$ realizes $(X_u,\prec_u)$ at $u$. By the inductive hypothesis, $\operatorname{DP}[u,X_u,\prec_u]$ is true for each child $u$ of $v$.
 
Choose these child entries and the set $X$ in the algorithm. Since $M$ is a partial matching, the set $Y$, the union of $X$ and of the sets $X_u$, is also a partial matching. By the definition of the sets $X_u$ and $X$, the endpoint conditions hold.
 
The requirements at $v$ follow from the coverage condition. If $(v,v)\in M$, then no other pair of $M$ has source or target $v$, because $M$ is a partial matching. Moreover, no pair of $M$ uses $v$ as an internal vertex of its path: otherwise, the self-pair and that pair would form a precedence cycle witnessed at $v$. Hence, the self-pair case is valid. If $(v,v)\notin M$, then the coverage condition gives exactly the required number of pairs with source $v$ and of pairs with target $v$.
 
Finally, Claim~\ref{clm:prec-cstr} shows that the digraph $D$ constructed by the algorithm is acyclic and that, for $p,q\in X$, it contains a directed path from $p$ to $q$ if and only if $p\prec q$. Therefore, the algorithm sets $\operatorname{DP}[v,X,\prec]$ to true. This proves the inductive step.
\end{proof}

Applying Lemma~\ref{lem:tree-candidate-multiplicity-table-correctness} at the root yields the correctness of the algorithm.

\begin{lemma}\label{lem:tree-dp-correctness}
The algorithm accepts if and only if $I$ is a yes-instance.
\end{lemma}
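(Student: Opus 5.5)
The plan is to apply Lemma~\ref{lem:tree-candidate-multiplicity-table-correctness} at the root $r$ and then translate realizability of $(\emptyset,\emptyset)$ at $r$ into the existence of a solution, using Observation~\ref{obs:precedence-acyclicity}. Since $\cand(r)=\emptyset$, the algorithm accepts if and only if $\operatorname{DP}[r,\emptyset,\emptyset]$ is true. By the lemma, this happens if and only if some partial matching $M$ realizes $(\emptyset,\emptyset)$ at $r$. It therefore suffices to show that such an $M$ exists if and only if $I$ is a yes-instance.

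For the direction from realization to a solution, take $M$ realizing $(\emptyset,\emptyset)$ at $r$. Because $G_r=G$, coverage says that every source and every target of $I$ occurs in exactly one pair of $M$. Every pair of $M$ is compatible by definition of a partial matching, so $M$ is a compatible matching. Its route family is forced to be $\Pi=\{P_q:q\in M\}$, since $G$ is a tree. The key point is that $D_r(M)$ coincides with the precedence digraph $D(M,\Pi)$. An arc of $D(M,\Pi)$ requires $s_p\in V(P_q)$ or $t_q\in V(P_p)$, and the extra condition ``$\in V(G_r)$'' in the definition of $D_r(M)$ is vacuous because $V(G_r)=V(G)$. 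By the precedence condition, $D_r(M)$ is acyclic. Observation~\ref{obs:precedence-acyclicity} then supplies a feasible movement order, so $I$ is a yes-instance.

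For the converse, take a solution with matching $M$. Its routes are necessarily the paths $P_q$, so $D(M,\Pi)$ is acyclic by Observation~\ref{obs:precedence-acyclicity}. We check the four conditions for $M$ realizing $(\emptyset,\emptyset)$ at $r$:
\begin{itemize}
\item \emph{Coverage} holds because $M$ is a perfect compatible matching.
\item \emph{Locality} is trivial at the root.
\item \emph{Boundary} holds because $\cand(r)=\emptyset$.
\item \emph{Precedence} holds because $D_r(M)=D(M,\Pi)$ is acyclic, and the order condition on $X=\emptyset$ is vacuous.
\end{itemize}
Hence $\operatorname{DP}[r,\emptyset,\emptyset]$ is true and the algorithm accepts.

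The only subtle point is the identification $D_r(M)=D(M,\Pi)$, including the degenerate self-pairs $(v,v)$. The definition of $D(M,\Pi)$ considers only distinct pairs $p\neq q$, and so does that of $D_r(M)$. A self-pair whose vertex lies inside another route gives two opposite arcs in both digraphs, so it is handled identically by the two definitions. All of the real work has already been done in Lemma~\ref{lem:tree-candidate-multiplicity-table-correctness}, so I expect no genuine obstacle here.
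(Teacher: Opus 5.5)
Your proposal is correct and follows the paper's own proof. You apply Lemma~\ref{lem:tree-candidate-multiplicity-table-correctness} at the root, use $G_r=G$ to turn coverage into a compatible matching and to identify $D_r(M)$ with $D(M,\Pi)$, and conclude via Observation~\ref{obs:precedence-acyclicity}; you just spell out both directions, and the self-pair case, more explicitly than the paper does.
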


\begin{proof}
The algorithm accepts if and only if $\operatorname{DP}[r,\emptyset,\emptyset]$ is true. By Lemma~\ref{lem:tree-candidate-multiplicity-table-correctness}, this holds if and only if there exists a partial matching $M$ that realizes $(\emptyset,\emptyset)$ at $r$. As $G_r=G$, the coverage condition implies that $M$ is a compatible matching of all sources and targets. Moreover, the digraph $D_r(M)$ is the precedence digraph of $M$ and of the route family formed by the unique paths $P_q$, and the precedence condition states that this digraph is acyclic. By Observation~\ref{obs:precedence-acyclicity}, such a compatible matching exists if and only if $I$ admits a solution. Therefore, the algorithm accepts if and only if $I$ is a yes-instance.
\end{proof}

\smallskip
\noindent\textbf{Running time.}
We bound the running time in two steps. Lemma~\ref{lem:tree-dp-vertex-time} bounds the number of entries $\operatorname{DP}[v,X,\prec]$ of a vertex $v$ and the time needed to compute them in terms of the numbers of candidate pairs of $v$ and of its children, without reference to the parameters, and Lemma~\ref{lem:tree-candidate-multiplicity-running-time} bounds these numbers by the candidate edge multiplicity and the maximum degree.

\begin{lemma}\label{lem:tree-dp-vertex-time}
Let $v$ be a vertex of $G$, and let $u$ range over the children of $v$. There are at most $2^{|\cand(v)|^2+|\cand(v)|}$ entries $\operatorname{DP}[v,X,\prec]$, and the algorithm computes them in time $2^{|\cand(v)|+\sum_u(|\cand(u)|^2+|\cand(u)|)}\cdot|I|^{\mathcal{O}(1)}$.
\end{lemma}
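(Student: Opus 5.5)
We count the entries first and then bound the work spent in one combination step. Write $c=|\cand(v)|$. An entry $\operatorname{DP}[v,X,\prec]$ is determined by a subset $X\subseteq\cand(v)$, giving at most $2^{c}$ choices. It is also determined by a strict partial order on $X$, which is a subset of $X\times X\subseteq\cand(v)\times\cand(v)$, giving at most $2^{c^2}$ choices. Hence there are at most $2^{c^2+c}$ entries at $v$. The same bound at each child $u$ gives at most $2^{|\cand(u)|^2+|\cand(u)|}$ entries $\operatorname{DP}[u,X_u,\prec_u]$, true or not.

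For the time bound, consider first a vertex $v$ with no children. The initialization enumerates single pairs of $\cand(v)$ or pairs of pairs of $\cand(v)$. That is at most $c^2+1$ candidates, each checked in polynomial time, well within the claimed bound.

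Now let $v$ have children. The algorithm enumerates three things: one entry per child, a set $X\subseteq\cand(v)$, and, when $v\in S\cap T$, one of two self-pair options. The number of combinations is therefore at most
\[
2\cdot 2^{|\cand(v)|}\cdot\prod_u 2^{|\cand(u)|^2+|\cand(u)|}=2^{1+|\cand(v)|+\sum_u(|\cand(u)|^2+|\cand(u)|)}.
\]
The factor $2$ is absorbed into the polynomial factor. For each combination, the three checks take time polynomial in $|I|$:
\begin{itemize}
\item[(i)] The set $Y$ has size at most $|\cand(v)|+\sum_u|\cand(u)|\le|I|^{\mathcal{O}(1)}$. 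Testing that $Y$ is a partial matching and verifying the endpoint conditions, by comparing with the precomputed sets of vertices of each $G_u$, is polynomial.
\item[(ii)] The requirements at $v$ are checked by scanning $Y$ and the precomputed paths $P_q$.
\item[(iii)] We build $D$ from the child orders and from the arcs witnessed at $v$. We test acyclicity and compute the reachability relation restricted to $X$, for instance by a search from each vertex. This is polynomial in $|Y|$.
\end{itemize}
Finally, writing the resulting entry $\operatorname{DP}[v,X,\prec]$ into a table indexed by $(X,\prec)$ is polynomial, for example by encoding $(X,\prec)$ as a bit string of length $c^2+c$.

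The main point to get right is that the enumeration runs over all child entries, not only the true ones. Otherwise the count would depend on table contents, though this only helps. One also has to check that the degree does not appear separately beyond the product over children: the sum $\sum_u|\cand(u)|$ already bounds $|Y|$. Multiplying the number of combinations by the polynomial per-combination cost gives the stated bound $2^{|\cand(v)|+\sum_u(|\cand(u)|^2+|\cand(u)|)}\cdot|I|^{\mathcal{O}(1)}$.
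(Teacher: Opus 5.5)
Your proof is correct and follows the same route as the paper's. You count the entries by subsets $X$ and relations on $X$, and you bound the time as the number of choices of child entries, of $X$, and of the self-pair option, multiplied by a polynomial cost per choice. One wording point: the algorithm examines only true child entries, so counting all entries is simply an upper bound on what it enumerates, as you yourself note; it is not a requirement of the argument.
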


\begin{proof}
An entry $\operatorname{DP}[v,X,\prec]$ is indexed by a partial matching $X\subseteq\cand(v)$ and a strict partial order $\prec$ on $X$. There are at most $2^{|\cand(v)|}$ subsets $X$ of $\cand(v)$ and, for each of them, at most $2^{|X|^2}\le 2^{|\cand(v)|^2}$ relations on $X$, since a relation on $X$ is a subset of $X\times X$. This counts all relations and therefore upper-bounds the number of strict partial orders stored by the algorithm. This proves the first statement.

If $v$ has no children, the initialization examines every pair of $\cand(v)$ and, when $v\in S\cap T$, every two pairs of $\cand(v)$, which takes time polynomial in $|I|$. Otherwise, the algorithm examines every choice of one true entry $\operatorname{DP}[u,X_u,\prec_u]$ for each child $u$ of $v$, of a partial matching $X\subseteq\cand(v)$, and, when $v\in S\cap T$, of whether the self-pair $(v,v)$ is chosen. By the first statement applied to each child $u$, the child entries can be chosen in at most $\prod_u 2^{|\cand(u)|^2+|\cand(u)|}=2^{\sum_u(|\cand(u)|^2+|\cand(u)|)}$ ways, the set $X$ in at most $2^{|\cand(v)|}$ ways, and the self-pair in at most two ways. Every choice is processed in time polynomial in $|I|$. Indeed, $v$ has fewer than $n$ children, and every entry consists of at most $|S|\cdot|T|$ compatible pairs and a relation on them, so the chosen entries are read and the set $Y$ is formed in time polynomial in $|I|$. The endpoint conditions and the requirements at $v$ are then checked, and the digraph $D$ on $Y$, its acyclicity, and the order $\prec$ on $X$ are computed, in time polynomial in $|I|$, since $|Y|\le|S|\cdot|T|$ and the paths of the compatible pairs have been computed beforehand. Multiplying the three counts by the polynomial time per choice yields the second statement.
\end{proof}

\begin{lemma}\label{lem:tree-candidate-multiplicity-running-time}
The algorithm runs in time $2^{\mathcal{O}(\Delta\cdot\cm(I)^2)}\cdot|I|^{\mathcal{O}(1)}$.
\end{lemma}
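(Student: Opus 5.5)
The plan is to sum the per-vertex bound of Lemma~\ref{lem:tree-dp-vertex-time} over all vertices and bound each exponent using the candidate edge multiplicity and the maximum degree. The key input is that, for every non-root vertex $w$, the candidate pairs of $w$ are exactly the compatible pairs crossing the boundary edge $e_w$. Hence $|\cand(w)|=\cm_I(e_w)\le\cm(I)$. For the root, $\cand(r)=\emptyset$. Writing $c=\cm(I)$, we therefore have $|\cand(w)|\le c$ for every vertex $w$.

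Next I would bound the exponent in the time bound of Lemma~\ref{lem:tree-dp-vertex-time} for a fixed vertex $v$, namely $|\cand(v)|+\sum_u(|\cand(u)|^2+|\cand(u)|)$. The vertex $v$ has at most $\Delta$ children. Each child term is at most $c^2+c$, and the first term is at most $c$. So the exponent is at most $c+\Delta(c^2+c)=\mathcal{O}(\Delta\cdot c^2)$, where the degenerate cases $c=0$ or $\Delta\le 1$ are absorbed by writing the bound as $2^{\mathcal{O}(\Delta c^2)}\cdot|I|^{\mathcal{O}(1)}$, with the polynomial factor covering constants. The work at each vertex is thus $2^{\mathcal{O}(\Delta\cdot c^2)}\cdot|I|^{\mathcal{O}(1)}$.

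Finally I would add the preprocessing and the summation over vertices. Computing all compatible pairs, their paths, and the sets $\cand(v)$ takes time polynomial in $|I|$: there are at most $|S|\cdot|T|$ pairs, each path has fewer than $n$ edges, and the label intersections are tested as in the proof of Theorem~\ref{thm:path_time}. Summing the per-vertex bound over the $n\le|I|$ vertices multiplies it only by a polynomial factor, which gives the claimed $2^{\mathcal{O}(\Delta\cdot\cm(I)^2)}\cdot|I|^{\mathcal{O}(1)}$. There is no real obstacle here. The one point to state carefully is that the number of children is bounded by $\Delta$ rather than $\Delta-1$, which is harmless at the root. Together with Lemma~\ref{lem:tree-dp-correctness}, this lemma yields Theorem~\ref{thm:tree_candidate_multiplicity_fpt}.
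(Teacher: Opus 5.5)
Your proposal is correct and follows the paper's proof essentially step for step: you bound every $|\cand(w)|$ by $\cm(I)$ and the number of children by $\Delta$, which caps the exponent of Lemma~\ref{lem:tree-dp-vertex-time} at $\cm(I)+\Delta\bigl(\cm(I)^2+\cm(I)\bigr)=\mathcal{O}(\Delta\cdot\cm(I)^2)$. You then add the polynomial preprocessing and sum over the $n$ vertices. Your remark on the degenerate cases $\cm(I)=0$ or $\Delta\le 1$ is a harmless extra bit of care that the paper leaves implicit.
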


\begin{proof}
The paths of the compatible pairs and the sets $\cand(v)$ are computed in time polynomial in $|I|$. Fix a vertex $v$ of $G$, and let $u$ range over the children of $v$. By the definition of candidate pairs, $|\cand(w)|\le\cm(I)$ for every vertex $w$ of $G$, and $v$ has at most $\Delta$ children, so $|\cand(v)|+\sum_u(|\cand(u)|^2+|\cand(u)|)\le\cm(I)+\Delta\bigl(\cm(I)^2+\cm(I)\bigr)$, which is $\mathcal{O}(\Delta\cdot\cm(I)^2)$. Hence, by Lemma~\ref{lem:tree-dp-vertex-time}, the entries $\operatorname{DP}[v,X,\prec]$ are computed in time $2^{\mathcal{O}(\Delta\cdot\cm(I)^2)}\cdot|I|^{\mathcal{O}(1)}$. Summing over the $n$ vertices of $G$ yields the claimed bound.
\end{proof}

\treeCandidateMultiplicityFpt*

\begin{proof}
By Lemma~\ref{lem:tree-dp-correctness}, the algorithm accepts if and only if $I$ is a yes-instance. By Lemma~\ref{lem:tree-candidate-multiplicity-running-time}, it runs in time $2^{\mathcal{O}(\Delta\cdot\cm(I)^2)}\cdot|I|^{\mathcal{O}(1)}$. Since $\Delta\cdot\cm(I)^2\le(\Delta+\cm(I))^3$, this running time is of the form $f(\Delta+\cm(I))\cdot|I|^{\mathcal{O}(1)}$ for a computable function $f$, so the problem is fixed-parameter tractable when parameterized by $\Delta+\cm(I)$.
\end{proof}

The two parameters of Theorem~\ref{thm:tree_candidate_multiplicity_fpt} enter the proof of Lemma~\ref{lem:tree-candidate-multiplicity-running-time} in different ways. The candidate edge multiplicity bounds every number $|\cand(v)|$ and $|\cand(u)|$ in the exponent of Lemma~\ref{lem:tree-dp-vertex-time}, whereas the maximum degree bounds only the number of children $u$ of $v$, that is, the number of terms $|\cand(u)|^2+|\cand(u)|$ of that exponent. By Theorems~\ref{thm:sol_multiplicity_hardness} and~\ref{thm:para_np_tree}, respectively, neither parameter can be dropped. Section~\ref{sec:treecvmfpt} nevertheless bounds the exponent of Lemma~\ref{lem:tree-dp-vertex-time} without the maximum degree, by a single parameter, at least as large as the candidate edge multiplicity, that counts, for each vertex, the compatible pairs whose paths contain that vertex.

\subsection{Bounded candidate vertex multiplicity}
\label{sec:treecvmfpt}

The maximum degree enters the running time of the algorithm of Section~\ref{sec:treefpt} only through the number of terms of the sum in the exponent of Lemma~\ref{lem:tree-dp-vertex-time}, one for each child of the vertex $v$ of that lemma. In this subsection we bound that exponent without the maximum degree, by replacing the candidate edge multiplicity, which counts for each edge the compatible pairs whose unique paths use that edge, with the candidate vertex multiplicity of Section~\ref{sec:prelim}, which counts for each vertex the compatible pairs whose unique paths contain that vertex. Throughout this subsection, $I=(G,S,T,L)$ is a tree instance, $\Delta$ is the maximum degree of $G$, and the root $r$, the subtrees $G_v$, the boundary edges $e_v$, and the sets $\cand(v)$ of candidate pairs are those of Section~\ref{sec:treefpt}.

We first relate the candidate vertex multiplicity to the parameters of Theorem~\ref{thm:tree_candidate_multiplicity_fpt}.

\begin{observation}\label{obs:cvm-vs-cm}
For every tree instance $I$ on a tree of maximum degree $\Delta$, $\cm(I)\le\cvm(I)\le\Delta\cdot\cm(I)+1$.
\end{observation}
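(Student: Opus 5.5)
We prove the two inequalities separately, each by comparing, for a fixed edge or vertex, the sets of compatible pairs whose unique paths use that edge or contain that vertex.

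\textbf{Lower bound.} We show $\cm(I)\le\cvm(I)$. If $G$ has no edge, then $\cm(I)=0$ and there is nothing to prove. Otherwise, fix an edge $e=uv$ with $\cm_I(e)=\cm(I)$. Every compatible pair whose path uses $e$ has a path containing $u$. Hence the set counted by $\cm_I(e)$ is contained in the set counted by $\cvm_I(u)$, and so $\cm(I)=\cm_I(e)\le\cvm_I(u)\le\cvm(I)$.

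\textbf{Upper bound.} We show $\cvm(I)\le\Delta\cdot\cm(I)+1$. Fix a vertex $v$ and let $q=(s,t)$ be a compatible pair whose path $P_q$ contains $v$. We split into two cases.
\begin{itemize}
\item If $P_q$ has at least one edge, then it uses some edge incident with $v$: $v$ is either an endpoint of $P_q$ or an internal vertex, and in both cases an edge of $P_q$ at $v$ exists. So $q$ is counted by $\cm_I(e)$ for one of the at most $\Delta$ edges $e$ incident with $v$.
\item If $P_q$ is trivial, then $s=t=v$. Since $S$ and $T$ are sets, there is at most one such pair, namely $(v,v)$.
\end{itemize}
A union bound over the at most $\Delta$ incident edges, plus the possible self-pair, gives
\[
\cvm_I(v)\le\sum_{e\ni v}\cm_I(e)+1\le\Delta\cdot\cm(I)+1.
\]
Taking the maximum over $v$ yields the claim.

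The only delicate point is the trivial path, which is what the additive $+1$ accounts for. A pair with $v$ as an internal vertex is counted by two incident edges, but this overcounting only weakens an upper bound, so it is harmless.
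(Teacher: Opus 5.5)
Your proposal is correct and follows essentially the same route as the paper: the lower bound comes from the containment of the pairs counted at an edge in those counted at one of its endpoints. The upper bound charges each non-trivial path through $v$ to an incident edge and adds one for the possible self-pair $(v,v)$. The paper assigns each such pair to a single incident edge rather than using a union bound, but the two give the same bound.
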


\begin{proof}
Every path that uses an edge contains both of its endpoints, so, for each edge $e=vw$ of $G$, every compatible pair counted in $\cm_I(e)$ is counted in $\cvm_I(v)$, and $\cm_I(e)\le\cvm_I(v)\le\cvm(I)$. This proves the first inequality. For the second, fix a vertex $v$ and consider the compatible pairs whose paths contain $v$. At most one of them has a trivial path, namely the pair $(v,v)$, since a trivial path containing $v$ consists of the vertex $v$ alone. The path of each of the remaining pairs uses at least one edge incident with $v$; assign to each of these pairs one such edge. A pair assigned an edge $e$ has $e$ on its path, so at most $\cm(I)$ pairs are assigned $e$, and at most $\Delta$ edges are incident with $v$. Hence, $\cvm_I(v)\le\Delta\cdot\cm(I)+1$.
\end{proof}

By the second inequality of Observation~\ref{obs:cvm-vs-cm}, bounding the maximum degree together with the candidate edge multiplicity bounds the candidate vertex multiplicity, so Theorem~\ref{thm:tree_cvm_fpt} below implies the fixed-parameter tractability stated in Theorem~\ref{thm:tree_candidate_multiplicity_fpt}. The algorithm is that of Section~\ref{sec:treefpt}, unchanged, so by Lemma~\ref{lem:tree-dp-correctness} the algorithm accepts if and only if $I$ is a yes-instance. Only the bound on its running time changes. Lemma~\ref{lem:tree-dp-vertex-time} bounds the time needed to compute the entries $\operatorname{DP}[v,X,\prec]$ of a vertex $v$ in terms of $|\cand(v)|$ and of $|\cand(u)|$ for the children $u$ of $v$, and the proof of Lemma~\ref{lem:tree-candidate-multiplicity-running-time} bounds these numbers by $\cm(I)$ and the number of children by $\Delta$. Lemma~\ref{lem:cvm-candidate-bounds} below bounds $|\cand(v)|$ and the sum of the numbers $|\cand(u)|$ by the candidate vertex multiplicity of $v$ alone, however many children $v$ has. Its proof uses only that the boundary edge $e_v$, when $v\neq r$, and the boundary edges $e_u$ of the children $u$ of $v$ are all incident with $v$, so that the path of every pair of $\cand(v)$ and of every pair of $\cand(u)$ contains $v$.

\begin{lemma}\label{lem:cvm-candidate-bounds}
Let $v$ be a vertex of $G$, and let $u$ range over the children of $v$. Then $|\cand(v)|\le\cvm_I(v)$ and $\sum_u|\cand(u)|\le 2\cvm_I(v)$.
\end{lemma}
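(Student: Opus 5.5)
The plan is a direct double-counting argument over the set $Q_v$ of compatible pairs whose paths contain $v$, which has size $\cvm_I(v)$ by definition.

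\textbf{First inequality.} If $v=r$, then $\cand(r)=\emptyset$ and there is nothing to prove. Otherwise, every pair of $\cand(v)$ crosses the boundary edge $e_v$. That edge is incident with $v$, so the path of the pair contains $v$. Hence $\cand(v)\subseteq Q_v$, which gives $|\cand(v)|\le\cvm_I(v)$.

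\textbf{Second inequality.} The same reasoning applies to each child $u$: the edge $e_u=uv$ is incident with $v$, so $\cand(u)\subseteq Q_v$. It then suffices to show that every pair $q\in Q_v$ lies in $\cand(u)$ for at most two children $u$. Summing membership counts, this gives $\sum_u|\cand(u)|=\sum_{q\in Q_v}|\{u:q\in\cand(u)\}|\le 2|Q_v|$.

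The key step is therefore this multiplicity bound. A pair $q$ lies in $\cand(u)$ exactly when its path $P_q$ uses the edge $uv$. Since $P_q$ is a simple path through $v$, it uses at most two edges incident with $v$: if $v$ is an internal vertex, it uses exactly two; if $v$ is an endpoint of a nontrivial path, it uses one; if the path is trivial, it uses none. In particular, at most two children $u$ have $uv\in E(P_q)$.

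I expect no real obstacle. The only points needing care are that the trivial self-pair $(v,v)$ is counted in $Q_v$ but lies in no $\cand(u)$, which only helps the bound, and that the root case is handled separately in the first inequality.
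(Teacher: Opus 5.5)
Your proposal is correct and follows essentially the same argument as the paper. You show $\cand(v)$ and each $\cand(u)$ are subsets of the $\cvm_I(v)$ compatible pairs whose paths contain $v$, since $e_v$ and every $e_u$ are incident with $v$. You then note that a simple path uses at most two edges incident with $v$, so each such pair is counted in at most two sets $\cand(u)$.
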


\begin{proof}
If $v=r$, then $\cand(v)=\emptyset$, and the first inequality holds. Otherwise, the boundary edge $e_v$ is incident with $v$, so the path of every pair of $\cand(v)$ uses $e_v$ and therefore contains $v$. Hence, $\cand(v)$ is a subset of the set of compatible pairs whose paths contain $v$, and this set has $\cvm_I(v)$ elements.

For each child $u$ of $v$, the boundary edge $e_u$ joins $u$ to $v$, so the path of every pair of $\cand(u)$ contains $v$ as well. The path of a compatible pair is simple and therefore uses at most two edges incident with $v$, so every compatible pair belongs to $\cand(u)$ for at most two children $u$ of $v$. Hence, the sum $\sum_u|\cand(u)|$ counts only compatible pairs whose paths contain $v$, of which there are $\cvm_I(v)$, and counts each of them at most twice, which proves the second inequality.
\end{proof}

\treeCandidateVertexMultiplicityFpt*

\begin{proof}
We run the algorithm of Section~\ref{sec:treefpt}, which accepts if and only if $I$ is a yes-instance by Lemma~\ref{lem:tree-dp-correctness}, and we bound its running time. The paths of the compatible pairs and the sets $\cand(v)$ are computed in time polynomial in $|I|$. Fix a vertex $v$ of $G$, and let $u$ range over the children of $v$. By Lemma~\ref{lem:tree-dp-vertex-time}, the entries $\operatorname{DP}[v,X,\prec]$ are computed in time $2^{|\cand(v)|+\sum_u(|\cand(u)|^2+|\cand(u)|)}\cdot|I|^{\mathcal{O}(1)}$. By Lemma~\ref{lem:cvm-candidate-bounds} applied to $v$, we have $|\cand(v)|\le\cvm_I(v)\le\cvm(I)$ and $\sum_u|\cand(u)|\le 2\cvm_I(v)\le 2\cvm(I)$, and by Lemma~\ref{lem:cvm-candidate-bounds} applied to each child $u$ of $v$, we have $|\cand(u)|\le\cvm_I(u)\le\cvm(I)$. Therefore, $\sum_u|\cand(u)|^2\le\bigl(\max_u|\cand(u)|\bigr)\cdot\sum_u|\cand(u)|\le 2\cvm(I)^2$, and the exponent in the bound of Lemma~\ref{lem:tree-dp-vertex-time} is at most $2\cvm(I)^2+3\cvm(I)$. Hence, these entries are computed in time $2^{\mathcal{O}(\cvm(I)^2)}\cdot|I|^{\mathcal{O}(1)}$, and summing over the $n$ vertices of $G$ yields the claimed running time.
\end{proof}

Although Theorem~\ref{thm:tree_cvm_fpt} implies the fixed-parameter tractability stated in Theorem~\ref{thm:tree_candidate_multiplicity_fpt}, the two running times are incomparable. Substituting the bound $\cvm(I)\le\Delta\cdot\cm(I)+1$ of Observation~\ref{obs:cvm-vs-cm} into the running time of Theorem~\ref{thm:tree_cvm_fpt} yields $2^{\mathcal{O}(\Delta^2\cdot\cm(I)^2)}\cdot|I|^{\mathcal{O}(1)}$, which is worse than the running time of Theorem~\ref{thm:tree_candidate_multiplicity_fpt}, whereas on trees of unbounded degree and bounded candidate vertex multiplicity only Theorem~\ref{thm:tree_cvm_fpt} applies. By the first inequality of Observation~\ref{obs:cvm-vs-cm}, bounding the candidate vertex multiplicity bounds the candidate edge multiplicity. By Theorem~\ref{thm:tree_cvm_fpt}, no \textsf{NP}-hardness reduction to instances on trees can produce only instances of constant candidate vertex multiplicity, unless $\textsf{P}=\textsf{NP}$. The instances of Theorem~\ref{thm:para_np_tree} indeed have candidate edge multiplicity at most eight but not constant candidate vertex multiplicity: the literal pair and the witness pair of every occurrence have paths through the vertex $g$ of that construction, so the candidate vertex multiplicity of these instances is at least twice the number of occurrences.

\section{Conclusion and open problems}
\label{sec:conclusion}
 
We introduced \textsc{Single-Move Labeled Token Routing}, in which every
token moves at most once from its source to a compatible target. On
unique-matching instances the problem is the discrete form of
\textsc{Monotone Sliding Reconfiguration}~\cite{GeftHalperinNakar2024}, and we proved
that it is \textsf{NP}-complete on planar graphs and on grids even for instances
that, when solvable, admit a solution with pairwise edge-disjoint routes.
Outside the unique-matching case, three hardness results hold
on trees: \textsf{NP}-completeness with two colors on trees of maximum degree
three, proved by Geft~\cite{Geft2026Trees}; \textsf{W}[1]-hardness on trees of maximum
degree three parameterized by an input bound on the solution edge
multiplicity, which holds even when a movement order is given as part of
the input; and \textsf{NP}-completeness on trees of unbounded degree whose
candidate edge multiplicity is at most eight. The last two are proved in
this paper, together with a linear-time algorithm on paths, a
fixed-parameter algorithm parameterized by the number of tokens, and a
fixed-parameter algorithm on trees, parameterized either by the maximum
degree together with the candidate edge multiplicity or by the candidate
vertex multiplicity alone; by the \textsf{NP}-completeness on trees of
maximum degree three and on trees of candidate edge multiplicity at most
eight, neither the maximum degree nor the candidate edge multiplicity can
be omitted.
 
The two extreme compatibility relations remain the subject of open
problems in robotics. Abrahamsen and Halperin list among ten open
problems in geometric robotics the monotone reconfiguration of unit
discs in the plane in which any disc may fill any target and a disc may
initially overlap a target, a geometric counterpart of our all-matching
case~\cite{AbrahamsenHalperin2024Geobotics}.
At the other extreme, Geft, Halperin, and Nakar prove that unique-matching instances on
$3 \times w$ grids remain \textsf{NP}-hard even when all sources lie
to the left of all targets~\cite{GeftHalperinNakar2024}. As soon as there
are more than three tokens, their instances admit no solution with
pairwise edge-disjoint routes, since every route uses one of the three
edges that cross the line separating the sources from the targets.
Theorem~\ref{thm:grid_and_subgrid_hardness} gives hard instances on grids
that admit such a solution, but its proof places a token whose source is
its target at every vertex missing from the grid graph. Whether grids
remain hard for instances that admit a solution with pairwise
edge-disjoint routes and in which no vertex is both a source and a
target is open. A second feature of the
construction of Section~\ref{sec:planar}, which Geft and Halperin
identify as a possible source of hardness in their own
constructions~\cite{GeftHalperin2022Refined}, is that a route may pass
through unboundedly many sources and targets of other tokens: the route
of an assignment token contains the two vertices created for every
positive occurrence of its variable, or for every negative occurrence.
Whether Theorem~\ref{thm:grid_and_subgrid_hardness} holds when every
route contains at most a constant number of sources and targets is open
as well, whereas the construction of Theorem~\ref{thm:para_np_tree} has
this property, since every path of a compatible pair contains at most
two sources or targets other than its endpoints.
 
Further questions concern which measures of the structure of the graph
and of the compatible pairs, alone or together, yield tractability. On
trees, it remains to determine which restrictions on the compatible pairs
yield tractability when one seeks a solution with small solution edge
multiplicity. Beyond trees, bounding the pathwidth does not suffice, as
the problem is \textsf{NP}-complete on $3 \times w$ grid
graphs~\cite{GeftHalperinNakar2024}, whose
pathwidth is at most three, and a graph of small treewidth may have
arbitrarily many compatible pairs. On general graphs it would therefore be
of interest to combine a measure of the structure of the graph with a
measure of the compatibility relation, such as the number of compatible
pairs separated by a small vertex set. Independently of such a
combination, one may ask whether some structural parameter other than a
measure of the graph's width yields fixed-parameter tractability by
itself.
 
The compatibility relation itself raises two further directions. Call
the compatibility relation a \emph{coloring} when every source and every
target carries exactly one label. The \textsf{NP}-completeness on trees
of maximum degree three holds for two colors~\cite{Geft2026Trees},
whereas neither construction of this paper on trees is a coloring: a
column tile token of Section~\ref{sec:tree-hardness} may move either to the tile
subtree or to its local target, a choice token of
Section~\ref{sec:cm-hardness} either to its selection target or to a
non-selection target, and the second alternative is expressed by a second
label. For Theorem~\ref{thm:para_np_tree} this label can be avoided:
giving the three choice tokens of a clause one common color, merging
their selection targets into one target of that color, giving the two
non-selection targets the same color, and removing the filler tokens
yields a coloring for which the reduction goes through unchanged.
Whether Theorem~\ref{thm:sol_multiplicity_hardness} holds for colorings,
and whether Theorems~\ref{thm:sol_multiplicity_hardness}
and~\ref{thm:para_np_tree} hold for a bounded number of labels, is open.
Conversely, is there a graph class or a parameter for which the problem
is polynomial-time solvable or fixed-parameter tractable for colorings
but not for arbitrary compatibility relations? A different restriction
is to constrain the routes rather than the compatible pairs, by giving
for each compatible pair a family of allowed routes, as in
compatible-ordering problems~\cite{DBLP:conf/sofsem/BousquetSMN26}.
 
Finally, two assumptions of the problem can be relaxed. If each token may
move more than once, then a solution must also specify intermediate
positions. On grid graphs, the unique-matching case remains
\textsf{NP}-hard when each token may move at most
twice~\cite{VanDerHorst2026Squares}; we do not know whether
Theorem~\ref{thm:token_number_fpt} extends to this setting. If
$|S| > |T|$, as in atom reconfiguration, then a solution consists of a set of compatible pairs in
which every target occurs in exactly one pair and every source in at most
one pair, a route family, and a movement order; a token whose source is in
no pair never moves, so the move of a pair is feasible only if its route
contains no such source. Our hardness results carry over, since the
instances with $|S| = |T|$ are exactly those in which every source is in a
pair. Our algorithms carry over as well, since each of them constructs or
enumerates the pairs of a solution and can leave a source out of every
pair. The leaf-removal procedure of Theorem~\ref{thm:allMatching} leaves
out a leaf that is a source but not a target whenever the current tree
contains more sources than targets, the dynamic programming of
Section~\ref{sec:treefpt} may leave out a source that is not a target when
no path of a pair examined at that vertex contains it, without changing
the running times of Theorems~\ref{thm:tree_candidate_multiplicity_fpt}
and~\ref{thm:tree_cvm_fpt}, and the algorithm of
Theorem~\ref{thm:token_number_fpt} also enumerates the sources left out.
On a path, the order of the tokens never changes, so the sources left out
can be chosen by dynamic programming along the path in polynomial time,
although we do not know whether the linear running time of
Theorem~\ref{thm:path_time} can be preserved.

\section*{Acknowledgements}
Nicolas Bousquet was partly supported by ANR project ENEDISC
(ANR-24-CE48-7768-01). Remy El Sabeh and Naomi Nishimura's research was
funded by the Natural Sciences and Engineering Research Council of Canada.

\bibliographystyle{unsrt}
\bibliography{main}

\begin{thebibliography}{10}

\bibitem{Calinescu2008Reconfigurations}
Gruia C{\u a}linescu, Adrian Dumitrescu, and J{\'a}nos Pach.
\newblock Reconfigurations in graphs and grids.
\newblock {\em SIAM Journal on Discrete Mathematics}, 22(1):124--138, 2008.

\bibitem{GeftHalperinNakar2024}
Tzvika Geft, Dan Halperin, and Yonatan Nakar.
\newblock Tractability frontiers in multi-robot coordination and geometric
  reconfiguration.
\newblock In {\em Algorithmic Foundations of Robotics {XVI}}, volume~37 of {\em
  Springer Proceedings in Advanced Robotics}, pages 247--267. Springer, 2026.
\newblock Proceedings of WAFR 2024.

\bibitem{Geft2026Trees}
Tzvika Geft.
\newblock {Hardness of Multi-Agent Path Finding on Trees: A Unified Approach}.
\newblock In Philip Bille, Seth Pettie, and Sabine Storandt, editors, {\em 34th
  Annual European Symposium on Algorithms (ESA 2026)}, volume 388 of {\em
  Leibniz International Proceedings in Informatics (LIPIcs)}, pages
  159:1--159:15, Dagstuhl, Germany, 2026. Schloss Dagstuhl -- Leibniz-Zentrum
  f{\"u}r Informatik.

\bibitem{Endres2016AtomAssembly}
Manuel Endres, Hannes Bernien, Alexander Keesling, Harry Levine, Eric~R.
  Anschuetz, Alexandre Krajenbrink, Crystal Senko, Vladan Vuletic, Markus
  Greiner, and Mikhail~D. Lukin.
\newblock Atom-by-atom assembly of defect-free one-dimensional cold atom
  arrays.
\newblock {\em Science}, 354(6315):1024--1027, 2016.

\bibitem{Barredo2016AtomAssembler}
Daniel Barredo, Sylvain de~L{\'e}s{\'e}leuc, Vincent Lienhard, Thierry Lahaye,
  and Antoine Browaeys.
\newblock An atom-by-atom assembler of defect-free arbitrary two-dimensional
  atomic arrays.
\newblock {\em Science}, 354(6315):1021--1023, 2016.

\bibitem{OhlDeMello2019DefectFree}
Daniel~Ohl de~Mello, Daniel Schaffner, Jonathan~P. Knudsen, Thomas~G. Walker,
  and David~S. Weiss.
\newblock Defect-free assembly of {2D} clusters of more than 100 single-atom
  quantum systems.
\newblock {\em Physical Review Letters}, 122(20):203601, 2019.

\bibitem{Schymik2020EnhancedAssembly}
Kai-Niklas Schymik, Vincent Lienhard, Daniel Barredo, Pascal Scholl, Hannah
  Williams, Antoine Browaeys, and Thierry Lahaye.
\newblock Enhanced atom-by-atom assembly of arbitrary tweezer arrays.
\newblock {\em Physical Review A}, 102(6):063107, 2020.

\bibitem{ElSabeh2023ARO}
Remy El~Sabeh, Jessica Bohm, Zhiqian Ding, Stephanie Maaz, Naomi Nishimura,
  Izzat El~Hajj, Amer~E. Mouawad, and Alexandre Cooper.
\newblock Efficient algorithms to solve atom reconfiguration problems. {II.}
  {The} assignment-rerouting-ordering (aro) algorithm.
\newblock {\em Physical Review A}, 108(2):023108, 2023.

\bibitem{Cimring2023RedRec}
Barry Cimring, Remy El~Sabeh, Marc Bacvanski, Stephanie Maaz, Izzat El~Hajj,
  Naomi Nishimura, Amer~E. Mouawad, and Alexandre Cooper.
\newblock Efficient algorithms to solve atom reconfiguration problems. {I.}
  {The} redistribution-reconfiguration (red-rec) algorithm.
\newblock {\em Physical Review A}, 108(2):023107, 2023.

\bibitem{Afiouni2025BirdBatching}
Fouad Afiouni, Remy El~Sabeh, Naomi Nishimura, Izzat El~Hajj, Amer~E Mouawad,
  and Alexandre Cooper.
\newblock Efficient algorithms to solve atom reconfiguration problems. {III.
  The bird} and batching algorithms and other parallel implementations on
  {GPUs}.
\newblock {\em Physical Review A}, 112(2):023109, 2025.

\bibitem{Anand2024DualSpeciesRydberg}
Shraddha Anand, Conor~E. Bradley, Ryan White, Vikram Ramesh, Kevin Singh, and
  Hannes Bernien.
\newblock A dual-species {R}ydberg array.
\newblock {\em Nature Physics}, 20(11):1744--1750, Nov 2024.

\bibitem{Fang2025Interleaved}
Chengyu Fang, Jared Miles, Jonathan Goldwin, Martin Lichtman, Matthew Gillette,
  Michael Bergdolt, Sanket Deshpande, Sam~A. Norrell, Preston Huft, Mikhail~A.
  Kats, and Mark Saffman.
\newblock Interleaved dual-species arrays of single atoms using a passive
  optical element and one trapping laser.
\newblock {\em Science Advances}, 11(29):eadw4166, 2025.

\bibitem{Sheng2022MixedSpecies}
Cheng Sheng, Jiayi Hou, Xiaodong He, Kunpeng Wang, Ruijun Guo, Jun Zhuang,
  Bahtiyar Mamat, Peng Xu, Min Liu, Jin Wang, and Mingsheng Zhan.
\newblock Defect-free arbitrary-geometry assembly of mixed-species atom arrays.
\newblock {\em Physical Review Letters}, 128(8):083202, 2022.

\bibitem{Wei2026MixedSpecies}
Ming-Rui Wei, Kun-Peng Wang, Jia-Yi Hou, Yi~Chen, Peng Xu, Jun Zhuang, Rui-Jun
  Guo, Min Liu, Jin Wang, Xiao-Dong He, and Ming-Sheng Zhan.
\newblock Enhanced atom-by-atom assembly of defect-free two-dimensional
  mixed-species atomic arrays.
\newblock {\em Physical Review Applied}, 25(3):034009, 2026.

\bibitem{Ruttley2023MergingTweezers}
Daniel~K. Ruttley, Alexander Guttridge, Stefan Spence, Robert~C. Bird,
  C.~Ruth~Le Sueur, Jeremy~M. Hutson, and Simon~L. Cornish.
\newblock Formation of ultracold molecules by merging optical tweezers.
\newblock {\em Physical Review Letters}, 130(22):223401, 2023.

\bibitem{Ruttley2024MoleculeArrays}
Daniel~K. Ruttley, Alexander Guttridge, Tom~R. Hepworth, and Simon~L. Cornish.
\newblock Enhanced quantum control of individual ultracold molecules using
  optical tweezer arrays.
\newblock {\em PRX Quantum}, 5(2):020333, 2024.

\bibitem{Cooper2024ParameterizedAtoms}
Alexandre Cooper, Stephanie Maaz, Amer~E. Mouawad, and Naomi Nishimura.
\newblock Parameterized complexity of reconfiguration of atoms.
\newblock {\em Algorithmica}, 2024.

\bibitem{Barredo2018Synthetic3D}
Daniel Barredo, Vincent Lienhard, Sylvain de~L{\'e}s{\'e}leuc, Thierry Lahaye,
  and Antoine Browaeys.
\newblock Synthetic three-dimensional atomic structures assembled atom by atom.
\newblock {\em Nature}, 561:79--82, 2018.

\bibitem{Lee2017Hungarian}
Woojun Lee, Hyosub Kim, and Jaewook Ahn.
\newblock Defect-free atomic array formation using the {Hungarian} matching
  algorithm.
\newblock {\em Physical Review A}, 95(5):053424, 2017.

\bibitem{Tao2022DualSpeciesAlgorithm}
Zhi-Jin Tao, Li-Geng Yu, Peng Xu, Jia-Yi Hou, Xiao-Dong He, and Ming-Sheng
  Zhan.
\newblock Efficient two-dimensional defect-free dual-species atom arrays
  rearrangement algorithm with near-fewest atom moves.
\newblock {\em Chinese Physics Letters}, 39(8):083701, 2022.

\bibitem{Harle2026Atommovr}
Nikhil~K. Harle, Bo-Yu Chen, Bob Bao, and Hannes Bernien.
\newblock atommovr: An open-source simulation framework for rearrangement in
  atomic arrays.
\newblock {\em Quantum}, 10:2177, 2026.

\bibitem{Kornhauser1984PebbleMotion}
Daniel Kornhauser, Gary~L. Miller, and Paul~G. Spirakis.
\newblock Coordinating pebble motion on graphs, the diameter of permutation
  groups, and applications.
\newblock In {\em Proceedings of the 25th Annual Symposium on Foundations of
  Computer Science (FOCS)}, pages 241--250, 1984.

\bibitem{Auletta1999PebbleTrees}
Vincenzo Auletta, Angelo Monti, Mimmo Parente, and Pino Persiano.
\newblock A linear-time algorithm for the feasibility of pebble motion on
  trees.
\newblock {\em Algorithmica}, 23(3):223--245, 1999.

\bibitem{Yu2013LinearPebble}
Jingjin Yu.
\newblock A linear time algorithm for the feasibility of pebble motion on
  graphs, 2013.

\bibitem{Goraly2010MulticolorPebble}
Gilad Goraly and Refael Hassin.
\newblock Multi-color pebble motion on graphs.
\newblock {\em Algorithmica}, 58(3):610--636, 2010.

\bibitem{YuLaValle2016OptimalMPP}
Jingjin Yu and Steven~M. LaValle.
\newblock Optimal multi-robot path planning on graphs: Complete algorithms and
  effective heuristics.
\newblock {\em IEEE Transactions on Robotics}, 32(5):1163--1177, 2016.

\bibitem{Stern2019MAPFDefinitions}
Roni Stern, Nathan Sturtevant, Ariel Felner, Sven Koenig, Hang Ma, Thayne
  Walker, Jiaoyang Li, Dor Atzmon, Liron Cohen, T.~K.~Satish Kumar, Eli
  Boyarski, and Roman Bartak.
\newblock Multi-agent pathfinding: Definitions, variants, and benchmarks, 2019.

\bibitem{Ma2022GraphMAPF}
Hang Ma.
\newblock Graph-based multi-robot path finding and planning.
\newblock {\em Current Robotics Reports}, 2022.

\bibitem{Kedia2022PCMAPF}
Kushal Kedia, Rajat~Kumar Jenamani, Aritra Hazra, and Partha~Pratim
  Chakrabarti.
\newblock Optimal multi-agent path finding for precedence constrained planning
  tasks, 2022.

\bibitem{Surynek2019CapacityMAPF}
Pavel Surynek, T.~K.~Satish Kumar, and Sven Koenig.
\newblock Multi-agent path finding with capacity constraints, 2019.

\bibitem{Chong2024TAPFPTC}
Yu~Quan Chong, Jiaoyang Li, and Katia Sycara.
\newblock Optimal task assignment and path planning using conflict-based search
  with precedence and temporal constraints, 2024.

\bibitem{Yamanaka2015SwappingLabeled}
Katsuhisa Yamanaka, Erik~D. Demaine, Takashi Horiyama, Takehiro Ito, Jun
  Kawahara, Masashi Kiyomi, Yoshio Okamoto, Toshiki Saitoh, Akira Suzuki,
  Ryuhei Uehara, and Takeaki Uno.
\newblock Swapping labeled tokens on graphs.
\newblock {\em Theoretical Computer Science}, 586:81--94, 2015.

\bibitem{Miltzow2016Approximation}
Tillmann Miltzow, Lothar Narins, Yoshio Okamoto, G{\"u}nter Rote, Antonis
  Thomas, and Takeaki Uno.
\newblock Approximation and hardness of token swapping.
\newblock In {\em 24th Annual European Symposium on Algorithms (ESA 2016)},
  volume~57 of {\em Leibniz International Proceedings in Informatics (LIPIcs)},
  pages 66:1--66:15. Schloss Dagstuhl -- Leibniz-Zentrum f{\"u}r Informatik,
  2016.

\bibitem{Alon1994RoutingMatchings}
Noga Alon, Fan R.~K. Chung, and Ronald~L. Graham.
\newblock Routing permutations on graphs via matchings.
\newblock {\em {SIAM} J. Discret. Math.}, 7(3):513--530, 1994.

\bibitem{Kawahara2019PermutationRouting}
Jun Kawahara, Toshiki Saitoh, and Ryo Yoshinaka.
\newblock The time complexity of permutation routing via matching, token
  swapping and a variant.
\newblock {\em Journal of Graph Algorithms and Applications}, 23(1):29--70,
  2019.

\bibitem{MaKoenig2016TAPF}
Hang Ma and Sven Koenig.
\newblock Optimal target assignment and path finding for teams of agents.
\newblock AAMAS '16, page 1144–1152, Richland, SC, 2016. International
  Foundation for Autonomous Agents and Multiagent Systems.

\bibitem{Bonnet2017Complexity}
{\'{E}}douard Bonnet, Tillmann Miltzow, and Pawe{\l} Rzazewski.
\newblock Complexity of token swapping and its variants.
\newblock In {\em 34th Symposium on Theoretical Aspects of Computer Science
  (STACS 2017)}, volume~66 of {\em Leibniz International Proceedings in
  Informatics (LIPIcs)}, pages 16:1--16:14. Schloss Dagstuhl -- Leibniz-Zentrum
  f{\"u}r Informatik, 2017.

\bibitem{DBLP:journals/comgeo/AbellanasBHORT06}
Manuel Abellanas, Sergey Bereg, Ferran Hurtado, Alfredo~Garc{\'{\i}}a Olaverri,
  David Rappaport, and Javier Tejel.
\newblock Moving coins.
\newblock {\em Comput. Geom.}, 34(1):35--48, 2006.

\bibitem{StilmanKuffner2008}
Mike Stilman and James~J. Kuffner.
\newblock {\em Planning Among Movable Obstacles with Artificial Constraints},
  pages 119--135.
\newblock Springer Berlin Heidelberg, Berlin, Heidelberg, 2008.

\bibitem{DBLP:conf/icra/Buckley89}
Stephen~J. Buckley.
\newblock Fast motion planning for multiple moving robots.
\newblock In {\em Proceedings of the 1989 {IEEE} International Conference on
  Robotics and Automation, Scottsdale, Arizona, USA, May 14-19, 1989}, pages
  322--326. {IEEE} Computer Society, 1989.

\bibitem{Diestel}
Reinhard Diestel.
\newblock {\em Graph Theory}, volume 173 of {\em Graduate Texts in
  Mathematics}.
\newblock Springer, 4 edition, 2010.

\bibitem{Tamassia1987}
Roberto Tamassia.
\newblock On embedding a graph in the grid with the minimum number of bends.
\newblock {\em SIAM Journal on Computing}, 16(3):421--444, 1987.

\bibitem{TamassiaTollisVitter2000}
Roberto Tamassia, Ioannis~G. Tollis, and Jeffrey~Scott Vitter.
\newblock A parallel algorithm for planar orthogonal grid drawings.
\newblock {\em Parallel Process. Lett.}, 10(1):141--150, 2000.

\bibitem{cygan2015parameterized}
Marek Cygan, Fedor~V Fomin, {\L}ukasz Kowalik, Daniel Lokshtanov, D{\'a}niel
  Marx, Marcin Pilipczuk, Micha{\l} Pilipczuk, and Saket Saurabh.
\newblock {\em Parameterized algorithms}, volume~5.
\newblock Springer, 2015.

\bibitem{DBLP:conf/atal/AbrahamsenGHU23}
Mikkel Abrahamsen, Tzvika Geft, Dan Halperin, and Barak Ugav.
\newblock Coordination of multiple robots along given paths with bounded
  junction complexity.
\newblock In {\em Proceedings of the 2023 International Conference on
  Autonomous Agents and Multiagent Systems, {AAMAS} 2023}, pages 932--940.
  {ACM}, 2023.

\bibitem{deBergKhosravi2012}
Mark de~Berg and Amirali Khosravi.
\newblock Optimal binary space partitions for segments in the plane.
\newblock {\em International Journal of Computational Geometry \&
  Applications}, 22(3):187--206, 2012.

\bibitem{Marx2007GT}
D{\'a}niel Marx.
\newblock On the optimality of planar and geometric approximation schemes.
\newblock In {\em FOCS 2007}, pages 338--348, 2007.

\bibitem{DBLP:journals/dam/Tovey84}
Craig~A. Tovey.
\newblock A simplified np-complete satisfiability problem.
\newblock {\em Discret. Appl. Math.}, 8(1):85--89, 1984.

\bibitem{AbrahamsenHalperin2024Geobotics}
Mikkel Abrahamsen and Dan Halperin.
\newblock Ten problems in geobotics.
\newblock {\em CoRR}, abs/2408.12657, 2024.

\bibitem{GeftHalperin2022Refined}
Tzvika Geft and Dan Halperin.
\newblock Refined hardness of distance-optimal multi-agent path finding.
\newblock In {\em Proceedings of the 21st International Conference on
  Autonomous Agents and Multiagent Systems, {AAMAS} 2022}, pages 481--488.
  International Foundation for Autonomous Agents and Multiagent Systems
  ({IFAAMAS}), 2022.

\bibitem{DBLP:conf/sofsem/BousquetSMN26}
Nicolas Bousquet, Remy El~Sabeh, Amer~E. Mouawad, and Naomi Nishimura.
\newblock On the complexity of constrained reconfiguration and motion planning.
\newblock In Jakub Kozik and Alexander Wolff, editors, {\em {SOFSEM} 2026:
  Theory and Practice of Computer Science - 51st International Conference on
  Current Trends in Theory and Practice of Computer Science, {SOFSEM} 2026,
  Krak{\'{o}}w, Poland, February 9-13, 2026, Proceedings}, Lecture Notes in
  Computer Science, pages 1--15. Springer, 2026.

\bibitem{VanDerHorst2026Squares}
Thijs van~der Horst, Maarten L{\"o}ffler, Tim Ophelders, and Tom Peters.
\newblock Reconfiguration of squares using a constant number of moves each.
\newblock {\em CoRR}, abs/2603.05203, 2026.

\end{thebibliography}

\end{document}